 \PassOptionsToPackage{varbb}{newtx}
\documentclass[acmsmall,authorversion,dvipsnames]{acmart}

\newcommand{\arxivorappendix}[0]{appendix}

\usepackage[ruled,vlined,linesnumbered]{algorithm2e} % noend
\DontPrintSemicolon
\SetKwInOut{Input}{input}
\SetKwInOut{Output}{output}
\SetKwProg{Fn}{Function}{:}{}
\SetKw{Accept}{accept}
\SetKw{Reject}{reject}
\SetKw{ACCEPT}{accept}
\SetKw{REJECT}{reject}

\SetCommentSty{commentfont}

\usepackage{mathtools}
\usepackage{enumitem}
\usepackage{subcaption}
\usepackage{soul}
\usepackage{csquotes}
\usepackage{tikz}
\usetikzlibrary{snakes,arrows,shapes,automata,positioning}
\usetikzlibrary{calc}
\usetikzlibrary{topaths}
\usetikzlibrary{backgrounds}
\usetikzlibrary{graphs}

\usepackage{xassoccnt}
\usepackage{dashbox}
\usepackage[capitalise]{cleveref}
\usepackage{mathtools}

\tikzset{vertex/.style={minimum width=4pt,inner sep=0pt,circle,fill=black}}
\tikzset{hyperedge/.style={thick}}
\tikzset{hypergraph/.style={every label/.style={font=\scriptsize,circle,inner sep=1pt}}}
\tikzset{subtle/.style={on background layer,every path/.append style={lightgray!50!white}}}

\input{todonotes}

\AtBeginDocument{%
  }

\acmJournal{PACMMOD}
\acmYear{2026} \acmVolume{4}
\acmNumber{2 (PODS)} \acmArticle{97}
\acmMonth{5} % \acmPrice{}
\acmDOI{10.1145/3801893}
\received{December 2025}
\received[accepted]{February 2025}

\DeclareMathOperator*{\ew}{{\mathsf{ew}}}				    % Edge-cover width
\DeclareMathOperator*{\fchw}{{\mathsf{fcghw}}}			    % Free-connex generalized hypertree width

\DeclareMathOperator*{\vars}{{\mathsf{vars}}}			    % Set of variables

\newcommand*{\SP}{\ensuremath{\mathsf{\#P}}}                % #P
\newcommand*{\numad}{\ensuremath{\mathsf{\#ADORNMENTS}}}    % #ADORNMENTS problem
\newcommand*{\pptcnf}{\ensuremath{\mathsf{\#PP2CNF}}}       % #PP2CNF problem

\newcommand*{\TC}{\ensuremath{\mathsf{TC}}}                 % TC
\newcommand*{\HT}{\mathcal T}               

\newcommand*{\ar}{\ensuremath{\mathsf{ar}}}                 % arity
\newcommand*{\ear}{\ensuremath{\mathsf{ear}}}               % EDB arity
\newcommand*{\NEDBs}{\ensuremath{\mathsf{\#EDBs}}}          % Number of EDB relations
\newcommand*{\NIDBs}{\ensuremath{\mathsf{\#IDBs}}}          % Number of IDB relations

\DeclarePairedDelimiter{\set}{\lbrace}{\rbrace}

\newcommand*{\vertices}[2][]{V#1(#2#1)}

\newcommand{\var}{\vars}

\newcommand{\sem}[1]{{\llbracket{}{#1}\rrbracket}}

\newcommand*{\galgo}{\ensuremath{g_{\sf out}}}
\newcommand*{\gmcd}{\ensuremath{g_{\sf min}}}

\begin{document}

%%
%% The "title" command has an optional parameter,
%% allowing the author to define a "short title" to be used in page headers.
\title{Size Bound–Adorned Datalog}

%%
%% The "author" command and its associated commands are used to define
%% the authors and their affiliations.
%% Of note is the shared affiliation of the first two authors, and the
%% "authornote" and "authornotemark" commands
%% used to denote shared contribution to the research.

\author{Christian Fattebert}
\affiliation{%
  \institution{EPFL}
  \city{Lausanne}
  \state{Vaud}
  \country{Switzerland}
}
\email{christian.fattebert@epfl.ch}
\orcid{0009-0000-5792-2426}
\author{Zhekai Jiang}
\affiliation{%
  \institution{EPFL}
  \city{Lausanne}
  \state{Vaud}
  \country{Switzerland}
}
\email{zhekai.jiang@epfl.ch}
\orcid{0009-0001-0989-5926}
\author{Christoph Koch}
\affiliation{%
  \institution{EPFL}
  \city{Lausanne}
  \state{Vaud}
  \country{Switzerland}
}
\email{christoph.koch@epfl.ch}
\orcid{0000-0002-9130-7205}
\author{Reinhard Pichler}
\affiliation{%
  \institution{TU Wien}
  \city{Vienna}
  \country{Austria}
}
\email{pichler@dbai.tuwien.ac.at}
\orcid{0000-0002-1760-122X}
\author{Qichen Wang}
\affiliation{%
  \institution{Nanyang Technological University}
  \city{Singapore}
  \country{Singapore}
}
\email{qichen.wang@ntu.edu.sg}
\orcid{0000-0002-0959-5536}

%%
%% By default, the full list of authors will be used in the page
%% headers. Often, this list is too long, and will overlap
%% other information printed in the page headers. This command allows
%% the author to define a more concise list
%% of authors' names for this purpose.
% \renewcommand{\shortauthors}{}

%%
%% The abstract is a short summary of the work to be presented in the
%% article.
\begin{abstract}
%With a growing demand for datalog engines and ever-increasing data sizes, cost-based optimization has become essential for efficiently handling large amounts of data in such engines.
% A core requirement for CBO is the accurate estimation of intermediate relation sizes.
%While there are numerous methods for non-recursive conjunctive queries, no reliable technique exists for computing size bounds of intermediate results and reasoning about complexities of recursive queries.
We introduce EDB-bounded datalog, a framework for deriving upper bounds on intermediate result sizes and the asymptotic complexity of recursive queries in datalog.  We present an algorithm that, given an arbitrary datalog program, constructs an EDB-bounded datalog program in which every rule is adorned with a (non-recursive) conjunctive query that subsumes the result of the rule, thus acting as an upper bound. From such adornments, we define a notion of width based on (integral or fractional) edge-cover widths. Through the adornments and the width measure, we obtain, for every IDB predicate, worst-case upper bounds on their sizes, which are polynomial in the input data size, given a fixed program structure. Furthermore, with these size bounds, we also derive fixed-parameter tractable, output-sensitive asymptotic complexity bounds for evaluating the entire program. Additionally, by adapting our framework, we obtain a semi-decision procedure for datalog boundedness that efficiently rewrites most practical bounded programs into non-recursive equivalent programs.
%, offering a path towards more efficient execution.
\end{abstract}

%%
%% The code below is generated by the tool at http://dl.acm.org/ccs.cfm.
%% Please copy and paste the code instead of the example below.
%%
\begin{CCSXML}
<ccs2012>
   <concept>
       <concept_id>10003752.10010070.10010111.10011711</concept_id>
       <concept_desc>Theory of computation~Database query processing and optimization (theory)</concept_desc>
       <concept_significance>500</concept_significance>
       </concept>
 </ccs2012>
\end{CCSXML}

\ccsdesc[500]{Theory of computation~Database query processing and optimization (theory)}

%%
%% Keywords. The author(s) should pick words that accurately describe
%% the work being presented. Separate the keywords with commas.
\keywords{Datalog, Cost-based query optimization, Result size estimation}

%%
%% This command processes the author and affiliation and title
%% information and builds the first part of the formatted document.
\maketitle

\section{Introduction}
\label{sec:introduction}

% Motivation:  People need fast datalog engines that can handle large amounts of data.  Cost-based optimization is necessary, and cost-based optimization requires us to be able to estimate the sizes of intermediate relations, but so far nobody knows how to compute size bounds on recursively defined relations.  This paper aims to improve this situation.

% Datalog: chaudhuri_souffle_2016 – Soufflé,  swift_xsb_2012 – XSB
% "Datalog-like": leone_dlv_2006 – DLV,  aref_design_2015 – LogicBlox,  aref_rel_2025 – RelationalAI,  gopalakrishnan_z_2011 – the Datalog interface in Z3,  zhang_better_2023 – Egglog,  marussy_refinery_2024 – Refinery
Datalog engines~\cite{chaudhuri_souffle_2016,swift_xsb_2012,leone_dlv_2006,aref_design_2015,aref_rel_2025,gopalakrishnan_z_2011,zhang_better_2023}
are starting to show their potential in a wide range of data-intensive applications,
from static program analysis~\cite{shi_datalog-based_2025,wu_diffbase_2021,de_moor_using_2011, scholz_fast_2016,noauthor_ctadl_nodate}
to large-scale graph processing~\cite{aref_rel_2025,marussy_refinery_2024,bellomarini_vadalog_2018,seo_distributed_2013,shkapsky_graph_2013}.
With ever-growing data sizes, cost-based optimization is essential for fast, scalable execution in such engines.
However, accurate cost-based optimization relies heavily on the ability to estimate the cardinalities of intermediate results generated during query evaluation~\cite{leis_how_2015}.
In traditional join processing, there have been significant research efforts on
cardinality estimation~\cite{zhang_lpbound_2025,abo_khamis_join_2024,atserias_size_2013,gottlob_size_2012,cai_pessimistic_2019},
cost-based optimization~\cite{selinger_access_1979,moerkotte_dynamic_2008,stoian_dpconv_2024}, and
theoretical complexity bounds~\cite{yannakakis_algorithms_1981,gottlob_hypertree_2002,wang_yannakakis_2025,hu_output-optimal_2025,deep_output-sensitive_2024}.
But estimating result sizes is significantly more challenging when recursion is involved, which is an inherent feature of datalog. This is also the case for recursive SQL~\cite{international_organization_for_standardization_iso_information_1999}, which is starting to gain popularity~\cite{burzanska_recursive_2024,schule_recursive_mining_2022,schule_recursive_2022}.
In particular, there is currently no reliable method for computing size bounds on the final outputs of recursively-defined relations, except for a few very specific datalog programs~\cite{seshadri_expected_1995}.
Without such bounds, optimizers are often forced to rely on (1) heuristics, as is the case for the popular datalog engine Souffl\'e~\cite{chaudhuri_souffle_2016,villanueva_building_2022}, or (2) runtime optimization techniques~\cite{herlihy_adaptive_2024,herlihy_static_2025}.
Such methods often lead to suboptimal execution plans or higher runtime overhead for complex workloads.

The present paper aims to improve this situation by developing techniques for computing (asymptotic) bounds on the result sizes of recursive datalog programs.
To this end, we introduce a framework for adorning datalog rules (and specifically IDB atoms) with conjunctive queries over the input database, the union of which subsumes the result of the IDB predicate, thus acting as a form of upper bound. We refer to such adorned programs as \emph{EDB-bounded datalog programs}.
We give a simple illustration with a variant of the well-known transitive closure program:

\begin{example}\label{ex:tc-intro}
    Consider the datalog program $\Pi$ with the following rules:
    \begin{align*}
        R_1 : \mathsf{TC}(\bar x, \bar y) \gets 
        &\ 
        e(\bar x, \bar y),
        \\
        % \quad\quad
        R_2 : \mathsf{TC}(\bar x, \bar y) \gets 
        &\ 
        \mathsf{TC}(\bar x, \bar z), e(\bar z, \bar y),
    \end{align*}
    where $e$ is an EDB relation, each of $\bar x$, $\bar y$, and $\bar z$ is a vector of $m$ variables, and $e$ and $\TC$ hence have arity $2m$.
    % Given a set of edges in $e$, this program computes all pairs of $(x, y)$ such that $y$ is reachable from $x$.
    This program is \emph{unbounded}, as the depth of recursion depends on the size of the input.
    Therefore, there does not exist a finite non-recursive program (or, finite union of conjunctive queries~(UCQ)) that is equivalent to $\Pi$.
    However, assuming that the number of tuples in $e$ is $N$, we can in fact see that the number of tuples in the final output of $\mathsf{TC}$ is bounded by $N^2$, even though the arity $2m$ may be greater than 2.
 %   , which is achieved when the edges form a cycle.
    This is because, at maximum, we choose (1) one tuple from $e$ and assign the first $m$ attributes to $\bar x$, and (2) one tuple from $e$ and assign the last $m$ attributes to $\bar y$.
    
    In this paper, we generalize this idea by deriving size bounds based on a notion of width analogous to edge cover numbers in conjunctive queries.
    For example, following our method, the program $\Pi$ above would be rewritten to the following program $\Pi'$:
    \begin{alignat*}{3}
        R_1' : &\quad \TC_{\TC(\bar x, \bar y) \gets e(\bar x, \bar y)}(\bar x, \bar y) && \gets  e(\bar x, \bar y), \\
        R_2' : &\quad \TC_{\TC(\bar x, \bar y) \gets e(\bar x, \_), e(\_, \bar y)}(\bar x, \bar y) && \gets \TC_{\TC(\bar x, \bar y) \gets e(\bar x, \bar y)}(\bar x, \bar z),{}  & e(\bar z, \bar y), \\
        R_3' : &\quad \TC_{\TC(\bar x, \bar y) \gets e(\bar x, \_), e(\_, \bar y)}(\bar x, \bar y) && \gets \TC_{\TC(\bar x, \bar y) \gets e(\bar x, \_), e(\_, \bar y)}(\bar x, \bar z),{} & e(\bar z, \bar y).
    \end{alignat*}
    In $\Pi'$, each atom of the IDB predicate $\TC$ is adorned by a conjunctive query which shows the ``provenance'' of the tuples involved.
    More importantly, the rule in the adornment of the head atom of each rule subsumes the results that will be produced by the rule.
    Therefore, $\Pi'$ is equivalent to $\Pi$ in the sense that the tuples that will be in the final output of the IDB relation $\TC$ in the original program $\Pi$ is exactly the same as the union of the two adorned variants of $\TC$ in $\Pi'$, namely $ \TC_{\TC(\bar x, \bar y) \gets e(\bar x, \bar y)}$ and $\TC_{\TC(\bar x, \bar y) \gets e(\bar x, \_), e(\_, \bar y)}$.
    In such a case, we say the edge-cover width of $\TC$ is 2, and we can then easily tell that the size of $\TC$ is bounded by $N^2$.
    With such bounds on the output sizes, we can further derive output-sensitive, fixed-parameter tractable asymptotic complexity bounds for evaluating the entire program, which, in this case of $\TC$, is $O(f(\Pi) \cdot N^2)$, where $f$ is a function that depends only on the structure of the program $\Pi$.
\end{example}

%While this particular example is relatively simple, there is currently no generic way to perform such reasoning for arbitrary datalog programs, despite some results on a few specific ones~\cite{seshadri_expected_1995}.  Unlike previous work on datalog size bounds~\cite{seshadri_expected_1995} which considers the distribution of the input database instance, our approach relies only on the structural properties of the program.

Our approach is thus based on rewriting---or, more precisely, adorning---datalog programs with non-recursive queries that serve as upper bounds on the IDB relations defined by the program. While a datalog engine might use such bounding queries for cardinality estimation, in this paper, we focus on proving asymptotic bounds based on structural properties of the datalog programs---specifically, integral and fractional edge-cover width.

As a central result, we
show that our program rewriting technique can be used to find the minimal integral edge-cover width (which immediately yields an asymptotic bound on output sizes) across all datalog programs equivalent to a given input program. In other words, determining the worst-case sizes of the IDB relations of a datalog program, or any datalog program equivalent to it, is decidable.
This may appear somewhat surprising, as semantic properties such as (program) boundedness or equivalence of recursive datalog programs are generally undecidable~\cite{DBLP:journals/jlp/HillebrandKMV95,DBLP:journals/jlp/Shmueli93}.

%This is achieved without having to explicitly consider these equivalents.

We demonstrate the tightness of integral edge-cover widths and our size bounds by constructing database instances where they are reached.

We show that output sizes of IDB relations of a program $\Pi$ have worst-case upper bounds of the form $f(\Pi) \cdot N^{w}$, which is polynomial in the input data size $N$ for a fixed structure of $\Pi$.

Based on our size bounding machinery, we obtain corresponding output-sensitive, fixed-parameter tractable complexity bounds for evaluating entire programs. These results are based on insights into incrementally evaluating our rewritten programs.  We present results on general datalog as well as tighter bounds on a number of datalog fragments---linear, simple chain, and adornment-groundable programs (a newly defined class).

Our adornment rewriting algorithm is fundamentally a bottom-up fixpoint procedure that is guaranteed to terminate given suitable methods for comparing and relaxing (simplifying) adornments. A variation of this procedure unrolls recursive datalog programs into (possibly infinite) UCQs, and terminates if the program is bounded---a semi-decision procedure for datalog boundedness.  We show in this paper how our adornment algorithm can be instrumented to search for non-recursive rewritings (for bounded programs) with a memory budget, gracefully degrading to our EDB-bounded datalog programs (as in \cref{ex:tc-intro}) when the input program is not bounded or cannot be proven to be within the memory budget.

\smallskip

The structure of the paper is as follows:
\begin{itemize}[leftmargin=*]
    \item We start by laying out the necessary background in \cref{sec:preliminaries}.
    \item In \cref{sec:edbAdornedPrograms}, we present our technique and algorithm for adorning IDB atoms in a datalog program with conjunctive queries, and show its correctness (conservativity). 
    We present this approach as a generic framework that may be adapted for different use cases.

    \item In \cref{sec:boundedness}, we revisit the problem of datalog boundedness. We show its semi-decidability, and give one instantiation of our framework to either (1) rewrite most practical bounded programs into equivalent UCQs by inlining, or (2) give an equivalent EDB-bounded program that is recursive but still allows for reasoning about size bounds and complexity.
    \item In Section~\ref{sec:UpperBoundIdbs}, we obtain size bounds on IDB relations via a notion of edge-cover widths based on the adornments in EDB-bounded programs.  \item We discuss the optimization and minimization of our program rewritings in Section~\ref{sec:minimize-edb-bounded}.
    
    \item \cref{sec:complexity-bounds} presents our fixed-parameter tractable complexity bounds for general datalog as well as tighter results for specific fragments of datalog.
\end{itemize}

Due to space constraints, all omitted proofs in the rest of the paper can be found in the \arxivorappendix.

\section{Preliminaries}
\label{sec:preliminaries}

\paragraph{Datalog}
In this paper, we consider standard positive datalog programs and adopt the standard naming conventions.
A datalog program $\Pi$ consists of a set of \emph{rules} of the form
$$ q(\bar s) \gets p_1(\bar s_1), \dots, p_\mu(\bar s_\mu), \quad \text{or, equivalently,} \quad q(\bar s) \gets \bigwedge_{i = 1}^{\mu} p_i(\bar s_i), $$
where $q(\bar s)$ is called the \emph{head atom} and all $p_i(\bar s_i)$ are \emph{body atoms}.
% For convenience, we may alternatively write the bodies of rules as conjunctions in the following form:
% $$ q(\bar s) \gets \bigwedge_{i = 1}^{\mu} p_i(\bar s_i). $$
Each atom $p(\bar t)$ consists of a predicate symbol $p$ and a tuple of terms $\bar t = (t_1, \dots, t_\alpha)$.
We use the same set of symbols, such as $p, q, \dots$, to denote both a predicate symbol and the relation corresponding to that predicate.
A predicate $p$ belongs to the intensional database (IDB) if it is in the head of a rule with a non-empty body. Otherwise, it belongs to the extensional database (EDB).
Without loss of generality, we assume that each rule is ordered so that the first $\lambda$ atoms (for some $0 \leq \lambda \leq \mu$) have IDB predicates and the rest have EDB predicates.
Within a tuple $\bar t$ of an atom $p(\bar t)$, each term $t_i$ is either a variable or a constant. We write $\var (\bar t)$ to denote the set of variables occurring in $\bar t$.
% By slight abuse of notation, we will sometimes treat such a tuple $\bar t$ as a set by writing e.g., $x \in \bar t$, meaning that  $x$ is one of the components of the tuple $\bar t$.
The wildcard ``\_'' is employed to denote a freshly introduced, existentially quantified variable.
The number of terms $\alpha$ is called the \emph{arity} of the predicate $p$, generally denoted by $\ar(p)$.
A datalog rule is defined as \emph{safe} if every variable present in the head also appears at least once in the body.
The notations $\NEDBs$ and $\NIDBs$ represent the total number of EDB predicates and IDB predicates, respectively, within a datalog program. 
Given a datalog program $\Pi$, its number of rules is denoted by $|\Pi|$.

For an IDB predicate $q$ in $\Pi$ and a database instance $D$, we write $\sem{q(\Pi,D)}$ to denote the 
IDB relation $q$ that can be derived with the program $\Pi$ from the database instance $D$.
That is, $\sem{q(\Pi,D)}$ contains all tuples $\bar s$, such that the fact $q(\bar s)$ 
is contained in the least fixpoint of the immediate consequence operator defined by 
the program $\Pi$, when starting with the EDB $D$.

We say a datalog rule $R_0$ is \emph{subsumed} by a rule $R_1$, if there is a homomorphism $h$ from $R_1$ to $R_0$, such that $h$ restricted to the head variables is a variable renaming. That is, all the tuples produced by rule $R_0$ are also produced by rule $R_1$.

We use \emph{inlining} to refer to the process of replacing an IDB atom $q(\bar t)$ in some rule $R$ with the body of a rule $R': q(\bar s) \gets B_1, \dots, B_n$ that has $q$ as the head predicate, under a substitution $\sigma$ such that $\sigma(\bar s) = \bar t$, which yields instantiated body atoms $\sigma(B_1),\dots \sigma(B_n)$, 
where $\sigma$ must not rename any variable that is not in the head $\bar s$ to the name of some variable in the original rule $R$.
For instance, in the following program
\begin{alignat*}{3}
    & R_1 : &\ p_0(x, y) \gets &\ e(x, z), e(z, y),\\
    & R_2 : &\ q(x, y) \gets &\  p_0(x, y), p_1(z),
\end{alignat*}
we can obtain $q(x, y)\gets e(x, z_0), e(z_0, y), p_1(z)$ by one step of inlining. Note that, while doing so, $z$ in $R_1$ is renamed to $z_0$ to avoid collision with the variable $z$ in the original rule $R_2$.

\paragraph{Unification}
A \emph{unifier} of two tuples $\bar s$ and $\bar t$ of terms is a substitution $\sigma$ such that, when applied to the tuples, $\sigma$ makes them syntactically equal, i.e., $\sigma(\bar s) = \sigma(\bar t)$.
When such a unifier exists, we say that the two tuples unify.
Given a set of pairs of tuples $\{ (\bar s_i, \bar t_i) \mid 1 \leq i \leq \mu \}$, a \emph{simultaneous unifier} is a unifier such that $\sigma(\bar s_i) = \sigma(\bar t_i)$ for all $i$.
A unifier is called a \emph{most general unifier} (mgu) if every other unifier can be obtained from it by further instantiating variables.
mgus can be computed in time linear in the size of the input set~\cite{DBLP:journals/toplas/MartelliM82}.
For instance, $\sigma=\{x\mapsto x, y\mapsto y,z\mapsto z, w\mapsto z\}$ is a most general unifier for the tuples $(x, y, z, z)$ and $(x,y,z,w)$.

\paragraph{Proof trees}
Given a datalog program $\Pi$ and an EDB $D$, a \emph{proof tree} for a ground atom $A$ in the IDB is a finite, rooted tree whose root is labeled by $A$, whose leaves are EDB atoms in $D$, and whose edges correspond to rule applications. More specifically, 
if a node is labeled by an atom $A$ and its children are labeled $B_1, \dots, B_\mu$, then 
the edge from $A$ to each $B_i$ is labeled by a rule $R$ of the form 
$A' \gets B'_1, \dots, B'_\mu$, such that $A = \sigma(A')$  and 
$B_i = \sigma(B'_i)$ for some \emph{ground substitution} $\sigma$, i.e., a substitution that replaces every variable with a constant.

\paragraph{Derived atoms}
Given a rule $R$ with head predicate $q$ in a datalog program $\Pi$, and given an EDB $D$, 
we say that a ground atom $q(\bar t)$ is derived by rule $R$ over $D$ if there exists some proof tree whose root is labeled by $q(\bar t)$ and such that the edges from the root are labeled by the rule $R$.

%%%%%%%%%%%%%%%%%%%%%%%%%%%%%%%%%%%%%%%%%%%%%%%%%%%%%%%%%%%%%%%%%%%%

\section{EDB-Adorned and EDB-Bounded Datalog Programs}
\label{sec:edbAdornedPrograms}

In this section, we begin to introduce our framework to provide bounds for datalog programs via adornments added to IDB atoms. 
These adornments themselves are datalog rules with EDB atoms only, and they subsume the adorned IDB predicate.
We formally define the related notations and give a generic algorithm to construct such adornments.
The algorithm is parameterized and customizable for different use cases, as we will show in the following sections.
% In particular, we first show that our framework can be thought of as inlining or unrolling of datalog program, which leads to a procedure to show semi-decidability of the boundedness of datalog programs.
% By adjusting the algorithm, we construct an equivalent datalog program that helps to reason about the sizes of IDB predicates and the complexity of evaluation, as we will elaborate in later sections.

\subsection{Basic Definitions}

We start from the most basic building blocks of our framework.

\begin{definition}[EDB-adorned atom]
\label{def:adorned-atom}
An {\em EDB-adorned atom} is an atom of the form $q_\rho (\bar t)$, such that 
$q$ is an IDB predicate of the given schema of arity $\alpha$, 
$\bar t = (t_1, \dots, t_\alpha)$ is a tuple of the same arity, such that 
each $t_i$ is either a variable or a constant, 
and the \emph{adornment} $\rho$ is a safe datalog rule $A \leftarrow B_1, \dots, B_\mu$
satisfying the following properties: 
\begin{enumerate}[leftmargin=*]
    \item $A$ is an atom with predicate symbol $q$, 
    \item  every body atom $B_i$ has as an EDB predicate and each 
    argument in $B_i$ is either a variable or 
    the underscore ``$\_$''.
    \item $q(\bar t)$ is an instance of $A$---that is, there exists a substitution $\sigma$ such that $\sigma(A)=q(\bar t)$.
\end{enumerate}  
\end{definition}

\begin{definition}[EDB-adorned datalog rule]
\label{def:adorned-rule}
    An \emph{EDB-adorned datalog rule} $R$ is a datalog rule 
    in which the head and every body atom is either an EDB atom or an EDB-adorned IDB atom.
A datalog program with only EDB-adorned rules is called an 
{\em EDB-adorned datalog program}. 
\end{definition}

The {\em intended meaning} of the  adornment $\rho$ of an EDB-adorned IDB atom 
$q_\rho(\bar t)$ is to 
indicate 
% which EDB relations can possibly  provide values for the variables in $\bar t$. Intuitively, $\rho$ tells us, 
how instances of the 
atom $q_\rho (\bar t)$ could be directly derived from EDB atoms via $\rho$.
The rule $\rho$ thus corresponds to iterated inlining, and
the EDB atoms in the rule body of an adornment
constitute restrictions on the tuples that can possibly be derived 
by the adorned IDB predicate.  
Of course, such a restriction is useless in the case of an EDB atom.
We thus only consider adornments for IDB predicates.
We formalize this idea by the following definitions of EDB-bounded rules and programs.

\begin{definition}[EDB-bounded rule]
    Given an EDB-adorned datalog program $\Pi$, let $R$ be a rule in $\Pi$ with head predicate $q_\rho$.
    We say that rule $R$ {\em is bounded}
    by the adornment $\rho$, if for every EDB $D$, the following property holds: 
    If an atom $q_\rho(\bar t)$ is derived by rule $R$ when evaluating $\Pi$ over $D$, 
    then
    $q(\bar t)$ is also derived when evaluating the single-rule program consisting of the rule $\rho$
    over $D$.
\end{definition}

\begin{definition}[EDB-bounded program]
    Given an EDB-adorned datalog program $\Pi$, we call $\Pi$ an {\em EDB-bounded} program if every rule of $\Pi$ is bounded by the adornment of its head predicate.
\end{definition}

To give an intuition, we illustrate again with the transitive closure program $\Pi$ in \cref{ex:tc-intro}.

\begin{example}
\label{ex:TC-def}
    Consider the transitive closure program $\Pi$ in \cref{ex:tc-intro} again.
    For simplicity, from this point, we assume $\bar x$, $\bar y$, and $\bar z$ each contain one term only and hence drop the bars.
    The following datalog program $\Pi''$ is a valid EDB-bounded program:
    \begin{alignat*}{3}
        R_1'' : &\quad \TC_{\TC(x, y) \gets e(x, y)}(x, y) && \gets e(x, y), \\
        R_2'' : &\quad \TC_{\TC(x, y) \gets e(x, z), e(z, y)}(x, y) && \gets \TC_{\TC(x, y) \gets e(x, y)}(x, z), e(z, y).
    \end{alignat*}
    One can easily verify that the adornment of $R_1''$, $\TC(x, y) \gets e(x, y)$, subsumes the rule $R_1''$---in fact, they are equivalent.
    The predicate $\TC_{\TC(x, y) \gets e(x, y)}$ contains only tuples that are produced by $R_1''$, i.e., the pairs $(x, y)$ that are connected via some edge in $e$.
    Similarly, the adornment of $R_2''$ subsumes, and is equivalent to, the rule $R_2''$ as well, if we replace the atom $\TC_{\TC(x, y) \gets e(x, y)}(x, z)$ by $e(x, y)$.
    Notice that the adornments in this case simply correspond to inlinings of the original program. We could do one step further and add the following rule
    $$R_3'' : \TC_{\TC(x, y) \gets e(x, z'), e(z', z), e(z, y)}(x, y) \gets \TC_{\TC(x, y) \gets e(x, z), e(z, y)}(x, z), e(z, y).$$
    Again, this is a valid EDB-bounded rule.
    However, these adorned versions of the predicate $\TC$ in $\Pi''$ only produce pairs of $x$ and $y$ with distance at most 3, but the original program $\Pi$ is unbounded.
    Hence, $\Pi''$ is not equivalent to $\Pi$ (unless we iterate this inlining-style rewriting of rules infinitely). 
    However, in the next section, we will introduce means to get meaningful, finite adorned programs that allow us to bound the 
    IDB predicates of the original program. 
\end{example}

\subsection{Construction of EDB-Bounded Datalog Programs}

In \cref{ex:TC-def}, we have seen an intuition of EDB adornments by viewing them as possible inlinings of the original program and how this may become infinite if we want to bound all tuples produced by the original program. 
It is easy to verify that program $\Pi'$ in \cref{ex:tc-intro} is also an 
EDB-bounded program. In contrast to program $\Pi''$, the adornments of the predicate $\TC$ in $\Pi'$ 
actually do allow us to bound the IDB relation $\TC$ of the original program $\Pi$: Intuitively, the
adornments of $\TC$ in $\Pi'$ capture the observation that a tuple $(x, y)$ generated by $\TC$ in program $\Pi$ 
could either be (1) two endpoints of the same edge, hence $e(x, y)$, or (2) the source of one edge and target of another, hence $e(x, \_), e(\_, y)$. 
Alternatively, we can consider these adornments as taking all possible inlinings, but ``relaxing'' the rules 
by replacing variables with ``$\_$'' or dropping body atoms. 
We generalize this idea in the following definition of \emph{adornment relaxation functions}.

\begin{definition}[adornment relaxation function]
\label{def:relax}
    An \emph{adornment relaxation function} $g$ is one that takes as input an adornment
    $ \rho : p(\bar t) \gets e_1(\bar t_1), \dots, e_\mu(\bar t_\mu), $
    where all body atoms $e_1(\bar t_1), \dots, e_\mu(\bar t_\mu)$ are EDB atoms,
    and performs a sequence of operations, each of which either removes a body atom or replaces an attribute in a body atom by ``$\_$'', such that $g(\rho)$ remains a safe datalog rule.
\end{definition}

Clearly, if a rule $g(\rho)$ results from ``relaxing'' a rule $\rho$, then 
$g(\rho)$ subsumes $\rho$:

\begin{proposition}
    For any adornment relaxation function $g$ and safe datalog rule $\rho$ with only EDB atoms in the body, $g(\rho)$ subsumes $\rho$.
\end{proposition}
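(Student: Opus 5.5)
The plan is to use the definition of subsumption from \cref{sec:preliminaries} directly: we exhibit a homomorphism $h$ from $g(\rho)$ to $\rho$ whose restriction to the head variables is a variable renaming. We prove this by induction on the number of relaxation operations performed by $g$. By \cref{def:relax}, $g$ is a finite sequence of elementary steps, each either deleting a body atom or replacing one argument position of a body atom by ``$\_$''. Because subsumption is transitive (homomorphisms compose, and a composition of head renamings is again a head renaming), it suffices to show that each single step yields a rule $\rho'$ that subsumes its predecessor $\rho$. The base case, zero steps, is handled by the identity homomorphism.

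\textbf{Step 1: dropping a body atom.} Suppose $\rho'$ is obtained from $\rho$ by removing $e_j(\bar t_j)$. Take $h$ to be the identity on $\vars(\rho')$. The head is unchanged, so $h$ maps it to itself. Every body atom of $\rho'$ is also a body atom of $\rho$. Hence $h$ is a homomorphism that is the identity, and in particular a renaming, on the head variables.

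\textbf{Step 2: replacing one argument by ``$\_$''.} Suppose $\rho'$ replaces the term at position $k$ of $e_j(\bar t_j)$ by ``$\_$''. That wildcard stands for a fresh existential variable $u$ occurring nowhere else in $\rho'$. Define $h$ to be the identity on all other variables and set $h(u)$ to the original term at position $k$ of $\bar t_j$, which is a variable or constant. Then $h(e_j(\bar t'_j)) = e_j(\bar t_j)$, and every other atom, including the head, is mapped to itself. Since $u$ is fresh and cannot occur in the head, $h$ is the identity on the head variables. Safety of $g(\rho)$ is guaranteed by \cref{def:relax}, so $\rho'$ is a well-formed rule and the notion of subsumption applies.

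\textbf{Main obstacle and semantic reading.} The only delicate point is bookkeeping. Each underscore must denote a distinct fresh variable, so that it never coincides with a head variable or with another occurrence; the convention for ``$\_$'' in \cref{sec:preliminaries} provides exactly this. Given that, the two cases are routine. Finally, we recover the semantic reading stated before the proposition: every tuple produced by $\rho$ is produced by $g(\rho)$. If a ground substitution $\theta$ satisfies the body of $\rho$ over $D$, then $\theta \circ h$ satisfies the body of $g(\rho)$ and yields the same head tuple.
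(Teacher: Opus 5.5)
Your proof is correct and matches the paper's reasoning. The paper only asserts the proposition as evident; in the boundedness part of the proof of \cref{thm:equivalent} it argues semantically that deleting body atoms or weakening arguments to ``$\_$'' can only make a rule fire more often, whereas you make this precise by constructing the homomorphism from $g(\rho)$ to $\rho$ that is the identity on head variables, which is exactly what the paper's definition of subsumption requires.
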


\begin{example}
\label{ex:relax-fns}
We show a few examples of adornment relaxation functions:
\begin{itemize}[leftmargin=*]
    \item $\mathit{id}$: The identity function, which does not modify the rule at all.
    \item $\galgo$:
    For each body atom $e(\bar t)$ and for each attribute $t_i$ in $\bar t$, 
    we replace $t_i$ by $\_$ if $t_i$ does not occur in the head of the rule (i.e., if $t_i$ is not in the ``output'').
    Finally, we remove all body atoms $e(\bar t)$
    that impose less restrictions than some other body atom $e(\bar t')$. That is, 
    for each $i = 1, \dots, \ar(e)$, either $t_i = t'_i$ or $t_i = \_$.
    Note that, in particular, this removes all body atoms $e(\bar t)$ that have only ``$\_$''s remaining in $\bar t$. \cref{ex:propagation-falgo} illustrates the application of this function, and this function will be used to derive edge-cover widths and size bounds in Sections \ref{sec:UpperBoundIdbs} and \ref{sec:complexity-bounds}.
    % DONT REMOVE g_k and g_min as they are used for boundedness and minimization
    \item $g_{k}$: A function that acts as the identity function when the input rule contains fewer than $k$ atoms and applies $\galgo$ otherwise. This function will be used in \cref{sec:boundedness} to try to rewrite a bounded program into a non-recursive equivalent version with some storage budget.
    \item $\gmcd$:
    This function retains only a minimal subset of body atoms from the original rule such that all head variables are contained within the set. Then, similar to $\galgo$, it replaces all variables that are not in the head by ``$\_$''s. Furthermore, any duplicate variable occurrences in the body are replaced by ``$\_$'', so that each head variable appears only once in the resulting rule. This function will be used in \cref{sec:minimize-edb-bounded} to reduce the number of rules in the transformed program.
\end{itemize}
\end{example}

Intuitively, different adornment relaxation functions have different levels of relaxation.
Moreover, in combination with the inlining-style adornments in Example~\ref{ex:TC-def},
this notion of adornment relaxation function suggests a way in which adornments can be propagated algorithmically.

\begin{example}
    \label{ex:propagation-falgo}
Consider the transitive closure program $\Pi$ from Example~\ref{ex:tc-intro} and suppose that we generate adorned rules as in Example~\ref{ex:TC-def}.
But now we apply 
the adornment relaxation function $\galgo$ to every rule thus generated.
We initialize $\Pi' = \emptyset$.
The first rule $R_1$ in $\Pi$ gives rise to the following EDB-adorned rule which we add to $\Pi'$:
    \[ R_1' : \mathsf{TC}_{\mathsf{TC}(x, y) \gets e(x, y)}(x, y) \gets e(x, y). \]

Next, consider rule $R_2$ in $\Pi$,  but replace the IDB atom in the body of $R_2$ by the adorned head atom of the rule $R'_1 \in \Pi'$. Moreover, to get the adorned rule $R'_2$,  we apply $\galgo$ to the adornment $\mathsf{TC}(x, y) \gets e(x, z), e(z, y)$, replacing $z$ (which is not a head variable) by $\_$.  We thus obtain the rule 
\[ R_2' : \mathsf{TC}_{\mathsf{TC}(x, y) \gets e(x, \_), e(\_, y)}(x, y) \gets \mathsf{TC}_{\mathsf{TC}(x, y) \gets e(x, y)}(x, z), e(z, y), \]
which we add to~$\Pi'$.

We can now  add the adornment 
$\mathsf{TC}(x, y) \gets e(x, \_), e(\_, y)$ to the IDB predicate in the 
rule body of $R_2$. That is, we assume that, when executing program $\Pi'$ on a database $D$,
the body atom will be generated by firing rule $R'_2$. Hence, in the first place, we have to 
rename the variables in~$R'_2$ apart from the variables in $R_2$. However, we then need to unify the head of $R'_2$ with the body atom in $R_2$. 
Then, inlining results in an adornment of the form $ \mathsf{TC}(x, y) \gets e(x, \_), e(\_, u), e(u, y) $, 
which we transform into $ \mathsf{TC}(x, y) \gets e(x, \_), e(\_, \_), e(\_, y)$ and further into $ \mathsf{TC}(x, y) \gets e(x, \_), e(\_, y)$ by applying $\galgo$. 
So we add to $\Pi'$ the rule 
   \[ R'_3 : \mathsf{TC}_{\mathsf{TC}(x, y) \gets e(x, \_),  e(\_, y)}(x, y) \gets \mathsf{TC}_{\mathsf{TC}(x, y) \gets e(x, \_), e(\_, y)}(x, z), e(z, y). \]

In principle, we could now try to 
use the newly created rule $R'_3$ to generate yet another adorned version of $R_2$ by unifying the head of 
$R'_3$ with the IDB body atom of $R_2$. It is easy to verify that the same process as described above would again yield 
an adorned rule that is exactly the same as $R_3'$.
Hence, the process of generating new EDB-adorned rules has reached a fixpoint, and we can take $\Pi' = \{R'_1,R'_2,R'_3\}$, which was already shown in Example~\ref{ex:tc-intro},
as our final EDB-bounded program.
\end{example}

As was illustrated in Example~\ref{ex:propagation-falgo}, the use of an adornment relaxation function allows us to delete redundant body atoms from an 
adornment. However, in the context of an adorned program, an entire rule may be redundant in that its deletion does not alter the 
semantics of the adorned program. To detect certain forms of redundancy of an adorned rule for potentially different use cases, we 
introduce ``membership checking functions'' next:

\begin{definition}[membership checking function]
\label{def:membership-checking}
    A \emph{membership checking function} is a function $h$ which takes as input an EDB-bounded rule $R$ with head predicate $q_\rho$, 
    and a datalog program $\Pi$, and outputs either true or false. In this work, we consider two membership checking functions:
    \begin{itemize}[leftmargin=*]
        \item $h_{\sf eq}$: A function that returns true if there exists some $R' \in \Pi$ that is identical to $R$, up to variable renaming, and false otherwise. Except for the discussion in \cref{sec:boundedness}, this function is used throughout the remainder of the paper for deriving size bounds.
        \item $h_{\sf cont}$: A function that returns true if
        either (1) $q_{\rho}$ is not recursive in $\Pi$ and there exists a predicate $q_{\rho'}$ such that $\rho'$ subsumes $\rho$,
        or (2) there exists some $R' \in \Pi$ that is identical to $R$ up to variable renaming.
        Otherwise, it returns false. This function will be used in \cref{sec:boundedness} in the semi-decision procedure for boundedness.
    \end{itemize}
\end{definition}

\subsubsection{Algorithm to Construct EDB-Bounded Programs}

We now present \cref{alg:adornment}, which constructs an EDB-bounded datalog program based on a given program using a given adornment relaxation function $g$ and membership checking function $h$.
The idea of the algorithm aligns with the intuition given in our previous example.
Namely, it iteratively tries, until a fixpoint is reached, to turn the rules in the original program $\Pi$ into adorned rules in $\Pi'$ by replacing each IDB body atom of the original rule by an adorned atom already in $\Pi'$, and then generating the adornment for the head atom using the adornment relaxation function $g$. The adorned rule is then added to $\Pi'$ subject to the membership checking function $h$.

\begin{algorithm}
\caption{Construction of an EDB-bounded datalog program from a given program}\label{alg:adornment}
\SetKwInOut{Input}{input}
\SetKwInOut{Parameters}{parameters}
\SetKwInOut{Output}{output}
\SetKwInOut{Known}{known}
\Input{A datalog program $\Pi$}
\Parameters{An adornment relaxation function $g$, a membership checking function $h$}
\Output{An EDB-bounded datalog program $\Pi'$}
Initialize $\Pi'$ as an empty datalog program \;
\Repeat{fixpoint for $\Pi'$}{
    \ForEach{rule $R$ in $\Pi$}{
        W.l.o.g. assume that $R$ has the form $q(\bar{s}) \gets (\bigwedge_{i = 1}^{\lambda} p_i(\bar s) ) \wedge \mathcal{E}$, where $p_1(\bar s_1), \dots, p_\lambda(\bar s_\lambda)$ are IDB atoms, and $\mathcal{E}$ is a conjunction of EDB atoms \;
        \ForEach{$(\rho_1, \dots, \rho_{\lambda})$, such that for all $i = 1, \dots, \lambda$, there exists a rule $R_i$ in $\Pi'$ with head atom ${p_i}_{\rho_i}(\bar{v}_i)$ with adornment $\rho_i$}{
            Apply a variable renaming to all head atoms ${p_i}_{\rho_i}(\bar{v}_i)$ to obtain ${p_i}_{\rho_i'}(\bar{v}_i')$ so that 
            $R$ and all head atoms in all $R_i$ are variable-disjoint \;
            Let $\rho_i'$ be of the form $p_i(\bar v_i') \gets \mathcal{B}_i$ \;
            $\sigma \gets \textrm{some mgu} \{ ( \bar{s}_i, \bar{v}_i' ) \mid 1 \leq i \leq \lambda \}$ \;
            \If{no $\sigma$ exists}{
                Continue \;
            }
            $\rho_0 :=$ the rule $q(\sigma(\bar s)) \gets (\bigwedge_{i = 1}^{\lambda} \sigma(\mathcal{B}_i)) \wedge \sigma(\mathcal{E})$ \;
            $\rho := g(\rho_0)$ \;
            $R' := \text{the rule } q_{\rho}(\sigma(\bar{s})) \gets (\bigwedge_{i = 1}^{\lambda} {p_i}_{\rho_i}(\sigma(\bar{s}_i))) \wedge \sigma(\mathcal{E})$ \;
            \If{$h(R', \Pi')$ is false}{
                Add $R'$ to $\Pi'$ \;
            }
        }
    }
}

\Return {$\Pi'$}

\end{algorithm}

\label{sec:app-alg-description}

We give a detailed explanation of the algorithm using the adornment relaxation function $\galgo$ and membership checking function $h_{\sf eq}$. We examine the behavior of the algorithm by distinguishing rules (in $\Pi$) with only EDB body atoms and rules containing at least one IDB atom in the body.

\paragraph{Base case}
Let $R\colon q(\bar s) \leftarrow p_1(\bar s_1), \dots, p_\mu(\bar s_\mu)$
be a rule in $\Pi$ with only EDB atoms in the body. 
Then we construct a rule
$R'\colon q_\rho(\bar s) \leftarrow p_1(\bar s_1), \dots, p_\mu(\bar s_\mu)$
to be added to $\Pi'$.
To obtain the adornment $\rho$, let $\rho_0$ be a rule of the form
$\rho_0 \colon q(\bar s) \leftarrow p_1(\bar t_1), \dots, p_\mu(\bar t_\mu)$
with $t_{ij} = s_{ij}$ if $s_{ij}$ is a head variable in $R$ (i.e., 
$ t_{ij} \in \var(\bar s)$) 
and $t_{ij} = \_$  otherwise.
We then obtain $\rho$ from $\rho_0$ by applying the adornment relaxation function $\galgo$. 

\paragraph{Adornment propagation}
Let $R\colon q(\bar s) \leftarrow p_1(\bar s_1), \dots, p_\mu(\bar s_\mu)$
be a rule in $\Pi$, where at least one $p_i$ is an IDB predicate. 
Then, for every $i$, such that $p_i$ is an IDB predicate, choose one
    rule $R_i$ in $\Pi'$
    with head atom ${p_i}_{\rho_i}(\bar v_i)$ for some adornment $\rho_i$.
If, for some $i$, no such rule exists yet, then no new EDB-adorned datalog rule 
    can be created from rule $R$ at this point.  
Otherwise, let us assume w.l.o.g. that $R$ has the IDB body atoms 
$p_1(\bar s_1), \dots, p_\lambda(\bar s_\lambda)$ for some $\lambda \leq \mu$ 
and that the remaining body atoms are EDB atoms.

Now apply to all head atoms ${p_i}_{\rho_i}(\bar v_i)$  of the 
rules $R_1, \dots, R_\lambda$ a variable renaming 
to make sure that rules $R$ and the head  atoms 
of the rules $R_1, \dots, R_\lambda$ 
are variable-disjoint. Note that, for each atom ${p_i}_{\rho_i}(\bar v_i)$, 
we now apply the same  renaming of the variables $\bar v_i$ 
also  to  the adornment $\rho_i$. Hence, 
by this variable 
renaming,  the new head 
atoms of the rules $R_1, \dots, R_\lambda$
are now  
${p_i}_{\rho'_i}(\bar v'_i)$.

Next, we compute the 
simultaneous, most general unifier $\sigma$ 
of $\{ (\bar s_1, \bar v'_1), \dots, (\bar s_\lambda, \bar v'_\lambda) \}$.
If no such simultaneous unifier exists, then 
no new EDB-adorned datalog rule 
can be created from rule $R$ with this choice of EDB-adorned rules 
$R_1, \dots, R_\lambda$.
Otherwise, we add to $\Pi'$ an adorned rule 
$$R'\colon q_\rho(\sigma(\bar s)) \leftarrow 
{p_1}_{\rho_1} (\sigma(\bar s_1)), \dots, {p_\lambda}_{\rho_\lambda} (\sigma(\bar s_\lambda)),
p_{\lambda+1}( \sigma(\bar s_{\lambda+1})), \dots, p_\mu(\sigma(\bar s_\mu)),
$$
where 
we first define the rule $\rho_0$ and then obtain 
$\rho$ by applying on $\rho_0$ 
the adornment relaxation function $\galgo$.
The rule $\rho_0$ is constructed as follows: 

\begin{itemize}[leftmargin=*]
    \item The head of $\rho_0$ is $q(\sigma(\bar s))$, i.e., the same as the 
    unadorned head of the new
    rule $R'$;
    \item for every EDB atom $p_i(\bar s_i)$ in the body of $R$, we add the atom 
    $p_i(\bar t_i)$ to the body of $\rho_0$
with $t_{ij} = \sigma(s_{ij})$ if 
$t_{ij} \in \var(\sigma(\bar s))$
(i.e., $\sigma(s_{ij})$ is a head variable of $R'$),  and 
$t_{ij} = \_$  otherwise;
    \item for every  $i \in \{1, \dots, \lambda\}$ and
    for every atom $e(\bar a)$ in the body of $\rho'_i$, 
    we add to the body of $\rho_0$ the atom $e(\bar b)$ with $b_j = \sigma (a_j)$ 
    if $\sigma(a_j) \in \var(\sigma(\bar s))$
     (i.e., $\sigma(a_j)$ is a  head variable of $R'$), 
     and  $b_j = \_$ otherwise.
\end{itemize}

% \smallskip
% \noindent{\em membership checking function.}
\paragraph{Membership checking}
Suppose that a new adorned rule is generated as explained above. Then our algorithm only adds this new rule to $\Pi'$ if 
the membership checking function $h_{\sf eq}$ returns false---i.e., the new rule is not just a renaming of an already existing rule.

\medskip

\noindent
Let us comment on two particular aspects of the construction of $\Pi'$ 
in  Algorithm~\ref{alg:adornment}: 
\begin{enumerate}[leftmargin=*]
    \item 
It should be noted that, as adornment of an IDB body atom ${p_i}_{\rho_i} (\sigma(\bar s_i))$, 
we take the original adornment $\rho_i$ of the head atom of rule $R_i$. In contrast, when
transferring body atoms from the adornment of the head atom of rule $R_i$ to the body
of the adornment $\rho_0$, we first take the adornment $\rho'_i$ resulting from the variable
renaming and then also apply the mgu $\sigma$  to these body atoms. 
We then apply $\galgo$ to make sure that 
all variables in an EDB atom in the adornment $\rho_0$ also occur in the head of rule $R'$ 
(after applying the 
mgu $\sigma$ to the rule $R$) and that redundant EDB atoms are removed from the rule body of $\rho_0$. 
\item 
Note that we have explicitly mentioned the base case here for the sake of readability. 
Clearly, the case of a rule with only EDB atoms in the 
body is a special case of the adornment propagation where $\lambda = 0$. 
This is how rules with only EDB atoms in the body are handled in Algorithm~\ref{alg:adornment}.
\end{enumerate}

\subsubsection{Requirement of Simultaneous Unifiers in \cref{alg:adornment}}
\label{sec:app-sim-unifier}

Note that, in \cref{alg:adornment}, we have to compute
a simultaneous unifier between each IDB body atom of a rule $R$ and the rule heads of the adorned rules
$R_1, \dots, R_\lambda$.
We show why this is required by applying the construction of Algorithm~\ref{alg:adornment}
to a slightly more complex datalog program as an example.

\begin{example}
Suppose that we apply Algorithm~\ref{alg:adornment} with functions $\galgo$ and $h_{\sf eq}$. 
We consider the following datalog program 
        \begin{alignat*}{1}
        p(v, v) & \gets e(v, w).\\
        q(x, y, z) & \gets p(x, y), p(y,z).
        \end{alignat*}
Note that, in this case, we in fact have $x = y$ and $y = z$. Therefore, we should ultimately get the following adorned program:
        \begin{align*}
        p_{p(v,v) \gets e(v,\_)}(v, v) &\gets E(v, w).\\
        q_{q(x,x,x) \gets e(x,\_)}(x, x,x) &\gets p_{p(v,v) \gets e(v,\_)}(x, x),
        p_{p(v,v) \gets e(v,\_)}(x, x).      
        \end{align*}
Since the two body atoms in the second rule are identical, we can actually delete one 
and get 
        \begin{align*}
        q_{q(x,x,x) \gets e(x,\_)}(x, x,x) &\gets p_{p(v,v) \gets e(v,\_)}(x, x).
        \end{align*}

Now, how should we define the adornment propagation in this case---taking variable renaming and unification into account, and being able to reason about the equality of variables across different atoms (e.g., $x$, $y$, and $z$ in this particular example)?

First, we choose a rule for each of the two IDB body atoms. In this case, $R_1 = R_2$.
So the adornments of the two IDB body atoms are fixed, namely $p(v,v) \gets e(v,\_)$ for both.
However, before we can compute the simultaneous mgu of each body atom with the
    head atom of the corresponding rule, we have to rename the variables of all rules apart to avoid collisions, i.e., we get 
        \begin{align*}
            R'_1 &: p_{p(v,v) \gets e(v,\_)}(v, v) \gets e(v, w). \\
            R'_2 &: p_{p(v',v') \gets e(v',\_)}(v', v') \gets e(v', w'). 
        \end{align*}
    Note that, for computing the adornment of the head atom of the new rule, we have also applied the variable renaming in rule $R_2$ 
    to the adornment. This is in contrast to the adornment of the second body atom of the new rule,
    where we have left the adornment of $R_2$ unchanged.

Now we have to compute $\sigma = {\rm mgu} ( \{(p(x,y), p(v,v)), (p(y,z), p(v',v'))\})$.
    Clearly, $\sigma$ sets all variables $x,y,z,v,v'$ equal. Nevertheless, there are several
    ways of expressing $\sigma$, e.g. 
        \begin{align*}
            \sigma_1 &= \{x \mapsto v, y \mapsto v, z \mapsto v, v' \mapsto v  \}. \\
            \sigma_2 &= \{y \mapsto x, z \mapsto x, v \mapsto x, v' \mapsto x  \}. 
        \end{align*}

We now have to apply the mgu to all atoms (head atom and all body atoms) of the rule $R$, i.e., depending on the choice of mgu, we get: 
        \begin{alignat*}{2}
        \mbox{With mgu $\sigma_1$: } \quad &&
        q_\rho(v, v, v) &\ \gets p_{\rho_1}(v, v), p_{\rho_2}(v,v).  \\
        \mbox{With mgu $\sigma_2$: }  \quad &&
        q_\rho(x, x, x) &\ \gets p_{\rho_1}(x, x), p_{\rho_2}(x,x).  
        \end{alignat*}
        
        Since $\rho_1 = \rho_2$, we may  drop one of the two body atoms
        in each of the above two rules.

Finally, we have to compute the adornment of the head atom of the new rule. 
We thus construct a rule whose head atom is the same as the head atom of the new rule; 
as body
atoms, we collect the body atoms of the adornments of the head atoms in the rules $R_1$ and $R_2$, but first apply the mgu to these body atoms. We thus get the following adornments: 
        \begin{alignat*}{2}
        \mbox{With mgu $\sigma_1$: }  \quad
        & \rho =\ & q(v, v, v) & \gets e(v, \_),  e(v, \_).  \\
        \mbox{With mgu $\sigma_2$: }  \quad
        & \rho =\ & q(x, x, x) & \gets  e(x, \_),  e(x, \_).
        \end{alignat*}
In both rules, the body consists of two identical atoms. Hence, 
in each rule, we can delete one of the two body atoms, and we arrive at the 
rule      $q_{q(x,x,x) \gets e(x,\_)}(x, x,x) \gets$ $p_{p(v,v) \gets e(v,\_)}(x, x)$,
as mentioned above.

\end{example}

\subsubsection{Equivalence of EDB-Bounded Programs Produced by \cref{alg:adornment}}

We now establish the close relationship between the original datalog program $\Pi$ and the 
adorned program $\Pi'$ resulting from Algorithm~\ref{alg:adornment}. More specifically,
we show that, by an appropriate choice of the functions $g$ and $h$, the two programs 
are ``equivalent'', i.e.,   
for any EDB, they compute ``the same'' IDB relations modulo adornment. 
This notion of ``equivalence'' is formally defined below.

\begin{definition}[equivalent EDB-bounded program]
\label{def:equivalent}
    Let $\Pi$ be a datalog program and let $\Pi'$ be an EDB-bounded datalog program. 
    We say that 
    {\em $\Pi$ and $\Pi'$ are equivalent}, if for every EDB $D$ and for every 
    IDB predicate $q$ occurring in $\Pi$, the following condition holds: 
    $$
    \sem{q(\Pi,D)}   =
    \bigcup_{\rho}\,  \sem{q_\rho(\Pi',D)}
    $$
    That is, the set of tuples in the IDB relation $q$ that can be derived by program $\Pi$ from the database $D$ coincides with the union over all sets of tuples in any adorned 
    IDB relation $q_\rho$ that can be derived by program $\Pi'$ from
    the database $D$. 
\end{definition}

We next show that the EDB-adorned program $\Pi'$ constructed from an arbitrary datalog program
$\Pi$ according to \cref{alg:adornment}, with $\galgo$ and $h_{\sf eq}$, is indeed  an {\em equivalent EDB-bounded} program.

\begin{theorem}
\label{thm:equivalent}
Given an arbitrary datalog program $\Pi$, let $\Pi'$ be the EDB-adorned program
obtained from $\Pi$ according to \cref{alg:adornment} with adornment relaxation function $\galgo$ and 
membership checking function $h_{\sf eq}$.
Then $\Pi'$ is an EDB-bounded program equivalent to $\Pi$.
\end{theorem}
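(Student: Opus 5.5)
Three things need to be shown for \cref{thm:equivalent}: that \cref{alg:adornment} terminates, so $\Pi'$ is a finite program; that every rule of $\Pi'$ is bounded by the adornment of its head predicate, so $\Pi'$ is EDB-bounded; and that the equality of \cref{def:equivalent} holds. The two inclusions of the equality will be proved separately, each by induction on the height of proof trees.

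\emph{Termination.} After $\galgo$ is applied, every variable occurring in the body of an adornment for $q$ also occurs in its head $q(\bar u)$. The head $\bar u$ is an instance of the original head of some rule in $\Pi$, so up to renaming there are finitely many possible heads. Since $\galgo$ removes duplicate and dominated atoms, the body is a set of EDB atoms over these at most $\ar(q)$ variables, constants and ``$\_$''. Hence there are finitely many adornments up to renaming. For a fixed tuple of adornments of the body atoms, each rule $R\in\Pi$ yields only finitely many adorned rules up to renaming, because the mgu is unique up to renaming. With $h_{\sf eq}$, the set $\Pi'$ can therefore only grow finitely often, and a fixpoint is reached.

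\emph{EDB-boundedness.} I argue by induction on proof-tree height in $\Pi'$. Suppose $q_\rho(\bar t)$ is derived by a rule $R'$ under a ground substitution $\theta$. The root's children are the EDB facts $\theta\sigma(\mathcal E)$ and IDB atoms ${p_i}_{\rho_i}(\theta\sigma(\bar s_i))$ derived by lower trees. By the induction hypothesis, each $p_i(\theta\sigma(\bar s_i))$ is derived by the single rule $\rho_i$. That gives a ground substitution $\tau_i$ sending the head of $\rho_i'$ to $\theta\sigma(\bar s_i)$ and its body into $D$. Because the $\rho_i'$ are variable-disjoint from each other and from $R$, the maps $\theta\circ\sigma$ on $R$'s variables and $\tau_i$ on $\rho_i'$'s variables agree on the unified head positions. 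Their union is therefore a unifier, so it factors through the mgu $\sigma$ as $\tau\circ\sigma$. The map $\tau$ sends $\rho_0$'s head to $q(\bar t)$ and its body into $D$. Positions of $\rho_0$'s body that were replaced by ``$\_$'' are existential, so they are still satisfied. By the Proposition on relaxation functions, $\rho=\galgo(\rho_0)$ subsumes $\rho_0$, and hence $q(\bar t)$ is derived by $\rho$. The delicate point is that $\rho_i$ is stated for the unrenamed variables while $\rho_i'$ is its renamed copy; I would make this bookkeeping explicit.

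\emph{Soundness}, i.e.\ $\supseteq$ in \cref{def:equivalent}. Erasing adornments maps each rule of $\Pi'$ to an instance $\sigma(R)$ of a rule $R\in\Pi$. Hence every proof tree in $\Pi'$ becomes a proof tree in $\Pi$ with the same ground atoms.

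\emph{Completeness}, i.e.\ $\subseteq$ in \cref{def:equivalent}. This is the main obstacle. I prove by induction on proof-tree height in $\Pi$ that for every derived fact $q(\bar t)$ there is some $\rho$ with $q_\rho(\bar t)$ derivable in $\Pi'$. Let the root use $R\in\Pi$ with ground substitution $\theta$. By the induction hypothesis, each IDB child $p_i(\theta(\bar s_i))$ is derived as ${p_i}_{\rho_i}(\theta(\bar s_i))$ by a rule $R_i\in\Pi'$ whose head ${p_i}_{\rho_i}(\bar v_i)$ has $\theta(\bar s_i)$ as an instance via some $\tau_i$. After renaming apart, $\theta\cup\bigcup_i\tau_i$ is a simultaneous unifier of the pairs $(\bar s_i,\bar v_i')$. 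Thus the mgu $\sigma$ exists and $\theta=\theta'\circ\sigma$ on $R$'s variables for some $\theta'$. At the fixpoint, \cref{alg:adornment} has considered this tuple $(\rho_1,\dots,\rho_\lambda)$ with $R$. It therefore produced $R'$, or a renaming of $R'$ by $h_{\sf eq}$, with body atoms ${p_i}_{\rho_i}(\sigma(\bar s_i))$ and $\sigma(\mathcal E)$. Applying $\theta'$ to $R'$ derives $q_\rho(\theta(\bar s))=q_\rho(\bar t)$ from the children. The care needed here is that the adornment labels of body atoms are exactly the $\rho_i$ of the chosen $R_i$, so the same predicate symbols match. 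The remaining subtlety is that head instances must be matched through the mgu rather than syntactically. Both points follow from the construction in \cref{alg:adornment}.
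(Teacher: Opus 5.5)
Your proposal is correct and follows essentially the same route as the paper. Soundness is shown by erasing adornments. Completeness is by induction on proof-tree depth in $\Pi$, using the simultaneous mgu and the fixpoint property of \cref{alg:adornment}. EDB-boundedness is by induction on proof-tree depth in $\Pi'$, matching the renamed bodies of the $\rho_i$ together with $\sigma(\mathcal{E})$ into $D$ via an instance of $\sigma$.

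The one addition is your termination argument: there are only finitely many heads and adornments up to renaming, and $h_{\sf eq}$ blocks duplicates. The paper's proof takes termination for granted rather than proving it.
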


\begin{proof}
Let $\Pi$ and $\Pi'$ be according to the statement of the theorem. 
Moreover, let $D$ be an arbitrary database over the EDB predicates of $\Pi$ (and, hence, 
also of $\Pi'$). We have to prove
(1) that the two programs lead to the same IDB relations (modulo adornments) 
and (2) that every adorned rule in $\Pi'$ is indeed bounded by the adornment of the head predicate.

\paragraph{Equivalence}
Let $q$ be an arbitrary IDB predicate occurring in $\Pi$. 
We have to show that $\sem{q(\Pi,D)}   =
\bigcup_{\rho}\,  \sem{q_\rho(\Pi',D)}$ holds. 

The inclusion $\sem{q(\Pi,D)}   \supseteq
\bigcup_{\rho}\,  \sem{q_\rho(\Pi',D)}$  is seen as follows: 
Suppose that $\Pi'$ contains a rule 
$R'\colon q_\rho(\sigma(\bar s)) \leftarrow {p_1}_{\rho_1}(\sigma(\bar s_1)), \dots, {p_\mu}_{\rho_\mu}(\sigma(\bar s_\mu))$, 
then,
by the construction in Algorithm~\ref{alg:adornment},
$\Pi$ contains the rule 
$R\colon q(\bar s) \leftarrow p_1(\bar s_1), \dots, p_\mu(\bar s_\mu)$.
Hence, any proof tree $\mathcal{T}'$ of a ground IDB atom $q_\rho(\bar t)$ with program 
$\Pi'$ can 
be turned into a proof tree $\mathcal{T}$ of $q(\bar t)$ by simply 
replacing every rule $R'$ by the corresponding rule $R$ and removing all adornments.

It remains to prove the inclusion $\sem{q(\Pi,D)}   \subseteq
\bigcup_{\rho}\,  \sem{q_\rho(\Pi',D)}$. That is, 
for every $\bar t \in \sem{q(\Pi,D)}$, we have to show that 
there exists an adorned predicate $q_\rho$ with  
$\bar t \in \sem{q_\rho(\Pi',D)}$. We prove this property by induction 
on the depth of a proof tree of $q(\bar t)$ of program $\Pi$ over the  database~$D$. 

\smallskip
\noindent
{\em Base case.}
Suppose that the depth of a proof tree of $q(\bar t)$ is $j = 1$, i.e., 
$q(\bar t)$ is derived by applying a rule of the form 
$q(\bar s) \leftarrow p_1(\bar s_1), \dots, p_\mu(\bar s_\mu)$,
such that all body atoms are EDB atoms. 
Then $\Pi'$ contains a rule 
$q_\rho(\bar s) \leftarrow p_1(\bar s_1), \dots, p_\mu(\bar s_\mu)$,
where $\rho$ is constructed according to the base case of 
Algorithm~\ref{alg:adornment}.
In particular, $\rho$ is 
such an adornment with $\bar t \in \sem{q_\rho(\Pi',D)}$.

\smallskip
\noindent
{\em Induction step.}
Now suppose that the depth of a proof tree $\mathcal{T}$ of $q(\bar t)$ is $j > 1$.
That is, the proof tree $\mathcal{T}$ has 
$q (\bar t)$ as root node, which is connected via edges with some label $R$ to its child nodes. Let 
$p_1(\bar t_1), \dots, p_\mu(\bar t_\mu)$ be the ground atoms labeling the child nodes of 
the root in $\mathcal{T}$ and let the rule 
$R$ have the form 
$q(\bar s)  \leftarrow p_1(\bar s_1), \dots, p_\mu(\bar s_\mu)$.
Hence, 
$R$ fires by (simultaneously) instantiating each body atom $p_i(\bar s_i)$ to 
$p_i(\bar t_i)$, i.e., there exists a substitution $\eta$, such that 
$\eta(\bar s_i) = \bar t_i$ for every $i \in \{1, \dots, \mu\}$
and $\eta(\bar s) = \bar t$.

Clearly, the ground atoms $p_1(\bar t_1), \dots, p_\mu(\bar t_\mu)$
are either EDB atoms or they have a proof tree of depth $< j$.
Since we are assuming that the depth of $\mathcal{T}$ is $> 1$, there is at least 
one IDB atom in the body of $R$.
W.l.o.g., let $1 \leq \lambda \leq \mu$ such that $p_1, \dots, p_\lambda$ are 
IDB predicates and $p_{\lambda+1}, \dots, p_\mu$ are EDB predicates.
Hence, by the 
induction hypothesis, we can derive adorned atoms
${p_i}_{\rho_i}(\bar t_i)$ for some adornments $\rho_i$ also in $\Pi'$ with 
$1 \leq i \leq \lambda$.
In particular, this means that $\Pi'$ contains rules 
$R'_i, \dots, R'_\lambda$ with rule heads ${p_i}_{\rho_i}(\bar u_i)$, such that 
${p_i}_{\rho_i}(\bar t_i)$ is a ground instance of 
${p_i}_{\rho_i}(\bar u_i)$. 
W.l.o.g., suppose that $R'_i, \dots, R'_\lambda, R$ are variable disjoint (otherwise apply variable
renamings to $R'_i, \dots, R'_\lambda$ as in Algorithm~\ref{alg:adornment}).
Clearly, $\bar t_i$
is a common ground instance
of $\bar u_i$ and $\bar s_i$, for every $i \in \{1, \dots, \lambda\}$.
Hence, we can simultaneously unify 
$\bar u_i$ and $\bar s_i$
for all $1 \leq i \leq \lambda$.
Let $\sigma = {\rm mgu} \{ (\bar u_1, \bar s_1), \dots, (\bar u_\lambda, \bar s_\lambda) \}$.
Then there exists a substitution $\tau$, such that  
$\tau\sigma(\bar u_i) = \tau\sigma(\bar s_i) = \bar t_i = \eta (\bar s_i)$ for every $i \in \{1, \dots, \lambda\}$.
Note that $\eta$ possibly instantiates additional variables compared to $\eta(\sigma(\cdot))$, since
it also maps all tuples $\bar s_i$ with $i \in \{\lambda+1, \dots, \mu\}$ to $\bar t_i$.
Hence, there exists a substitution $\theta$ (equal to $\eta$  or extending $\eta$),
such that $\theta(\sigma(\bar s_i)) = \eta( \bar s_i) = \bar t_i$ for every $i \in \{1, \dots, \mu\}$
and $\theta(\sigma(\bar s)) = \eta( \bar s) = \bar t$.

By Algorithm~\ref{alg:adornment},
$\Pi'$ contains the rule $R'\colon q_\rho(\sigma(\bar s)) \leftarrow 
{p_1}_{\rho_1} (\sigma(\bar s_1)), \dots, {p_\mu}_{\rho_\mu} (\sigma(\bar s_\mu))$, 
where $\rho$ is constructed according to the adornment propagation 
in Algorithm~\ref{alg:adornment}. By the above considerations, 
$\theta$ maps each body atom ${p_i}_{\rho_i}(\sigma(\bar s_i))$ to
 ${p_i}_{\rho_i}(\theta(\sigma(\bar s_i))) =  {p_i}_{\rho_i}(\bar t_i)$ and it maps 
 the 
 head atom  $q_{\rho}(\sigma(\bar s))$ to $q_{\rho}(\theta(\sigma(\bar s))) = q_{\rho}(\bar t)$.
Since we are assuming that  all the EDB-adorned atoms ${p_i}_{\rho_i}(\bar t_i)$  for $i \in \{1, \dots, \lambda\}$
can be derived 
by $\Pi'$ from the EDB $D$,
and $D$ contains the ground atoms $p_i(\bar t_i)$  for $i \in \{\lambda + 1, \dots, \mu\}$,
we conclude that 
$q_\rho(\bar t)$ can indeed be derived by $\Pi'$ from the EDB $D$.

\paragraph{Boundedness}
We have to show that every rule in $\Pi'$ is 
bounded by the adornment of its head predicate. That is,  
let $R'$ with head predicate $q_\rho$ be a rule in $\Pi'$ and let $q_\rho(\bar t)$ be derived by rule $R'$ when 
evaluating $\Pi'$ over an EDB $D$. We have to show that $q(\bar t)$ is also derived when 
evaluating the single-rule program consisting of rule $\rho$ over the EDB $D$.
The proof is by induction on the depth of a proof tree in $\Pi'$ of a fact $q_\rho(\bar t)$.

\smallskip
\noindent
{\em Base case}. Suppose that $q_\rho (\bar t)$ is derived by a proof tree $\mathcal{T}$ of depth $j = 1$ 
and that 
rule $R'$ with head predicate $q_\rho$ is the top-most rule in $\mathcal{T}$. 
Then $R'$ is  
of the form $q_\rho(\bar s) \gets p_1(\bar s_1), \dots, p_1(\bar s_\mu)$, 
where all $p_i$ are EDB predicates.
By the base case 
in Algorithm~\ref{alg:adornment}, $R'$ is constructed from a rule  
$R$ of the form $q(\bar s) \gets p_1(\bar s_1), \dots, p_1(\bar s_\mu)$ in $\Pi$, 
such that $\rho$ is obtained from $R$ by leaving the head atom unchanged and replacing in every body atom all arguments that are not head variables by ``$\_$'' and then 
deleting redundant body atoms. Recall that the occurrences of ``$\_$'' in body atoms are treated like pairwise distinct, existentially quantified variables. Hence, whenever $R$ fires, so does $\rho$, since the body of $\rho$ has been 
constructed by starting off with the body of $R$ and possibly ``weakening'' (i.e., replacing some arguments by ``$\_$'')
or even deleting some of the body atoms.

\smallskip
\noindent
{\em Induction step}. 
Suppose that $q_\rho (\bar t)$ is derived by a proof tree $\mathcal{T}$ of depth $j > 1$ 
and that 
rule $R'$ with head predicate $q_\rho$ is the top-most rule in $\mathcal{T}$. 
That is, $\mathcal{T}$ has $q_\rho(\bar t)$ as root node, which is connected via edges with label $R'$ 
to its child nodes. Moreover, since $j > 1$, $R'$ contains at least one IDB atom in the body. 
By the the adornment propagation in Algorithm~\ref{alg:adornment},
rule $R'$ is of the form 
$$R'\colon q_\rho(\sigma(\bar s)) \leftarrow 
{p_1}_{\rho_1} (\sigma(\bar s_1)), \dots, {p_\lambda}_{\rho_\lambda} (\sigma(\bar s_\lambda)),
p_{\lambda+1}( \sigma(\bar s_{\lambda+1})), \dots, p_\mu(\sigma(\bar s_\mu)).
$$
Hence, the child nodes of the root in  $\mathcal{T}$ are labeled by ground atoms  
${p_1}_{\rho_1} (\bar t_1)$, \dots, ${p_\lambda}_{\rho_\lambda} (\bar t_\lambda),
p_{\lambda+1}( \bar t_{\lambda+1})$, \dots, $p_\mu(\bar t_\mu)$, such that 
 ${p_1}_{\rho_1}, \dots {p_\lambda}_{\rho_\lambda}$ are IDB predicates, and 
$p_{\lambda +1}, \dots, p_\mu$ are EDB predicates. 
Importantly, all tuples $\sigma(\bar s_i)$ with $i \in \{1, \dots, \mu\}$ can be simultaneously instantiated to $\bar t_i$. 
Hence, there exists a substitution $\tau$, such that 
$\bar t_i = \tau(\sigma(\bar s_i))$ for every $i \in \{1, \dots, \mu\}$
and $\bar t = \tau(\sigma(\bar s))$.

Moreover, there exist rules $R_1, \dots, R_\lambda$ in $\Pi'$ 
with head predicates ${p_1}_{\rho_1}, \dots {p_\lambda}_{\rho_\lambda}$,
which were used in the construction of rule $R'$.
Recall from the propagation of adornments in Algorithm~\ref{alg:adornment} that
$\sigma$ is actually obtained as mgu when simultaneously unifying, for every $i \in \{1, \dots, \lambda\}$, 
the arguments $\bar u_i$ of the head atom ${p_i}_{\rho_i}(\bar u_i)$ of rule $R_i$ with the 
arguments $\bar s_i$ of the corresponding body atom $p_i(\bar s_i)$ of the rule $R$ in $\Pi$.
(Here we are assuming w.l.o.g., that the rules $R_1,\dots, R_\lambda,R$ are pairwise variable-disjoint; 
this can always be achieved by appropriate renaming of variables.)
Hence, $\sigma(\bar s_i) = \sigma(\bar u_i)$ and, therefore, 
we also have $\bar t_i = \tau(\sigma(\bar u_i))$ for every $i \in \{1, \dots, \lambda\}$.

In the proof tree $\mathcal{T}$, each ${p_i}_{\rho_i} (\bar t_i)$ has itself a proof tree $\mathcal{T}_i$
of depth $< j$,
and the root node of $\mathcal{T}_i$ is connected with edges labeled by $R_i$ with its child nodes. 
Hence, by the 
induction hypothesis, every $p_i (\bar t_i)$ is also derived when evaluating the 
single-rule program consisting of rule $\rho_i$ only. By the construction of $\Pi'$ in 
Algorithm~\ref{alg:adornment},
rule $R_i$ and the rule $\rho_i$ adorning the head predicate of $R_i$ have, apart from the adornment,
the same head atom. That is, the head atom of $\rho_i$ is $p_i(\bar u_i)$. Moreover, as has been 
argued above, we have $p_i(\bar t_i) = p_i(\tau(\sigma( \bar u_i)))$. That is, the rule $\rho_i$ fires on 
the EDB $D$ by applying a substitution of the form $\eta_i \cup \tau(\sigma(\cdot))$ to $\rho_i$, where 
$\eta_i$ extends the substitution $\tau(\sigma(\cdot))$ to the variables only occurring in the body of $\rho_i$. 
Of course, for any two indices $i \neq i'$, the substitutions 
$\eta_i$ and $\eta_{i'}$ cannot interfere with each other, since we are assuming that any two 
rules out of $R_1, \dots, R_\lambda$ are 
variable-disjoint and, therefore, also the adornments $\rho_1, \dots, \rho_\lambda$
(which may only contain variables occurring  in the head of the corresponding rules
$R_1, \dots, R_\lambda$) 
are pairwise variable-disjoint. 

Recall from the adornment propagation in Algorithm~\ref{alg:adornment} that the adornment $\rho$
of the head atom $q_\rho(\sigma(\bar s))$ of rule $R'$ is obtained as follows: 
\begin{itemize}[leftmargin=*]
    \item The head atom of $\rho$ is the same as the head atom of $R'$ without the adornment, i.e., 
    $q(\sigma(\bar s))$;
    \item in every EDB atom of the body of $R'$, the variables not occurring in $\sigma(\bar s)$ are replaced by 
    ``$\_$'', and the resulting atom is added to the body of $\rho$;
    \item in every body atom of every rule $\rho_i$, we first apply the substitution $\sigma$ and then again 
    replace the variables not occurring in $\sigma(\bar s)$ by 
    ``$\_$'', before we add the resulting atom to the body of $\rho$.  
\end{itemize}

Actually, some of the aforementioned body atoms of rule $\rho$ may possibly be deleted by the redundancy elimination step. 
Anyway, we show that all of the above mentioned (potential) body atoms of rule $\rho$ can be simultaneously matched into the 
EDB $D$ by a substitution that extends $\tau(\sigma(\cdot))$. 

    First consider some body atom of $\rho$ obtained from an EDB atom $p_i(\sigma(\bar s_i))$ of the body of $R'$. Even without replacing some of the arguments in 
    $\sigma(\bar s_i)$  by ``$\_$'' (i.e., by a fresh existentially quantified variable), we have that $p_i(\tau(\sigma(\bar s_i)))$ is a ground atom from the EDB $D$. Hence,
    there clearly exists an extension of $\tau(\sigma(\cdot))$ to the fresh, existentially quantified variables so that the atom obtained from $p_i(\sigma(\bar s_i))$
    by introducing ``$\_$'' is mapped to a ground atom in the EDB.
    
    Now consider a body atom of $\rho$ that is obtained from some body atom of some rule $\rho_i$. As has been shown above, even without replacing 
    some of the arguments in this atom by ``$\_$'' (i.e., by a fresh existentially quantified variable), $\tau(\sigma(\cdot))$ can be extended by a 
    substitution $\eta_i$ that maps every body atom of $\rho_i$ to a ground atom from the EDB $D$. Hence, we may again conclude that
    there exists an extension of $\tau(\sigma(\cdot))$ to the fresh, existentially quantified variables so that every body atom of $\rho_i$ is mapped to 
    an atom from the EDB $D$.

In other words, the rule 
$\rho$ fires on the EDB $D$ and produces the atom 
 $q(\bar t) = q(\tau(\sigma(\bar s)))$.
\end{proof}

\section{Semi-Deciding and Rewriting Bounded Datalog Programs}
\label{sec:boundedness}

In this section, we revisit the problem of datalog boundedness.
We first show the semi-decidability of boundedness by (potentially infinite) inlining.
Then, as one use case of our framework, we give an instantiation that rewrites most practical bounded datalog programs into equivalent UCQs by simple inlining.
In case the equivalent UCQ is too large, or the program is unbounded, the algorithm applies adornment relaxation functions to obtain an equivalent finite EDB-bounded program.

We start by formalizing this idea of potentially infinite inlining of datalog programs:
\begin{definition}[unrolling]
    The \emph{unrolling} of a datalog program $\Pi$ is defined as the non-recursive, potentially infinite equivalent program derived by repeatedly inlining the rules of $\Pi$.
\end{definition}

It has also been well known that
\begin{theorem}[{\cite[Theorem~3]{10.1145/322217.322221}}]
\label{thm:ucq_containment}
    Given two UCQs $Q_1$ and $Q_2$, then $Q_1 \subseteq Q_2$ if and only if every CQ in $Q_1$ is contained in a CQ in $Q_2$.
\end{theorem}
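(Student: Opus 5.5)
The plan is to prove both directions via the canonical-database and homomorphism characterization of CQ containment, the classical Chandra--Merlin theorem: for CQs $q_1,q_2$, we have $q_1 \subseteq q_2$ iff there is a homomorphism from $q_2$ to $q_1$ that maps the head of $q_2$ onto the head of $q_1$. This is the same homomorphism notion the paper uses for rule subsumption. Write $Q_1 = \bigcup_i q_i$ and $Q_2 = \bigcup_j q'_j$.

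For the direction ``if'', the argument is immediate. Suppose every $q_i$ satisfies $q_i \subseteq q'_{j(i)}$ for some $j(i)$. Then for every database $D$ we have $q_i(D) \subseteq q'_{j(i)}(D) \subseteq Q_2(D)$. Taking the union over $i$ gives $Q_1(D) \subseteq Q_2(D)$.

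For the direction ``only if'', assume $Q_1 \subseteq Q_2$ and fix a CQ $q_i$ in $Q_1$.
\begin{enumerate}
\item Build its canonical database $D_{q_i}$ by freezing each variable of $q_i$ into a distinct fresh constant. Constants already in $q_i$ are kept, and each body atom becomes a fact.
\item Let $\bar c$ be the frozen head tuple. The identity valuation witnesses $\bar c \in q_i(D_{q_i})$, so $\bar c \in Q_1(D_{q_i}) \subseteq Q_2(D_{q_i})$.
\item Hence some single disjunct $q'_j$ satisfies $\bar c \in q'_j(D_{q_i})$. This is the key point: a union outputs a tuple only if one member does.
\item The satisfying valuation $\nu$ of $q'_j$ into $D_{q_i}$, composed with unfreezing, is a homomorphism from $q'_j$ to $q_i$ sending the head of $q'_j$ to the head of $q_i$. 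Unfreezing is possible because the frozen constants are fresh and in bijection with the variables of $q_i$.
\item By Chandra--Merlin, this homomorphism gives $q_i \subseteq q'_j$.
\end{enumerate}

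The only delicate step is this last transfer from the valuation to containment. Equivalently, one can show it directly. Given any $D$ and any $\bar t \in q_i(D)$ witnessed by a valuation $\mu$, the composition $\mu \circ \text{unfreeze} \circ \nu$ is a valuation of $q'_j$ into $D$ that produces $\bar t$. This check requires only that constants are fixed by all maps, which holds since valuations are the identity on constants.

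One should also treat the degenerate case where $q_i$ is unsatisfiable, for instance because of conflicting constants in an equality-free setting. This cannot happen for plain CQs, so no special handling is needed. If $Q_2$ is empty, then $Q_1 \subseteq Q_2$ forces $Q_1$ to be empty as well, since every CQ is satisfiable on its canonical database, so that case is consistent too.
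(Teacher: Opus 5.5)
The paper gives no proof of its own here (it simply cites Sagiv and Yannakakis), and your argument is the standard canonical-database proof of that result. It is correct, provided, as you implicitly assume, that the CQs are safe and the frozen constants are fresh with respect to the constants of both $Q_1$ and $Q_2$, so that unfreezing fixes every constant of $q'_j$. Since your step 3 uses only that a tuple in a union lies in some member, the same argument also covers the infinite unions that the paper needs for \cref{prop:equivalent_non_rec}.
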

Notice that \autoref{thm:ucq_containment} can be easily generalized to infinite UCQs and to non-recursive datalog programs, which are equivalent to finite sets of UCQs. This leads directly to the following corollary:
\begin{corollary}
\label{prop:equivalent_non_rec}
    Given a datalog program $\Pi$, let $\Pi^*$ be the unrolling of $\Pi$. Let $\Pi_b$ be a non-recursive (finite) datalog program equivalent to $\Pi$. Then $\Pi_b$ is equivalent to a finite subset of rules from $\Pi^*$.
\end{corollary}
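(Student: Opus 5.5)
The plan is to view both $\Pi_b$ and $\Pi^*$ as unions of conjunctive queries, one for each IDB predicate, and then use \cref{thm:ucq_containment} in both directions together with the finiteness of $\Pi_b$.

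\textbf{Step 1: both programs as UCQs.} First I would note that the unrolling $\Pi^*$ is equivalent to $\Pi$. Every rule of $\Pi^*$ is an inlining of rules of $\Pi$, so it is sound; this is the easy direction, shown as in the first inclusion in the proof of \cref{thm:equivalent}. Conversely, every proof tree of $\Pi$ over some EDB $D$ corresponds to one fully inlined rule of $\Pi^*$ that fires on $D$. This is proved by induction on the depth of the proof tree, composing the substitutions exactly as in the induction step of \cref{thm:equivalent}. Next I would note that $\Pi_b$, being non-recursive and finite, can be fully inlined into a finite UCQ $\{Q_1,\dots,Q_n\}$ for each IDB predicate $q$. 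Fixing $q$, we then have $\Pi_b \equiv \Pi \equiv \Pi^*$ as (possibly infinite) UCQs with head predicate $q$.

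\textbf{Step 2: apply \cref{thm:ucq_containment} in both directions.} The direction $\Pi_b \subseteq \Pi^*$, via the generalization of \cref{thm:ucq_containment} to an infinite right-hand side, gives for each $Q_j$ a CQ $P_j \in \Pi^*$ with $Q_j \subseteq P_j$. I would check that this generalization holds. The standard proof evaluates the union on the canonical database of $Q_j$: the frozen head tuple must be produced by some single disjunct $P$ of $\Pi^*$, and this yields a homomorphism from $P$ to $Q_j$. That argument never uses finiteness of the right-hand side. The direction $\Pi^* \subseteq \Pi_b$ gives that every $P \in \Pi^*$ is contained in some $Q_j$ (here the right-hand side is finite).

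\textbf{Step 3: choose the finite subset.} Let $F=\{P_1,\dots,P_n\}\subseteq\Pi^*$, collected over all IDB predicates, which keeps $F$ finite. Then $F \subseteq \Pi^*$ as a sub-union, and $\Pi^* \equiv \Pi_b$. In the other direction, each $Q_j \subseteq P_j$, so $\Pi_b \subseteq F$. Hence $F \equiv \Pi_b$.

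I expect the main technical obstacle to be Step 1 together with the infinite generalization in Step 2. Making the correspondence between proof trees and unrolled rules rigorous requires careful renaming apart during inlining. The generalized containment must also be stated so that it handles an infinite union of CQs, which is what the canonical-database argument above provides.
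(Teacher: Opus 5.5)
Your proposal is correct and follows the paper's argument. The paper likewise treats each IDB predicate's rules as a (possibly infinite) UCQ, uses the infinite generalization of \cref{thm:ucq_containment} to place each CQ of $\Pi_b$ inside a single CQ of $\Pi^*$, and takes those finitely many CQs. Your Step~3 spells out the equivalence check $\Pi_b \subseteq F \subseteq \Pi^* \equiv \Pi_b$, which the paper leaves implicit. Your Step~1 and the $\Pi^* \subseteq \Pi_b$ half of Step~2 are not needed, since the paper takes $\Pi^* \equiv \Pi$ as part of the definition of unrolling.
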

\begin{proof}
    A set of datalog rules sharing the same head predicate defines a UCQ. Consequently, every CQ derived from rules in $\Pi_b$ must be contained within a CQ from the rules in $\Pi^*$. Since $\Pi_b$ has a finite set of rules, the set of CQs equivalent to $\Pi_b$ in $\Pi^*$ is also finite.
\end{proof}
Next, we show an application of this result within our framework to obtain a non-recursive program that is 
equivalent to a given bounded datalog program.
\begin{proposition}
\label{prop:equivalent_non_rec_algo}
    Given a datalog program $\Pi$, let $\Pi'$ be the (potentially infinite) output of applying \cref{alg:adornment} on $\Pi$ with the identity function $\it id$ as the adornment relaxation function, and let $\Pi^*$ be the non-recursive unrolling of $\Pi$. Then $\Pi'$ is equivalent to $\Pi^*$.
\end{proposition}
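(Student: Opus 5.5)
Since \cref{thm:equivalent} is stated for $\galgo$ and $h_{\sf eq}$, we cannot quote it verbatim. Instead, we reuse the proof-tree inductions from its proof, now with $g = \mathit{id}$. With $g=\mathit{id}$, every adornment $\rho$ produced by \cref{alg:adornment} is exactly the rule obtained by iterated inlining. This holds up to the fact that the construction of $\rho_0$ in the description already replaces non-head variables by ``$\_$''. We will read the identity instantiation as keeping $\rho_0$ literally, i.e.\ $q(\sigma(\bar s)) \gets \bigwedge_i \sigma(\mathcal{B}_i)\wedge\sigma(\mathcal{E})$, with the internal variables of the $\mathcal{B}_i$ kept after renaming apart. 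The plan is to show that the set of adornments $\{\rho \mid q_\rho \text{ occurs in } \Pi'\}$ coincides, up to variable renaming, with the set of CQs of $\Pi^*$ with head predicate $q$. The claim $\Pi' \equiv \Pi^*$ then follows by combining this with the boundedness half of \cref{thm:equivalent}.

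\textbf{Step 1 (adornments are unrollings).} By induction on the iteration in which a rule $R'$ with head $q_\rho$ is added to $\Pi'$, we show that $\rho$ is a rule of $\Pi^*$. In the base case $\lambda=0$, $\rho$ is the original EDB-only rule $R$. In the inductive step, $\rho$ arises from $R\in\Pi$ by substituting, for each IDB atom $p_i(\bar s_i)$, the body $\mathcal{B}_i$ of an adornment $\rho_i$, which is an unrolling by hypothesis. The substitution uses the mgu $\sigma$ of $\{(\bar s_i,\bar v_i')\}$ after renaming apart. This is exactly a (simultaneous) inlining step. The one subtlety is that the paper's inlining substitutes only head variables of the inlined rule. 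We therefore have to argue that inlining via an mgu gives the same CQ up to renaming as sequentially inlining each $\rho_i$ into $R$, which is standard since mgus compose.

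\textbf{Step 2 (unrollings are adornments).} Conversely, we show that every rule of $\Pi^*$ appears as an adornment in $\Pi'$. This is an induction on the inlining depth. The converse of the previous step is that every CQ of $\Pi^*$ is obtained from some rule $R\in\Pi$ by inlining shallower unrollings into all of its IDB atoms. The fixpoint of \cref{alg:adornment} ranges over all tuples $(\rho_1,\dots,\rho_\lambda)$, so the corresponding $R'$ is generated, and $h_{\sf eq}$ only discards exact renamings. Here a normalization is needed. Partial inlinings still containing IDB atoms are not rules of an EDB-only UCQ, so we take $\Pi^*$ to mean its EDB-only rules, which is also what semantic equivalence cares about.

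\textbf{Step 3 (semantics).} By Step 2, for each $q$ we have $\sem{q(\Pi^*,D)} \subseteq \bigcup_\rho \sem{\rho(D)}$. The boundedness argument of \cref{thm:equivalent} never used $\galgo$ in an essential way, only that $g(\rho_0)$ subsumes $\rho_0$, which $\mathit{id}$ trivially does. That argument gives $\sem{q_\rho(\Pi',D)}\subseteq\sem{\rho(D)}$. The equivalence half of \cref{thm:equivalent}, also independent of $g$, gives $\bigcup_\rho \sem{q_\rho(\Pi',D)}=\sem{q(\Pi,D)}=\sem{q(\Pi^*,D)}$. Using Step 1 to get $\sem{\rho(D)}\subseteq \sem{q(\Pi^*,D)}$, everything collapses to equality, even for infinite $\Pi'$, because every derivation is finite.

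\textbf{Main obstacle.} The hardest part is Step 1/2's identification of mgu-based simultaneous propagation with the paper's notion of inlining, including variable hygiene and the ``$\_$'' convention. We will handle it by showing that both constructions yield homomorphically equivalent CQs, which suffices by \cref{thm:ucq_containment}.
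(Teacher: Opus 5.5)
Your proposal is correct, and Steps~1--2 are the paper's argument (``the head adornments form the unrolling of $\Pi$, each being simply an inlining'') carried out by induction. Your requirement that the non-head variables of each $\rho_i$ also be renamed apart is a genuine detail. The one-line proof and the pseudocode leave it implicit, since the pseudocode only asks that $R$ and the head atoms be made variable-disjoint. For example, reusing $\TC(x,y)\gets e(x,z),e(z,y)$ inside $\TC(x,y)\gets\TC(x,z),e(z,y)$ without renaming $z$ yields the spurious atom $e(z,z)$.

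Where you differ from the paper is the closing step. The paper finishes by noting that, under $\mathit{id}$, every rule of $\Pi'$ is equivalent to its own head adornment, i.e.\ $\sem{q_\rho(\Pi',D)}=\sem{\rho(D)}$ for each $\rho$. It then takes the union over $\rho$ via the syntactic identification. You get only the inclusion $\subseteq$ for each $\rho$, from the boundedness half of \cref{thm:equivalent}. You then obtain the union-level equality from the $g$-independent equivalence half together with $\Pi\equiv\Pi^*$. Observe that these two halves already constitute the equivalence of \cref{def:equivalent}: boundedness certifies that $\Pi'$ is EDB-bounded, and the equivalence half gives the union identity. So in your route Steps~1--2 serve only the stronger syntactic claim, which is what \cref{prop:bounded_eq_terminate} actually uses.

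The per-$\rho$ reverse inclusion $\sem{\rho(D)}\subseteq\sem{q_\rho(\Pi',D)}$, which you do not establish, follows by an easy induction on construction depth. If $\rho$ matches into $D$, so does each renamed $\rho_i$; hence each ${p_i}_{\rho_i}$ fact is derived and $R'$ fires. The advantage of the paper's route is that this inclusion holds for any membership function. Your Step~2 and the equivalence half instead rely on $h_{\sf eq}$ retaining every combination. This matters because the paper goes on to use this proposition together with $h_{\sf cont}$, which discards rules.
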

\begin{proof}
    The head adornments form the unrolling of $\Pi$ (each of which is simply an inlining), and every rule in $\Pi^*$ is equivalent to its head adornment.
\end{proof}

Furthermore, by applying $h_{\sf cont}$ as the membership checking function, which attempts not to add an adornment if it is subsumed by another adornment, then the algorithm always terminates when the input program is bounded:
\begin{proposition}
\label{prop:bounded_eq_terminate}
Algorithm \ref{alg:adornment}, with $\it id$ as the adornment relaxation function and $h_{\sf cont}$ as the membership checking function, 
terminates on a datalog program $\Pi$ if and only if $\Pi$ is bounded.
\end{proposition}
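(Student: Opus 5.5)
We prove the two directions separately, relying on \cref{prop:equivalent_non_rec_algo} and \cref{prop:equivalent_non_rec}. Throughout, we view the run of \cref{alg:adornment} with $\it id$ as producing head adornments that are exactly inlinings (elements of the unrolling $\Pi^*$), organized by the number of iterations of the outer loop.

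\emph{($\Rightarrow$) Termination implies boundedness.}
Suppose the algorithm terminates with a finite program $\Pi'$.
\begin{itemize}[leftmargin=*]
  \item First we show that $\Pi'$ is still equivalent to $\Pi$. The argument follows the proof of \cref{thm:equivalent}; only the use of $h_{\sf cont}$ needs new work.
  \item With $h_{\sf cont}$, a rule $R'$ can be skipped only if (1) its head predicate $q_\rho$ is non-recursive and some existing $q_{\rho'}$ has $\rho'$ subsuming $\rho$, or (2) $R'$ is already present up to renaming.
  \item Case (2) is harmless. For case (1), we argue by induction on proof-tree depth, as in \cref{thm:equivalent}. Any tuple derived via the skipped rule is derived by $\rho$, since the rule is equivalent to its adornment under $\it id$. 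It is therefore also produced by $\rho'$, and hence by $q_{\rho'}$. This last step uses the fact that, with $\it id$, every adorned rule is equivalent to its adornment (\cref{prop:equivalent_non_rec_algo}).
  \item Because the adornments are finitely many inlinings, the union over $\rho$ of the adornments for $q$ is a finite UCQ equivalent to $q$.
  \item We then make $\Pi'$ non-recursive by replacing each adorned predicate by its adornment. This gives a finite non-recursive program equivalent to $\Pi$, so $\Pi$ is bounded.
\end{itemize}

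\emph{($\Leftarrow$) Boundedness implies termination.}
Suppose $\Pi$ is bounded, and let $\Pi_b$ be a finite non-recursive equivalent program.
\begin{itemize}[leftmargin=*]
  \item By \cref{prop:equivalent_non_rec}, $\Pi_b$ is equivalent to a finite set $F \subseteq \Pi^*$ of inlinings. By \cref{thm:ucq_containment}, every CQ in $\Pi^*$ for a predicate $q$ is contained in some CQ of $F$ for $q$.
  \item Since $F$ is finite, there is a finite iteration $k$ of the outer loop by which every element of $F$ has been generated as an adornment, or is subsumed by a generated one. This uses \cref{prop:equivalent_non_rec_algo}: each inlining corresponds to some sequence of rule applications of bounded depth.
  \item After iteration $k$, any newly constructed rule $R'$ has an adornment $\rho$ that is an element of $\Pi^*$. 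So $\rho$ is subsumed by some existing adornment $\rho'$, and $h_{\sf cont}$ rejects $R'$ via condition (1).
  \item This requires $q_\rho$ to be non-recursive in the current $\Pi'$. Here we use that the adornments are distinct predicates: a fresh adornment $q_\rho$ can be recursive only if $\Pi'$ already contains a rule with head $q_\rho$.
  \item Otherwise $R'$ is an exact duplicate, and $h_{\sf cont}$ rejects it via condition (2).
  \item Because only finitely many distinct adornments appear by iteration $k$, only finitely many rules can be built over them. Hence the fixpoint is reached.
\end{itemize}

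\emph{Main obstacle.}
The hardest step is the interaction between the recursiveness condition in $h_{\sf cont}$ and the order of generation in the ($\Leftarrow$) direction. A new adornment equal to an existing recursive one must be handled by condition (2), not condition (1). We also have to show that only finitely many distinct adornments are ever added, not merely that each one is eventually subsumed.

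To handle this, we first try to show that once $F$ is covered, every newly built adornment is subsumed by an adornment of a non-recursive predicate. Failing that, we bound the set of added adornments by $F$ together with those derived from finitely many predicates, up to renaming. The ($\Rightarrow$) direction also needs care: we must check that skipping under condition (1) never loses tuples needed by later recursive rules. We plan to handle this through the subsumption argument given above.
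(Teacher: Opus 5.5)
Your argument is correct in substance. For ``bounded $\Rightarrow$ terminates'' it uses the same ingredients as the paper: a finite $F\subseteq\Pi^*$ via \cref{prop:equivalent_non_rec}, \cref{thm:ucq_containment}, and rejection by $h_{\sf cont}$. It is, however, organized more carefully and covers more.

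The paper proves only this direction, and it does so by claiming that the output equals a finite subset $S$, i.e.\ that every rule outside $S$ is rejected. That claim ignores generation order: an adornment contained in one from $S$ can be produced before it, and is then kept. Your formulation avoids this. After some iteration $k$, every freshly built head $q_\rho$ is non-recursive and contained in an existing adornment, so the set of adorned predicates is frozen and only finitely many further rules are possible.

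You also prove ``terminates $\Rightarrow$ bounded'', which the paper leaves implicit. The key observation is that under $\mathit{id}$ each adorned rule derives exactly the answers of its adornment. Hence $h_{\sf cont}$-rejections lose nothing, and the finitely many adornments give a finite UCQ for each $q$.

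Two small repairs are needed. First, the bullet ``otherwise $R'$ is an exact duplicate'' is false as stated, since a non-duplicate rule whose head is already recursive is added. Your finiteness bullet makes it superfluous, so the hedged ``main obstacle'' paragraph can also go. Second, the claim that every element of $F$ is eventually covered despite pruning needs a monotonicity step you only gesture at. Suppose an adornment $c$ was skipped because an existing $\rho'$ contains it via a head-to-head homomorphism $h$. Then in any later composition, $\sigma\circ h$ unifies the problem posed with $\rho'$, so it factors through that problem's mgu $\sigma'$ as $\tau\circ\sigma'$. The map $\tau$ then sends the composite built from $\rho'$ into the one that would have been built from $c$.

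Both this step and the use of \cref{thm:ucq_containment} require reading ``subsumed'' as containment, i.e.\ a head-to-head homomorphism not necessarily injective on head variables. This is also how the paper implicitly uses it.
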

\begin{proof}
    Let $\Pi_b$ be a non-recursive datalog program equivalent to $\Pi$ and let $\Pi'$ be the output of the new algorithm. 
By Propositions~\ref{prop:equivalent_non_rec} and \ref{prop:equivalent_non_rec_algo}, $\Pi_b$ is equivalent to a subset S of $\Pi'$. Every rule in $S^{-1}=\Pi'\setminus S$ must be contained by S. Since every rule in $\Pi'$ is equivalent to a CQ, every rule in $S^{-1}$ is equivalent to exactly one rule in $S$. It must then be the case that $S^{-1}=\emptyset$ by definition of the use of the $h_{\sf cont}$ membership checking function.
\end{proof}
We illustrate the process with the following examples. 
\begin{example}
    \label{ex:bounded}
    Consider the following program:
    \begin{alignat*}{3}
        & R_1 : & \ r(y) &\ \gets e(x, y).
        \\
        &
        % \quad\quad
        R_2 : &\ r(y) &\ \gets r(x), e(x, y).
    \end{alignat*}
    This program is bounded and is in fact equivalent to one with only a single rule $R_1$.
    Running the algorithm with $\it id$ and $h_{\sf cont}$, we first generate the adorned rule $R_1' : r_{r(y) \gets e(x, y)}(y) \gets e(x, y)$.
    Then, the algorithm would attempt to add $\rho_1: r(y) \gets e(x, y)$ as the adornment of $r(x)$ in the second rule, which then generates the adornment $\rho_2 : r(y) \gets e(x', x), e(x, y)$ for the head atom.
    However, \emph{$\rho_2$ is subsumed by $\rho_1$}, so the algorithm terminates and outputs a program with one rule $R_1'$.
\end{example}
\label{sec:app-trendy-example}
\begin{example}
    We take the following program from \cite[Example~12.5.6]{AHV1995}:
    \begin{align*}
        &R_1 : {\sf Buys}(x,y) \gets {\sf Likes}(x,y). \\
        &
        R_2 : {\sf Buys}(x,y) \gets {\sf Trendy}(x), {\sf Buys}(\_, y).
    \end{align*}
    This program is bounded.
    After running the algorithm, we obtain the following equivalent non-recursive program:
    \begin{align*}
        &R_1' : {\sf Buys}_{{\sf Buys}(x,y) \gets {\sf Likes}(x,y)}(x,y) \gets {\sf Likes}(x,y).\\
        &R_2' : {\sf Buys}_{{\sf Buys}(x,y) \gets {\sf Trendy}(x), {\sf Likes}(\_,y)}(x,y) \gets {\sf Trendy}(x), {\sf Buys}_{{\sf Buys}(x,y) \gets {\sf Likes}(x,y)}(\_, y).
    \end{align*}
    If we continue and attempt to add the adornment $\rho_2 : {\sf Buys}(x,y) \gets {\sf Trendy}(x), {\sf Likes}(\_,y)$ of the head atom of $R_2'$ as an adornment of ${\sf Buys}(\_, y)$, we generate the adornment $\rho_3 : {\sf Buys}(x, y) \gets {\sf Trendy}(x), {\sf Trendy}(x'), {\sf Likes}(\_, y)$. However, $\rho_3$ is subsumed by $\rho_2$, so the algorithm terminates.
\end{example}
From \cref{prop:bounded_eq_terminate}, we have that, even though boundedness is undecidable,
\begin{corollary}
    Bounded datalog programs are recursively enumerable.
\end{corollary}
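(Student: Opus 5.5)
We show that the set of bounded datalog programs is recursively enumerable by giving a semi-decision procedure built from \cref{prop:bounded_eq_terminate}. Datalog programs over a fixed (countable) vocabulary can be effectively enumerated as finite strings, say $\Pi_1, \Pi_2, \dots$. The idea is to run \cref{alg:adornment} with $\it id$ as the adornment relaxation function and $h_{\sf cont}$ as the membership checking function on all of them in a dovetailed fashion: in stage $n$, perform $n$ steps of the algorithm on each of $\Pi_1,\dots,\Pi_n$, and output $\Pi_i$ as soon as its run terminates.

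The first step is to confirm that each individual step of \cref{alg:adornment} with these parameters is effective. The relaxation function $\it id$ is trivially computable. Computing mgus is effective, as recalled in the preliminaries. The membership check $h_{\sf cont}$ is also effective. Its clause (2) is a syntactic comparison up to renaming. Its clause (1) requires two things: deciding whether $q_\rho$ is recursive in the current finite program $\Pi'$, which is reachability in the finite dependency graph, and deciding subsumption between two finite rules with only EDB atoms in the body. Subsumption amounts to searching for a homomorphism that is a renaming on the head variables, which is decidable by finite search (CQ containment). Finally, one full round of the repeat loop ranges over finitely many tuples $(\rho_1,\dots,\rho_\lambda)$, since $\Pi'$ is finite at every point. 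Hence every iteration halts, and detecting that a fixpoint has been reached (no rule added in a round) is effective.

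The second step is correctness of the enumeration. By \cref{prop:bounded_eq_terminate}, the run on $\Pi_i$ terminates if and only if $\Pi_i$ is bounded. Consequently, every bounded program is eventually output by the dovetailed procedure, at the stage after its finitely many steps have been simulated, and no unbounded program is ever output. Equivalently, the procedure is a semi-decision procedure for boundedness: on input $\Pi$, it accepts exactly when the algorithm halts.

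The only real obstacle is the effectiveness of $h_{\sf cont}$ at each step, in particular the subsumption test. Here I would note that adornments under $\it id$ are plain conjunctive queries over EDB predicates, so the test reduces to the classical decidability of CQ containment. Everything else is the standard dovetailing argument. As a remark, this gives recursive enumerability only and not decidability, consistent with the known undecidability of boundedness.
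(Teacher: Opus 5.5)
Your proposal is correct and follows the paper's route: the corollary is read off directly from \cref{prop:bounded_eq_terminate}, since running \cref{alg:adornment} with $\mathit{id}$ and $h_{\mathsf{cont}}$ halts exactly on bounded programs and therefore serves as a semi-decision procedure for boundedness. What you add is detail the paper leaves implicit. You check that each step is effective: mgu computation, the recursiveness test on the finite dependency graph, and the subsumption test by homomorphism search. You also use dovetailing to turn the semi-decision procedure into an explicit enumeration.
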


We note that this does not have to be shown using our framework of adornment. The essence is to perform inlining (which generates rules with only EDB atoms in the body) and eliminate rules that are subsumed by another rule in the flavor of the membership checking function $h_{\sf cont}$.

Nevertheless, the algorithm does not terminate on an unbounded datalog program. To address this, we propose relaxing the procedure using the adornment relaxation function $g_k$ (instead of $id$). This function introduces a budget $k$, retaining all EDB atoms in the adornment (like the identity function does) until there are more than $k$, at which point it starts to behave as $\galgo$. This guarantees termination for all programs but loses the guarantee that all bounded datalog programs can be rewritten into non-recursive equivalents when a rule in the output program requires more than $k$ EDB atoms in an adornment. However, one can choose a concrete value of $k$ based on the time and space budget allocated for the procedure. If the program is unbounded, or if the equivalent non-recursive program is too large (exceeding the specified budget), the algorithm gracefully degrades to use $\galgo$ instead, as it is either impossible or practically unbeneficial to find the non-recursive equivalent. Otherwise, the algorithm will successfully rewrite all bounded datalog programs that fit within the budget k into non-recursive forms. We illustrate this idea by the following example.

\begin{example}
    We take $g_2$ ($g_k$ with $k = 2$) as the adornment relaxation function and consider again the program in \cref{ex:bounded}, but with $R_1$ replaced by $r(x) \gets b(x),$ where $b$ is an EDB relation. This program is no longer bounded. The algorithm starts by generating two EDB-bounded rules:
    \begin{alignat*}{3}
        & R_1' : r_{r(x)\gets b(x)}(x) \gets b(x). \\
        % \quad\quad
        & R_2' : r_{r(y)\gets b(x),e(x, y)}(y) \gets r_{r(x)\gets b(x)}(x), e(x, y).
    \end{alignat*}
    Then, when the algorithm attemps to add $r(y)\gets b(x),e(x, y)$ as the adornment of $r(x)$ in the body of $R_2$, it generates the adornment $r(y) \gets b(x_0), e(x_0, x), e(x, y)$, which now has more than 2 body atoms. Therefore, the adornment relaxation function would now behave as $\galgo$ and transform it to $r(y) \gets e(\_, y)$. We generate the following two additional rules before the procedure terminates:
    \begin{align*}
        & R_3' : r_{r(y) \gets e(\_, y)}(y) \gets r_{r(y)\gets b(x),e(x,y)}(x), e(x, y).\\
        & R_4' : r_{r(y) \gets e(\_, y)}(y) \gets r_{r(y) \gets e(\_, y)}(x), e(x, y).
    \end{align*}
\end{example}

\section{Towards FPT Upper Bounds on the Sizes of IDB Relations}
\label{sec:UpperBoundIdbs}

As a second use case of our framework, we begin our investigation on upper bounds on the size of the IDB relations 
produced by $\Pi$ over an arbitrary (extensional) database instance $D$.
For a given IDB predicate $p$ in $\Pi$, we aim at computing an edge-cover width $w$ in terms of EDB predicates. More specifically, if we have an upper bound $N$ on the size of a single EDB relation in $D$ and the size $|\Pi|$ of the program $\Pi$, then we want to conclude that the size of the IDB relation $p$ resulting from the evaluation of $\Pi$ on $D$ is bounded by $f(\Pi) \cdot N^w$. In other words, over datalog programs with bounded widths, the output size is fixed-parameter tractable with respect to the size of the program.
For this purpose, we again use the adornment relaxation function $\galgo$ 
and membership checking function $h_{\sf eq}$, which allowed us to establish the equivalence between 
an original datalog program $\Pi$ and the EDB-bounded program $\Pi'$ resulting from Algorithm~\ref{alg:adornment}
in Theorem~\ref{thm:equivalent}.

\subsection{Edge-Cover Width}

Towards our goal of providing upper bounds on sizes of IDB relations via adornments, we use a notion analogous to edge cover numbers for conjunctive queries, which is naturally related to size bounds.
We start with the following simple example of an EDB-bounded datalog program.

\begin{example} \label{ex:edge-cover}
    Consider the datalog program $\Pi$ with the following rules:
    \begin{alignat*}{2}
        p(x, y, z) & \gets e(x, y, w), e(x, z, w), e(y, z, w).
        \\
        % \quad\quad
        q(x, y) & \gets p(x, y, \_),
    \end{alignat*}
    where $e$ is an EDB relation. $\Pi$ is equivalent to the following EDB-bounded datalog program $\Pi'$:
    \begin{align*}
        & p_{p(x, y, z) \gets e(x, y, \_), e(x, z, \_), e(y, z, \_)}(x, y, z) \gets e(x, y, w), e(x, z, w), e(y, z, w), \\
        & q_{q(x, y) \gets e(x, y, \_)}(x, y) \gets p_{p(x, y, z) \gets e(x, y, \_), e(x, z, \_), e(y, z, \_)}(x, y, \_).
    \end{align*}
    We note that this EDB-bounded datalog program $\Pi'$ has the following property: if  $(a,b)$ is a tuple in the IDB relation $q$, then there exists some $c$, such that the EDB relation $e$ contains the tuple $(a,b,c)$. In other words, the adornment yields a restriction on the values that can possibly occur in an IDB tuple, and it also yields an upper bound on the size of the IDB relation. In this simple example, the size of the IDB relation $q_\rho$ is clearly bounded by the size of the EDB relation $e$.
\end{example}

The upper bounds on the sizes of IDB relations is captured by the notion of ``edge-cover width'' from a hypergraph, which we define next.

\begin{definition}[hypergraph associated with a rule]
For every rule $q(\bar s) \gets p_1(\bar t_1), \cdots, p_\mu(\bar t_\mu)$, analogously to the case for conjunctive queries,
we define a hypergraph associated with the rule as $H = (V(H), E(H))$, where
the set of vertices $V(H) = \bigcup_{i = 1}^{\mu} \left(\vars(\bar t_i)\right)$, i.e., the set of variables,
and
the set of hyperedges $E(H) = \{ \vars(\bar t_i) \mid 1 \leq 1 \leq \mu \}$.
We let $V_{\sf out} = \vars(\bar s)$ denote the set of output variables, i.e., the set of variables in the head atom $q(\bar s)$.
\end{definition}
    
In the case of a conjunctive query $q$ with associated hypergraph $H = (V(H), E(H))$ and output variables $V_{\rm out}$, the simplest size bound is known to be given by (integral or fractional) edge covers of 
$V_{\sf out}$ using the hyperedges in $E(H)$, formally defined as follows.
\begin{definition}
    An \emph{(integral/fractional) edge cover} of $H = (V(H),E(H))$ with output attributes $V_{\rm out}$ is any assignment $(x_e)_{e \in E(H)}$ that is a feasible solution to the following linear program:
    \begin{alignat*}{2}
        \kappa
        \;=\;
        \min \quad & \sum_{e \in E(H)} x_e \\
        \text{s.t.} \quad
        & \sum_{e \in E(H) :\, v \in e} x_e \;\ge\; 1
        && \qquad \forall v \in V_{\sf out}, \\
        & x_e \in \{0,1\}
    && \qquad \forall e \in E(H), \text{ for integral edge cover, or} \\
    & 0 \le x_e \le 1 
    && \qquad \forall e \in E(H), \text{ for fractional edge cover.}
    \end{alignat*}
    $\kappa$ is called the \emph{(integral/fractional) edge cover number}.

    A \emph{full (integral/fractional) edge cover} of $H = (V(H), E(H))$ is an (integral/fractional) edge cover of $H$ with output attributes $V_{\sf out} = V(H)$.
    
\end{definition}

With the (integral or fractional) edge cover, we can further calculate a size bound of $q$:
\begin{alignat*}{2}
    |q|
    \;\leq\;
    \min \quad & \prod_{e \in E(H)} |e|^{x_e} \\
    \text{s.t.} \quad
    & \sum_{e \in E(H) :\, v \in e} x_e \;\ge\; 1
    && \qquad \forall v \in V_{\sf out}, \\
    & x_e \in \{0,1\}
    && \qquad \forall e \in E(H), \text{ for integral edge cover, or} \\
    & 0 \le x_e \le 1 
    && \qquad \forall e \in E(H), \text{ for fractional edge cover.}
\end{alignat*}
Assuming all EDB relations have at most $N$ tuples, the number of tuples in the final result is bounded by $N^{\kappa}$, where $\kappa$ is the edge cover number.

For general datalog programs, we define an analogous measure called ``edge-cover width'' based on the adornments, as follows.

\begin{definition}[edge-cover width]
    Let $q_\rho(\bar{s})$ be an IDB relation in an EDB-bounded datalog program $\Pi'$.
    Let $H_\rho$ be the hypergraph associated with $\rho$ and $V_{\sf out}$ be the set of output variables of $\rho$.
    Then, the (integral or fractional) \emph{edge-cover width} $\ew(\rho)$ of the adornment $\rho$ is defined by the edge cover number $\kappa$ of $H_\rho$ to cover the output variables $V_{\sf out}$.
    
    The edge-cover width $\ew(q_\rho)$ of the adorned predicate $q_\rho$ is defined as
    $ \ew(q_\rho) = \ew(\rho) $. 
    
    Moreover, the \emph{edge-cover width} $\ew(\Pi')$ \emph{of an EDB-bounded datalog program} $\Pi'$ 
    is given by the maximum edge-cover width $\ew(q_\rho)$ among all EDB-adorned IDB relations $q_\rho$ in $\Pi'$.
\end{definition}

Below, we generalize the observations from \cref{ex:edge-cover}. That is, in an EDB-bounded program, the edge-cover width allows us to establish an upper bound on the sizes of IDB relations and a restriction on the values that can possibly occur in tuples of an IDB.

\begin{theorem}
    \label{thm:single-adorned-size}
    Let $\Pi$ be an EDB-bounded datalog program, and suppose that we evaluate $\Pi$ over a database instance $D$ 
    of size $|D| = N$. Let $k = \ew(q_\rho)$ denote the fractional edge cover width 
    of $q_\rho$. Then every IDB relation $q_\rho$ has size
    $ | q_\rho | \leq  N^{k}. $
\end{theorem}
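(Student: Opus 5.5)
The plan is to reduce the bound on $|q_\rho|$ to the classical AGM (fractional edge cover) bound for the single conjunctive query $\rho$, using EDB-boundedness to transfer the bound from $\rho$ to $q_\rho$.

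First, I would show that every tuple of $q_\rho$ is an output of $\rho$ on $D$, i.e., $\sem{q_\rho(\Pi,D)} \subseteq \sem{q(\{\rho\},D)}$. Every fact $q_\rho(\bar t)$ in the least fixpoint has a proof tree whose top-most edges are labeled by some rule $R$ of $\Pi$ with head predicate $q_\rho$. So $q_\rho(\bar t)$ is derived by $R$. Since $\Pi$ is EDB-bounded, $R$ is bounded by $\rho$, and therefore $q(\bar t)$ is derived by the single-rule program $\{\rho\}$ over $D$. Note that all rules with head predicate $q_\rho$ share the same adornment $\rho$, since the adornment is part of the predicate name. This step is a direct unfolding of the definitions.

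Second, I would bound $|\sem{q(\{\rho\},D)}|$. The body of $\rho$ consists only of EDB atoms whose arguments are variables or ``$\_$''. Each ``$\_$'' is a fresh existential variable, so it never lies in $V_{\sf out}$. The head may contain constants or repeated variables, but the number of output tuples equals the number of distinct assignments to $V_{\sf out} = \vars$ of the head. Hence it suffices to bound the projection onto $V_{\sf out}$ of the join of the body atoms. Each hyperedge $\vars(\bar t_i)$ corresponds to the projection of an EDB relation $e_i$ onto the non-wildcard positions, with repeated variables handled as a selection. Therefore each hyperedge relation has at most $|e_i| \le |D| = N$ tuples.

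Third, I would apply the AGM bound. Take an optimal fractional edge cover $(x_e)$ of $V_{\sf out}$ in $H_\rho$ with $\sum_e x_e = k$. The projection of the join onto $V_{\sf out}$ is contained in the join of the projected relations $\pi_{e\cap V_{\sf out}}(e)$. These projected relations form a fractional edge cover of the full hypergraph on $V_{\sf out}$ and still have size at most $N$. The AGM inequality then gives at most $\prod_e N^{x_e} = N^{k}$ tuples.

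The main obstacle is this last step. I have to make sure the AGM bound legitimately applies with output-only covers: covering only $V_{\sf out}$ must be justified by projecting each hyperedge onto $V_{\sf out}$, and the handling of repeated variables and wildcards must be sound. I would state the AGM bound (Atserias--Grohe--Marx) as a known result for full joins and derive the projected version via this containment argument.
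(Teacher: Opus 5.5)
Your proposal is correct and follows the paper's route. The paper only states that the theorem is an immediate consequence of the AGM bound; your argument supplies the two steps left implicit there: EDB-boundedness gives $\sem{q_\rho(\Pi,D)} \subseteq \sem{q(\{\rho\},D)}$, and the output-cover version of AGM follows from the full-join version by projecting each hyperedge onto $V_{\sf out}$ (your handling of wildcards, repeated variables, and head constants is sound).
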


The above theorem is an immediate consequence of the famous AGM bound~\cite{atserias_size_2013}. 
For the following restrictions on the values that may possibly occur in an IDB tuple, we
have to switch from the fractional to the integral version of $\ew$.

\begin{proposition}
\label{prop:cover-of-qrho}
Let  $\Pi$ be a datalog program, let $\Pi'$ be an EDB-bounded program equivalent to $\Pi$, and let 
$q_\rho$ be an IDB predicate in $\Pi'$,
and let $k = \ew(q_\rho)$ denote the integral edge-cover width.
Then there exist $k$ EDB relations $ e_1, \dots e_k $ with the following property:
for every tuple $\bar s$ in the IDB relation $q_\rho$, 
there exist tuples $\bar t_1 \in e_1, \dots, \bar t_k \in e_k$, such that  
every value $s_\ell$ in $\bar{s}$ occurs in one of the tuples $\bar t_1, \dots, \bar t_k$, i.e., 
there exist $i,j$ with $s_\ell = t_{ij}$.
\end{proposition}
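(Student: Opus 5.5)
The plan is to exploit the defining property of EDB-bounded programs. Every tuple in $q_\rho$ is also produced by the single-rule program $\rho$, and the body atoms of $\rho$ are EDB atoms whose hypergraph admits an integral edge cover of the output variables of size $k$.

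\textbf{Step 1: fix the cover.} Write $\rho$ as $q(\bar u) \gets e_1'(\bar a_1), \dots, e_\mu'(\bar a_\mu)$. Since $\ew(q_\rho) = \ew(\rho) = k$ is the integral edge cover number of $H_\rho$ with respect to $V_{\sf out} = \vars(\bar u)$, there is an optimal $0/1$ solution selecting exactly $k$ hyperedges. These correspond to $k$ body atoms, say $e_1(\bar a_{i_1}), \dots, e_k(\bar a_{i_k})$, such that every variable of $\bar u$ occurs in one of them. The predicates $e_1, \dots, e_k$ (possibly with repetitions) are the EDB relations claimed in the statement.

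\textbf{Step 2: transfer derived tuples to a satisfying assignment of $\rho$.} Let $\bar s \in \sem{q_\rho(\Pi',D)}$. Then $q_\rho(\bar s)$ is derived by some rule $R'$ of $\Pi'$ whose head predicate is $q_\rho$. Because $\Pi'$ is EDB-bounded, $q(\bar s)$ is derived by the single-rule program $\rho$ over $D$. So there is a ground substitution $\theta$ with $\theta(\bar u) = \bar s$, which sends every body atom of $\rho$ into $D$; the underscores are treated as fresh existential variables that $\theta$ also grounds. Set $\bar t_j := \theta(\bar a_{i_j}) \in e_j$ for $j = 1, \dots, k$.

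\textbf{Step 3: check the covering property.} Take a position $\ell$ of $\bar s$. If $u_\ell$ is a variable, then $u_\ell \in V_{\sf out}$, so it occurs at some position $m$ of some $\bar a_{i_j}$, and hence $s_\ell = \theta(u_\ell) = t_{jm}$. The step I expect to need care is a constant in the head: by Definition~\ref{def:adorned-atom}, the head $A$ of $\rho$ may contain constants, while body arguments are only variables or ``$\_$''. A constant head position is not a vertex of $H_\rho$ and need not appear in any EDB tuple. I would first check whether safety together with the construction of adornments excludes this case. Failing that, I would read the statement as concerning the values $s_\ell$ at variable positions of the head (the output variables), which is exactly what the edge-cover width measures. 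The remaining details are just bookkeeping on substitutions.
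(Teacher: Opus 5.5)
Your proof is correct and is essentially the paper's argument. You fix a minimum integral edge cover of $\vars(\bar u)$ by $k$ body atoms of $\rho$, use EDB-boundedness to conclude that every tuple of $q_\rho$ is produced by $\rho$, and read off the covering tuples; the paper phrases the middle step via the relaxed rule $\rho'$ that keeps only the cover atoms, so that $q_\rho$ lies in a projection of the join of $e_1,\dots,e_k$.

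Your concern about head constants is justified. Neither safety nor Algorithm~\ref{alg:adornment} rules them out: for example, $q(x,5) \gets e(x)$ yields an adornment $\rho$ with the constant $5$ in its head, and $5$ need not occur in $e$. The paper's proof tacitly makes the same restriction to variable positions that you propose.
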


\begin{proof}
    Let $q_\rho$ be an IDB predicate in $\Pi'$ and let 
$S = \{e_1, \dots, e_k\}$ be an integral edge cover of $\rho$.
We are assuming that $\Pi$ is an EDB-bounded datalog program. In particular, this means that 
the adornment $\rho$ bounds all rules with head predicate $q_\rho$, i.e., 
every tuple in the relation $q_\rho$ derived by program $\Pi$ over a database $D$
is also derived by  applying $\rho$ to $D$. Let $\rho'$ denote the rule obtained from $\rho$
by deleting all body atoms that are not in $S$. Clearly, every tuple 
derived by applying $\rho$ to $D$ is also derived by applying $\rho'$ to $D$.
In other words, the IDB relation $q_\rho$ is
obtained by joining the EDB relations $e_1, \dots, e_k$ followed by projection.
In particular, this means that every tuple $\bar s$ in the IDB relation $q_\rho$ gets its 
values from tuples $\bar t_1 \in e_1, \dots, \bar t_k \in e_k$, i.e., 
for every value $s_n$ in $\bar{s}$, 
there exist $i,j$ with $s_n = t_{ij}$.
\end{proof}

By taking the equivalence of a datalog program with an EDB-bounded program into account, 
we immediately get a further corollary.

\begin{proposition}
\label{prop:cover-of-q}
Let  $\Pi$ be a datalog program and let $\Pi'$ be an EDB-bounded program equivalent to $\Pi$. Moreover, let $q$ be an IDB predicate in $\Pi$
and let $k = \ew(q)$ denote the integral edge-cover width.
Then, for every tuple $\bar{s}$ in the IDB relation $q$,  there exist $k$ tuples $\bar{t}_1, \dots, \bar{t}_k$ 
in EDB relations $e_1, \dots, e_k$, respectively, such that every value $s_\ell$ in $\bar{s}$
occurs in some tuple $\bar{t}_i \in e_i$, i.e., there exist $i,j$, 
s.t.\ $t_{ij} = s_\ell$.
\end{proposition}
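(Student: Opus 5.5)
The plan is to reduce the statement to \cref{prop:cover-of-qrho} via the equivalence in \cref{def:equivalent}. First I need to fix what $\ew(q)$ means for an unadorned IDB predicate $q$ of $\Pi$. Consistent with how $\ew(\Pi')$ was defined, I take it to be $\ew(q) = \max_{\rho} \ew(q_\rho)$, the maximum integral edge-cover width over all adorned variants $q_\rho$ occurring in $\Pi'$. I expect this is the main subtlety: the statement fixes $k$ but does not fix the relations $e_1,\dots,e_k$. Different adorned variants may be covered by different EDB relations, and a variant may need fewer than $k$ edges. So the existential over $e_1,\dots,e_k$ must be read per tuple, or at least per adornment, and I will phrase the argument that way.

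\textbf{Step 1.} Take an arbitrary database $D$ and a tuple $\bar s \in \sem{q(\Pi,D)}$. By equivalence of $\Pi$ and $\Pi'$,
$$\sem{q(\Pi,D)} = \bigcup_\rho \sem{q_\rho(\Pi',D)},$$
so there is some adornment $\rho$ with $\bar s \in \sem{q_\rho(\Pi',D)}$.

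\textbf{Step 2.} Let $k_\rho = \ew(q_\rho) \le k$. Apply \cref{prop:cover-of-qrho} to $q_\rho$. This gives EDB relations $e_1,\dots,e_{k_\rho}$, namely the atoms of a minimum integral edge cover of the output variables of $\rho$, together with tuples $\bar t_i \in e_i$ such that every value of $\bar s$ equals some $t_{ij}$.

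\textbf{Step 3.} Pad to exactly $k$ tuples if $k_\rho < k$. I would repeat $\bar t_1$ under $e_1$, which trivially preserves the covering property. The only degenerate case is $k_\rho = 0$. This can happen only if $\rho$ has no output variables, i.e.\ the head is ground or nullary, and then the covering condition on the values of $\bar s$ holds vacuously for any choice of tuples. In that case nothing needs to be shown beyond noting that values that are constants in the head are not required to come from EDB tuples; alternatively, one restricts attention to values bound to variables, matching the proof of \cref{prop:cover-of-qrho}. With these points settled, the statement follows directly from \cref{prop:cover-of-qrho}.
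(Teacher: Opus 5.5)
Your proposal is correct and matches the paper's proof. Like the paper, you use the equivalence to place $\bar s$ in some $\sem{q_\rho(\Pi',D)}$, note that $\ew(q_\rho)\le\ew(q)$ because $\ew(q)$ is the maximum over the adornments of $q$, and then invoke \cref{prop:cover-of-qrho}. Your padding step and degenerate-case discussion are extra care that the paper sidesteps by writing ``(at most) $k$'' relations. One small slip: when the head is ground and contains constants, the covering condition is not literally vacuous, so it is your fallback reading (only values bound to head variables need covering, as in the proof of \cref{prop:cover-of-qrho}) that actually makes that case go through.
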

\begin{proof}
    Let $\bar{s}$ be an arbitrary tuple in the IDB relation $q$. By the equivalence of $\Pi$ and the EDB-bounded program $\Pi'$, 
there exists an adornment $\rho$, such that $\bar{s}$ is also a tuple in the IDB relation $q_\rho$. 
By the definition of the edge-cover width, we have  $k = \ew(q) \geq \ew(q_\rho)$.
By \cref{prop:cover-of-qrho}, there exist (at most) $k$ EDB relations $e_1, \dots, e_k$, such that 
every value $s_n$ in $\bar s$ occurs in some tuple $t_i \in e_i$ with $i \in \{1, \dots, k\}$.
\end{proof}

We have just seen that the edge-cover width 
$\ew(q_\rho)$
of the IDB-predicate $q_\rho$ in an EDB-bounded program $\Pi'$ 
allows us to bound the size of the 
IDB relation $q_\rho$ and to restrict the possible values occurring in $q_\rho$
when evaluating $\Pi'$ over a given EDB $D$. Hence, to make the bound as tight as possible, 
it would be desirable to get $\ew(q_\rho)$ as small as possible. 
Below we show that the EDB-adorned program $\Pi'$ according to Algorithm~\ref{alg:adornment}
is optimal in this sense, i.e., it minimizes  $\ew(q)$ for every IDB predicate $q$ of the original datalog program $\Pi$.

\begin{theorem}
\label{thm:minWidth}
Let  $\Pi$ be a datalog program and let $\Pi'$  
be the EDB-adorned program $\Pi'$ produced by  
Algorithm~\ref{alg:adornment}, when executed with relaxation function $\galgo$
and membership checking function $h_{\sf eq}$.
Then, for every IDB predicate $q$ in $\Pi$, the integral edge-cover width
$\ew(q)$ for the program $\Pi'$ is minimal over all EDB-bounded programs 
equivalent to $\Pi$.
More precisely, let $\Pi''$ be another EDB-bounded program equivalent to $\Pi$. Moreover, let  
$\ew_{\Pi'}(q)$ and $\ew_{\Pi''}(q)$ denote the integral 
edge-cover width of $q$ in 
$\Pi'$ and $\Pi''$, respectively.
Then $\ew_{\Pi'}(q) \leq \ew_{\Pi''}(q)$ holds.  
\end{theorem}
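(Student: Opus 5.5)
Here $\ew_{\Pi'}(q)$ is the maximum of $\ew(q_\rho)$ over all adorned versions $q_\rho$ of $q$ in $\Pi'$ (this is how the proof of \cref{prop:cover-of-q} uses it). So I would fix an arbitrary adornment $\rho$ of $q$ occurring in $\Pi'$ and show that $\ew(\rho) \le \ew_{\Pi''}(q)$. The strategy is to build a witness database on which the tuples of $q_\rho$ can only be justified by adornments of $q$ in $\Pi''$ that are at least as wide as $\rho$.

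\textbf{Step 1 (canonical database).} Take a rule $R'$ of $\Pi'$ with head $q_\rho(\bar u)$ and unfold its derivation. This means recursively inlining, along the construction history in \cref{alg:adornment}, the rules $R_1,\dots,R_\lambda$ that were used to adorn its body atoms. The result is a conjunctive query $Q$, i.e.\ an inlining of $\Pi$, with head $q(\bar u)$. Because $\galgo$ only projects away non-head positions and drops atoms dominated by other atoms, the body of $\rho$ is exactly the set of $Q$-atoms restricted to head variables, with dominated atoms removed. This should follow by induction over the construction, mirroring the boundedness part of the proof of \cref{thm:equivalent}.

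Now let $D_Q$ be the canonical database of $Q$, with every variable frozen to a distinct constant. Then $\bar u$ (frozen) lies in $\sem{q(\Pi,D_Q)}$. By equivalence of $\Pi''$ with $\Pi$, it also lies in $\sem{q_{\rho''}(\Pi'',D_Q)}$ for some adornment $\rho''$ of $\Pi''$. Since $\Pi''$ is EDB-bounded, $\rho''$ fires on $D_Q$ and produces the frozen $\bar u$. This gives a homomorphism $\eta$ from $\rho''$ into $D_Q$ that maps the head of $\rho''$ onto the frozen $\bar u$.

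\textbf{Step 2 (transfer covers).} Take a minimum integral edge cover of the head variables of $\rho''$, consisting of $k=\ew(\rho'')$ atoms. Their $\eta$-images are $k$ atoms of $Q$ whose head-variable positions contain every variable of $\bar u$. This requires that a repeated head variable of $\rho''$ maps consistently, which holds because the head of $\rho''$ maps to $\bar u$. Project each image atom to head variables, giving an atom of the form found in the body of $\rho$ before the removal of dominated atoms. If such an atom was removed, replace it by a dominating atom, which contains a superset of its head variables. The result is an integral cover of $V_{\sf out}(\rho)$ of size at most $k$. Hence $\ew(\rho)\le \ew(\rho'')\le \ew_{\Pi''}(q)$.

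\textbf{Main obstacle.} The delicate step is Step 1: proving that $\rho$ really coincides with the head-projection of a genuine unfolding $Q$. Two things can go wrong. First, the mgu may identify variables. Second, $\galgo$ is applied at each stage, so variables that are non-head at an intermediate stage are replaced by ``$\_$'' early. A variable that is later unified with the head cannot be one of these, because the head of each $R_i$ is kept intact and only non-head positions become wildcards. I would prove, by induction on the construction order, an invariant of the following form: every body atom of $\rho$ is the head-projection of some atom of $Q$, and every head-variable occurrence of $Q$'s atoms is reflected in some atom of $\rho$, up to domination. Only this two-sided correspondence is needed for the cover transfer. The variable-renaming bookkeeping of \cref{sec:app-sim-unifier} has to be handled with care here.
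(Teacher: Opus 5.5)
Your proposal is correct and follows essentially the same route as the paper. Your Step~1 is the paper's lemma that each adornment $\rho$ in $\Pi'$ arises by applying $\galgo$ to the leaf EDB atoms of a generalized proof tree, including the check that stagewise redundancy elimination agrees with one-shot elimination; your Step~2 is the paper's canonical-database argument combined with its corollary that dominated atoms can be swapped for dominating ones in a minimum cover, and the only difference is that you transfer the cover of $\rho''$ directly back to $\rho$ where the paper argues by contradiction that at least $\ew(\rho)$ facts of the canonical database are needed to cover the frozen head constants.
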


\begin{proof}[Proof sketch]
    If \cref{alg:adornment} gives an edge-cover width $\ew_{\Pi'}(q) = k$ for a predicate $q$, 
    we show a way to construct an EDB instance such that, in some output tuple, there are $k$ pairwise distinct 
    attribute values that have to be covered by $k$ different tuples from this EDB instance.
\end{proof}

\cref{thm:minWidth} allows us to refer to the 
integral edge-cover width of the 
EDB-bounded datalog program $\Pi'$
constructed by \cref{alg:adornment} 
with functions $\galgo$ and $h_{\sf eq}$ as the integral {\em edge-cover width of program} $\Pi$.
We note that, in Theorem~\ref{thm:minWidth}, it is crucial that we define equivalence between the original program $\Pi$ 
and the 
adorned EDB-bounded program $\Pi'$ in terms of {\em all} IDB predicates in $\Pi$ and $\Pi'$. 
If we were only interested in the behavior of $\Pi$ and $\Pi'$ on some distinguished {\em output predicate} $q$,
the edge-cover width of the EDB-bounded program produced by 
\cref{alg:adornment} is not guaranteed to be minimal over all datalog programs
which agree with $\Pi$ on $q$ but may possibly use a completely different
set of additional IDB predicates. 
This is illustrated by the next example. Of course, by the undecidability of 
datalog equivalence~\cite{DBLP:journals/jlp/Shmueli93}, we cannot expect to avoid such situations. 

\begin{example} \label{ex:inline}
    Consider the following datalog program $\Pi$:
    \begin{align*}
        p(w, x, y, z) & \gets e(w), e(x), e(y), e(z),  \\
        % \quad\quad
        q(w) & \gets p(w, \_, \_, \_).
    \end{align*}
    and assume that $q$ is the distinguished output predicate.
    Applying \cref{alg:adornment} with functions  $\galgo$ and  $h_{\sf eq}$, we obtain an EDB-bounded datalog program $\Pi'$ with the following rules:
    \begin{align*}
        & {p}_{p(w, x, y, z) \gets e(w), e(x), e(y), e(z)}(w, x, y, z) \gets e(w), e(x), e(y), e(z), \\
        & {q}_{q(w) \gets e(w)}(w) \gets {p}_{p(w) \gets e(w), e(x), e(y), e(z)}(w, \_, \_, \_).
    \end{align*}
    The adorned IDB relation ${p}_{p(w, x, y, z) \gets e(w), e(x), e(y), e(z)}$ and thus the program $\Pi'$ have $\ew(\Pi')=4$. However, the edge-cover width of the output predicate ${q}_{q(w) \gets e(w)}$ is only $1$. 
    If we perform a simple inlining of the first rule of $\Pi$ into the second before applying \cref{alg:adornment}, 
    we obtain a program $\Pi''$ with only one rule
    \( q_{q(w) \gets e(w)}(w) \gets e(w), \)  
    and hence get edge-cover width $\ew(\Pi'') = 1$.

    This example illustrates that the width defined on the overall datalog program \textit{with a dedicated output predicate} $q$ may change if inlining is employed to eliminate an auxiliary IDB relation with a large edge-cover width (in this case, $p$). Nevertheless, the edge-cover width of $q$ remains invariant and the algorithm still correctly returns the minimum in either case.
\end{example}

As a final remark to conclude this subsection, notice that in the 
Propositions \ref{prop:cover-of-qrho} and \ref{prop:cover-of-q} as well as in 
\cref{thm:minWidth}, it was crucial to take the {\em integral}
edge-cover width. From now on, unless explicitly specified otherwise,
``edge cover'' will always refer to the fractional one.

\subsection{Size Bounds}

We have seen that, for each IDB relation $q$ in some datalog program $\Pi$, in the equivalent EDB-bounded datalog program $\Pi'$ constructed by \cref{alg:adornment}, there are some EDB-bounded IDB relations in the form of $q_\rho$ in $\Pi'$.
In \cref{thm:single-adorned-size}, we have shown that, for each $q_\rho$ in $\Pi'$, $|q_\rho| \leq N^{\ew(q_\rho)}$.
\cref{thm:equivalent} further suggests that the tuples in $q$ will be exactly the union of the tuples in all $q_\rho$.
Therefore, we immediately get the following form of upper bound on the size of $q$:

\begin{proposition}
    For an IDB relation $q$ in a datalog program $\Pi$, let $\Pi'$ be the equivalent EDB-bounded datalog program 
    constructed by \cref{alg:adornment} with functions $\galgo$ and  $h_{\sf eq}$. 
    Then,
    $|q| \leq f(\Pi) \cdot N^{\ew(q)},$
    where $f(\Pi)$ is the number of IDB relations of the form $q_\rho$ in $\Pi'$ (or, the number of different adornments $\rho$ that $q$ takes in $\Pi'$).
\end{proposition}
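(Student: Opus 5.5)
The bound follows by combining the equivalence of $\Pi$ and $\Pi'$ from \cref{thm:equivalent} with the per-predicate size bound from \cref{thm:single-adorned-size}. I take $\ew(q)$ to mean the maximum of $\ew(q_\rho)$ over all adornments $\rho$ that $q$ carries in $\Pi'$. This is consistent with its use in the proof of \cref{prop:cover-of-q}, where $\ew(q) \geq \ew(q_\rho)$ is used. Throughout, $N$ denotes the database size, as in \cref{thm:single-adorned-size}.

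First, fix a database $D$ with $|D| = N$. By \cref{thm:equivalent}, $\Pi'$ is EDB-bounded and equivalent to $\Pi$, so
\[
\sem{q(\Pi,D)} = \bigcup_{\rho} \sem{q_\rho(\Pi',D)} .
\]
The union ranges only over the adornments $\rho$ that actually label a head predicate $q_\rho$ in $\Pi'$. Algorithm~\ref{alg:adornment} terminates, which the paper assumes because it returns a finite program $\Pi'$. Hence this index set is finite, and its cardinality is exactly $f(\Pi)$.

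Second, apply the union bound:
\[
|q| \leq \sum_{\rho} |q_\rho| .
\]
For each $\rho$, \cref{thm:single-adorned-size} applies to the EDB-bounded program $\Pi'$ and gives $|q_\rho| \leq N^{\ew(q_\rho)}$. If $N \geq 1$, each term is at most $N^{\ew(q)}$, since $\ew(q_\rho) \leq \ew(q)$ and $N^x$ is monotone in $x$. If $N = 0$, the database is empty, and since Datalog programs here contain no facts, every $q_\rho$ is empty as well. Summing over the $f(\Pi)$ adornments then yields $|q| \leq f(\Pi) \cdot N^{\ew(q)}$.

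The only non-routine point is ensuring that $f(\Pi)$ depends on the program alone and not on $D$. This holds because the set of adornments is produced by Algorithm~\ref{alg:adornment} purely syntactically, before any database is fixed. I would justify termination and finiteness by the following count. After $\galgo$, every body variable of an adornment is a head variable and duplicate-dominated atoms are removed. Hence each adornment is a set of EDB atoms whose arguments are drawn from at most $\ar(q)$ head variables or ``$\_$''. Up to renaming there are only finitely many such rules, bounded by a function of the schema. Because $h_{\sf eq}$ rejects renamed duplicates, the fixpoint is reached after finitely many additions.
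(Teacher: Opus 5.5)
Your proof is correct and matches the paper's argument: \cref{thm:equivalent} writes $q$ as the union of the $q_\rho$ in $\Pi'$, \cref{thm:single-adorned-size} bounds each by $N^{\ew(q_\rho)} \leq N^{\ew(q)}$, and summing over the $f(\Pi)$ adornments gives the claim. Your extra remarks on the degenerate case $N=0$ and on termination go beyond what the paper states, since it takes the finiteness of $\Pi'$ for granted and only bounds the number of adornments later in \cref{thm:num-adornments-naive}.
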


Note that \cref{alg:adornment} does not rely on any property of the data in the EDB. Therefore, the coefficient $f(\Pi)$ depends solely on the schema of the original program $\Pi$.  But, how can we determine this coefficient, i.e., the number of adornments each IDB relation $q$ in $\Pi$ can take?  We first show that, due to the complexity of this task, it is unlikely that we can get an exact closed-form formula for this number without explicitly generating $\Pi'$:

\begin{theorem}
    \label{thm:sharp-p-hard}
    Given an IDB relation $q$ in a datalog program $\Pi$, the problem of counting the number of adorned relations in the form of $q_\rho$ in the equivalent EDB-bounded datalog program $\Pi'$ generated by \cref{alg:adornment} with $\galgo$ and $h_{\sf eq}$, denoted by \numad, is \SP-hard.
\end{theorem}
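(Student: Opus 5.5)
We reduce from \pptcnf, the problem of counting the satisfying assignments of a positive partitioned 2-CNF formula, which is \SP-hard. An instance has two disjoint variable sets $X=\{x_1,\dots,x_n\}$ and $Y=\{y_1,\dots,y_m\}$ and clauses $(x_i \lor y_j)$ for $(i,j)$ in some set $C$. From such an instance we build, in polynomial time, a datalog program $\Pi_\varphi$ with a distinguished IDB predicate $q$. The goal is that the adornments $\rho$ for $q$ produced by \cref{alg:adornment} with $\galgo$ and $h_{\sf eq}$ are in bijection with the satisfying assignments of $\varphi$, possibly up to an easily computable additive or multiplicative correction.

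The construction lets each variable choose one of two distinct adornments independently. For every variable $v \in X \cup Y$ we take a fresh unary EDB pair $a_v^0, a_v^1$ and an IDB predicate $p_v(z)$ with the two rules $p_v(z) \gets a_v^0(z)$ and $p_v(z) \gets a_v^1(z)$. By the base case of \cref{alg:adornment}, $p_v$ receives exactly two adornments, one per truth value, and neither is removed by $\galgo$. A rule $r(z) \gets p_{v_1}(z), \dots, p_{v_\ell}(z)$ then gets one adornment per choice vector, namely the body $a_{v_1}^{b_1}(z),\dots,a_{v_\ell}^{b_\ell}(z)$. Because the EDB predicates are pairwise distinct, $\galgo$ deletes no atom here, so distinct choice vectors give syntactically distinct adornments. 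This already counts $2^{n+m}$ assignments.

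To enforce the clauses we discard assignments in which some clause $(x_i \lor y_j)$ has both variables false. I would avoid filtering and instead collapse all such bad choices onto one canonical adornment. Concretely, for each clause we add a shared EDB atom so that choosing $a_{x_i}^0$ and $a_{y_j}^0$ together generates a dominating atom $\mathit{bad}(z)$ that is a superset-restriction. Then $\galgo$'s removal step, which drops atoms imposing fewer restrictions than another atom of the same predicate, collapses the variable-specific atoms. A cleaner alternative keeps adornments distinct and counts instead: we realize each variable's false value as an atom $e(z,\_)$ and its true value as $e(z,c_v)$-style atoms on a common binary predicate, so that subsumption merges patterns. Whichever encoding I choose, the number of $q$ adornments must be $\#\varphi + K$ for a $K$ we can compute in polynomial time, for example the number of distinct collapsed unsatisfying patterns, ideally $K=1$.

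The main obstacle is designing the clause gadget so that $\galgo$'s redundancy elimination identifies exactly the unsatisfying assignments and never accidentally merges two satisfying ones. A second risk is unification producing unintended extra adornments. I would first try encoding each clause as an IDB $c_{ij}$ whose adornments are only the three satisfying local patterns, obtained by writing three rules explicitly. We then take $q$ as the join of all $c_{ij}$ and the $p_v$ on a shared variable. With this design, consistency of the choices must come from the membership of identical atoms: a variable chosen inconsistently across two clauses yields both $a_v^0$ and $a_v^1$ in the body. Such an adornment is distinct and would overcount, so the last step is to show these inconsistent patterns are counted separately by an easy polynomial formula, or to eliminate them by sharing variables through the mgu. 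Once the bijection is established, membership in $q$'s adornment set corresponds to satisfiability, and \SP-hardness of \numad\ follows.
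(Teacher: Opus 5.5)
\paragraph{Verdict: genuine gap.} You never arrive at a working clause gadget. The problem you leave open at the end, consistency across clauses, is exactly where your encoding breaks.

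\paragraph{Why your encodings fail.} In \cref{alg:adornment}, the adornment of $q$ is the union of the (renamed, unified) bodies of the chosen adornments of its IDB body atoms, followed by \galgo. But \galgo\ can only delete an atom dominated by a single other atom of the same predicate; it never merges atoms and never adds any. The construction is therefore monotone.
\begin{itemize}
\item Once both truth values of a variable $v$ are represented by atoms $a_v^0$ and $a_v^1$, nothing stops two clauses sharing $v$ from contributing $a_v^0$ and $a_v^1$ respectively. The resulting adornment is a new, distinct $q_\rho$.
\item You give no way to compute the number $K$ of such inconsistent unions, and nothing in the construction suggests a closed form.
\item The mgu cannot repair this. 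With one shared head variable $z$ and unary atoms there is nothing to unify, and unification only ever identifies terms, so it is monotone too.
\item Your collapse idea fails for related reasons. An atom of one predicate never makes an atom of another predicate redundant. Also, constants such as $c_v$ in $e(z,c_v)$ do not occur in the head, so \galgo\ replaces them by ``\_'', erasing exactly the distinction you want; adornment bodies may contain only variables and ``\_'' anyway.
\end{itemize}

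\paragraph{The missing idea.} Exploit that the formula is positive: encode only true variables, and let a false variable simply be absent.
\begin{itemize}
\item Take one unary EDB relation per Boolean variable.
\item For the $k$-th clause $(X_i\vee Y_j)$, add the three rules $p_k(t)\gets x_i(t)$, $p_k(t)\gets y_j(t)$, and $p_k(t)\gets x_i(t),y_j(t)$. This is your three-pattern clause gadget with the $a^0$ atoms deleted.
\item Add $q(t)\gets p_1(t),\dots,p_{|E|}(t)$.
\end{itemize}
Each adornment of $q$ is then a set of unary atoms on $t$, i.e., a set $S$ of true variables hitting every clause. Duplicates are removed by \galgo, and atoms of distinct predicates never interfere. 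Conversely, every such hitting set $S$ arises by choosing, for clause $k$, the adornment whose body is $\{x_i(t),y_j(t)\}$ restricted to $S$; this is nonempty because $S$ hits clause $k$.

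Monotonicity of the union is now harmless, because there are no ``false'' atoms that could conflict. The adornments of $q$ are in bijection with satisfying assignments, up to a factor $2$ for each variable occurring in no clause, which you can assume away. This is the paper's proof.
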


\begin{proof}[Proof sketch]
    We use a reduction from the problem of counting the satisfying valuations of a positive 2-CNF, which is known to be \SP-hard~\cite{provan_complexity_1983}. We create one EDB relation per variable.
    Then, for each disjunction $x \vee y$, we create three rules that have the same head $p(\bar s)$ but will have adornments corresponding to (1) $x$, (2) $y$, and (3) both $x$ and $y$, respectively. Finally, we create a rule with head $q(\bar s)$ that has a conjunction of all $p(\bar s)$ as the body. Each adornment of $q$ corresponds precisely to one satisfying assignment of variables in the formula.
\end{proof}

Nevertheless, we can derive an upper bound of $f(\Pi)$ as follows:
\begin{theorem}
    \label{thm:num-adornments-naive}
    For an IDB relation $q$ in a datalog program $\Pi$, let $\Pi'$ be the equivalent EDB-bounded datalog program
    constructed by \cref{alg:adornment} with functions $\galgo$ and  $h_{\sf eq}$. Let $\mathsf{\#EDBs}$ denote the number of EDB predicates and let $\mathsf{ear}$ denote the maximum arity among the set of EDB predicates.
    Then, $f(\Pi)$, the number of IDB relations of the form $q_\rho$ in $\Pi'$, is bounded by
    \[ f(\Pi) \leq (\mathrm{ar}(q) + |\Pi|)^{\mathrm{ar}(q)} \cdot 2^{\#\mathsf{EDBs} \cdot ((\mathsf{ar}(q) + 1)^{\mathsf{ear}} - 1)}. \]
\end{theorem}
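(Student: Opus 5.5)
The plan is to count the distinct adornments $\rho$ for $q$ by splitting each adornment into its head atom and its body, and bounding each factor separately. Every adornment of $q$ produced by \cref{alg:adornment} has the form $q(\sigma(\bar s)) \gets \mathcal{B}$, where $\bar s$ is the head tuple of some rule of $\Pi$ with head predicate $q$ and $\sigma$ is an mgu. The adornment is only relevant up to variable renaming, since $h_{\sf eq}$ identifies renamings. Because $\galgo$ was applied last, every body atom of $\rho$ uses only head variables of $\rho$ or ``$\_$''. So $\rho$ is determined by two things: the head tuple up to renaming, and a set of body atoms over at most $\mathrm{ar}(q)$ variables plus ``$\_$''.

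\emph{Head atom.} Each position of $\sigma(\bar s)$ is either a constant or a variable. A constant can only come from constants already present in the program: the mgu never introduces new constants, and adornments only contain constants that appear in rules of $\Pi$. Up to renaming, a variable in position $\ell$ can be normalized to ``the variable first occurring at position $j \le \ell$'', which gives at most $\mathrm{ar}(q)$ choices. A constant has at most as many choices as there are constants in $\Pi$, and I will bound that crudely by $|\Pi|$, treating the program size as including its constants. So each position has at most $\mathrm{ar}(q) + |\Pi|$ options, and the head contributes the factor $(\mathrm{ar}(q)+|\Pi|)^{\mathrm{ar}(q)}$.

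\emph{Body.} Fix the head, so the head variables form a set of size at most $\mathrm{ar}(q)$. A body atom $e(\bar b)$ has each argument in (head variables) $\cup \{\_\}$, which gives at most $(\mathrm{ar}(q)+1)^{\mathsf{ear}}$ atoms per EDB predicate. The all-``$\_$'' atom is removed by $\galgo$, leaving $(\mathrm{ar}(q)+1)^{\mathsf{ear}}-1$ candidates per predicate. By \cref{def:adorned-atom}, bodies contain no constants. Since duplicate atoms collapse under $\galgo$'s redundancy removal, the body is a subset of the candidate atoms, giving at most $2^{\#\mathsf{EDBs}\cdot((\mathrm{ar}(q)+1)^{\mathsf{ear}}-1)}$ bodies. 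Multiplying the two factors yields the stated bound. Since $h_{\sf eq}$ prevents two renaming-equivalent rules from being added, and distinct adorned predicates $q_\rho$ correspond to distinct $\rho$ up to renaming, $f(\Pi)$ is at most the number of such $\rho$.

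The main obstacle is the head-atom count, specifically making the constant-counting argument precise. I need to argue that unification never produces terms other than variables or constants occurring in $\Pi$. This holds because datalog has no function symbols, so the mgu maps variables to variables or to constants from the unified tuples. I also need to justify treating $|\Pi|$ as an upper bound on the number of distinct constants, reading $|\Pi|$ as program size; if it strictly means the number of rules, I would bound the constants per head position by those appearing in some rule and absorb the difference into that reading. A secondary point is that the adornment's variable names are canonicalized consistently with the head, so the body count is genuinely over a fixed variable set.
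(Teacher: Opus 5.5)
Your proposal is correct and follows the paper's own counting argument. It bounds the head atom by $(\mathrm{ar}(q)+|\Pi|)^{\mathrm{ar}(q)}$, with variables normalized by renaming and constants drawn from the program, and the body by $2^{\#\mathsf{EDBs}\cdot((\mathrm{ar}(q)+1)^{\mathsf{ear}}-1)}$, viewing the body as a subset of the atoms over head variables and ``$\_$'' that are not all ``$\_$''.

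As in the paper's proof, $|\Pi|$ must be read here as the number of terms of the program, not the number of rules. Your fallback for the number-of-rules reading would not work. Take three rules: $p(a_1,\dots,a_m)\gets e(y)$, a rule cyclically shifting the arguments of $p$, and $q(x)\gets p(x,\_,\dots,\_)$. They already give $m$ distinct adornments of $q$, while the rule-count version of the bound is $8$.
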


\begin{proof}[Proof sketch]
    The factor $(\mathrm{ar}(q) + |\Pi|)^{\mathrm{ar}(q)}$ counts the number of possible head atoms:
    The relation is fixed to be $q$, and each attribute can be either a variable or a constant.
    There are at most $\mathrm{ar}(q)$ different variables,
    and the number of possible constants is bounded by the size of the program, $|\Pi|$.
    Then, we count all subsets of the set of EDB atoms that can possibly be part of the body of the adornment.
    There are $\#\mathsf{EDBs}$ possible relations, and there are at most $\mathsf{ear}$ attributes per atom.
    For each attribute, we choose either one of the $\mathsf{ar}(q)$ variables from the head, or the underscore, and we eliminate the one case where all attributes are underscores.
\end{proof}

Note that this bound overcounts, as we miss an additional requirement that the EDB atoms in the adornment cover all variables in the head atom. However, if we use a different argument, we can get better bounds on the size of an IDB $q$. We start by introducing, as a crucial tool for this proof, the Stirling partition number.

\begin{definition}[Stirling partition number]
    The \emph{Stirling partition number}, or \emph{Stirling number of the second kind}, denoted by $S(n, k)$, counts the number of ways to partition a set of $n$ \emph{labeled} elements into $k$ \emph{non-empty, unlabeled} sets. It can be calculated by
    $S(n, k) = \frac{1}{k!} \sum_{i = 0}^k (-1)^{k - i} \binom{k}{i} i^n,$
    based on the inclusion-exclusion principle.
\end{definition}
This concept aligns with the idea of edge covers for the IDB attributes. We then give the following much tighter bounds.
\begin{theorem}
    \label{thm:size-bounds}
    For an IDB relation $q$ in a datalog program $\Pi$, let $\Pi'$ be the equivalent EDB-bounded datalog program constructed by \cref{alg:adornment} with $\galgo$ and $h_{\sf eq}$. Then,
    \begin{align*}
        |q| & \leq \sum_{k = 1}^{\ew(q)} S(\mathsf{ar}(q), k) \cdot P(\#\mathsf{EDBs} \cdot N, k) \cdot \mathsf{ear}^{\mathsf{ar}(q)} \tag{1} \\
        & \leq ( \#\mathsf{EDBs} \cdot \mathsf{ear} \cdot \mathsf{ar}(q) )^{\mathsf{ar}(q)} \cdot N^{\ew(q)}, \tag{2}
    \end{align*}
    where $P(n, k)$ is the permutation number, and $S(n, k)$ is the Stirling partition number.
\end{theorem}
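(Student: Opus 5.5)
The plan is to bound $|q|$ by counting tuples directly, using the integral cover property of \cref{prop:cover-of-q}, rather than summing over adornments. By \cref{thm:equivalent} and \cref{prop:cover-of-q}, every tuple $\bar s \in q$ admits at most $\ew(q)$ EDB tuples $\bar t_1 \in e_1, \dots, \bar t_k \in e_k$ (with $k \le \ew(q)$ after discarding unused ones) such that every value $s_\ell$ occurs among them. We therefore define a map from tuples of $q$ to ``witness descriptions'' and show that it is injective, which turns the size bound into a counting problem.

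A witness description consists of three pieces of data. First, a partition of the $\ar(q)$ positions of $\bar s$ into exactly $k$ nonempty blocks. Each position $\ell$ is assigned to one chosen witness that contains $s_\ell$, and we take $k$ minimal so that every witness is used. Second, an injective ordered assignment of distinct EDB tuples to the blocks. Third, for each position $\ell$, an attribute index $j \le \ear$ telling where $s_\ell$ sits inside its witness tuple.

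The blocks are unlabeled, and we obtain $S(\ar(q),k)$ choices for the partition. To make them canonical, we order the blocks, for instance by their smallest position. The witnesses are then an ordered selection of $k$ distinct tuples from the disjoint union of all EDB relations, whose size is at most $\#\mathsf{EDBs}\cdot N$. This gives at most $P(\#\mathsf{EDBs}\cdot N,k)$ choices. Distinctness can be assumed: if two blocks used the same tuple, we merge them and lower $k$, which is why we take $k$ minimal. The attribute indices contribute at most $\ear^{\ar(q)}$ choices.

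Given the description, $\bar s$ is recovered uniquely by reading $s_\ell = t_{b(\ell), j_\ell}$, where $b(\ell)$ is the block of position $\ell$. Hence the map is injective, and summing over $k=1,\dots,\ew(q)$ gives (1). If $\ar(q)=0$ the statement is trivial, so the sum starting at $k=1$ is fine.

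For (2), we use $P(\#\mathsf{EDBs}\cdot N,k) \le (\#\mathsf{EDBs})^k N^k \le (\#\mathsf{EDBs})^{\ar(q)} N^{\ew(q)}$. This uses $k\le \ar(q)$, since there are at most $\ar(q)$ blocks, and we may assume $N\ge 1$ and $\#\mathsf{EDBs}\ge1$. It remains to bound $\sum_{k\le \ar(q)} S(\ar(q),k)$ by the Bell number $B_{\ar(q)}$, which is at most $\ar(q)^{\ar(q)}$: each partition is determined by a function from positions to block labels $1,\dots,\ar(q)$. Multiplying the three factors gives $(\#\mathsf{EDBs}\cdot\ear\cdot\ar(q))^{\ar(q)}N^{\ew(q)}$.

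The main obstacle is making the witness description canonical. Blocks must be made distinguishable so that the ordered witness count $P(\cdot,k)$ pairs correctly with unlabeled partitions, and the distinctness of the witnesses must be justified. The ordered-by-minimal-position convention combined with minimal $k$ handles both issues. One also has to note that the $\ew(q)$ used in \cref{prop:cover-of-q} is the integral width, which matches the intended reading of $\ew(q)$ here.
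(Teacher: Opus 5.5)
This is essentially the paper's proof. The paper also counts, for each tuple of $q$, a partition of its $\ar(q)$ positions into $k\le\ew(q)$ blocks, an assignment of distinct EDB tuples to the blocks ($P(\NEDBs\cdot N,k)$ ways), and an attribute index per position ($\ear^{\ar(q)}$ ways). It then relaxes to (2) via $P(n,k)\le n^k$, $\ew(q)\le\ar(q)$ and $B_n\le n^n$. Your canonicalization (blocks ordered by least position, blocks sharing a witness merged, the whole thing phrased as an injective encoding) makes explicit what the paper leaves implicit.

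One side remark of yours is wrong. For $\ar(q)=0$ the bound is not trivial but fails as written. Then $\ew(q)=0$, the sum in (1) is empty, and (1) asserts $|q|\le 0$, although the nullary fact may be derived; the sum would have to start at $k=0$.

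Relatedly, both you and the paper tacitly assume head atoms contain no constants. The edge cover behind \cref{prop:cover-of-q} only covers head \emph{variables}, so your decoding $s_\ell=t_{b(\ell),j_\ell}$ breaks at constant positions. For example, the rules $q(x,c)\gets e(x)$ for five distinct constants $c$ over a unary $e$ give $|q|=5N$, whereas (1) yields $N$ and (2) yields $4N$. This is a defect of the statement shared with the paper, not of your main argument.
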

\begin{proof}[Proof sketch]
    For bound (1), we first enumerate $k$, the number of partitions of a tuple.
    For each $k$, $S(\mathsf{ar}(q), k)$ gives the number of possible ways to partition, and $P(\#\mathsf{EDBs} \cdot N, k)$ gives the number of ways to assign one distinct EDB tuple (out of at most $\#\mathsf{EDBs} \cdot N$) to each of the $k$ partitions.
    Then, each of the $\mathsf{ar}(q)$ attributes can take one of the $\mathsf{ear}$ values in the EDB tuple that is assigned to the partition the attribute is in.
    (1) could be relaxed to obtain (2).
\end{proof}

Note that, in bound (2), the coefficient before $N^{\ew(q)}$ depends only on the schema of the original program $\Pi$. This gives a tighter upper bound
$ f(\Pi) \leq ( \#\mathsf{EDBs} \cdot \mathsf{ear} \cdot \mathsf{ar}(q) )^{\mathsf{ar}(q)}. $

Below, we show multiple examples, in increasing order of generality, to illustrate the correctness and tightness of the bounds.

\begin{example}
    \label{ex:tight}
    \label{ex:width-2-ar-4-to-3}
    \label{ex:4-power-3}
    Consider the following datalog program. We have an EDB relation $e$ with tuples $e(1, 2)$ and $e(3, 4)$. And we have the following rules:
    \begin{align*}
    &R_1 : q(x, y, z) \gets e(x, y), e(z, w) \quad\quad R_2 : q(x, y, z) \gets e(x, y), e(w, z)\\
    &R_3 : q(x, y, z) \gets e(x, w), e(y, z) \quad\quad R_4 : q(x, y, z) \gets e(w, x), e(y, z)\\
    &R_5 : q(x, y, z) \gets q(y, z, x) \quad\quad R_6 : q(x, y, z) \gets q(y, x, z) \quad\quad R_7 : q(x, x, z) \gets q(x, y, z)
    \end{align*}
    This program has only $1$ EDB relation $e$ with arity $2$ and one IDB relation $q$ with arity $3$. The edge-cover width of the program is $2$.
    In this particular case, $q$ will have all tuples $(x, y, z)$ where each of the three values $x, y, z$ can take any value from the domain $\{ 1, 2, 3, 4 \}$---Rules $R_1$ to $R_4$ takes three arbitrary values, and rules $R_5$ to $R_7$ generate all permutations that also allow duplicates.
    So the size of $q$ is $4^3 = 64$.
    The bound (1) gives $|q| \leq 2^3 \cdot \sum_{k = 1}^{2} S(3, k) \cdot P(1 \cdot 2, k) = 8 \cdot (1 \times 2 + 3 \times 2) = 64$,
    which is exactly tight here.
    Nevertheless, if we use the relaxed version (2), we get
    $|q| \leq (\#\mathsf{EDBs} \cdot \mathsf{ear} \cdot \mathsf{ar}(q))^{\mathsf{ar}(q)} \cdot N^{\ew(q)} = (1 \times 2 \times 3)^3 \times 2^2 = 864,$
    which is still correct but overcounts.
\end{example}

\begin{example}
    \label{ex:tight-3-tuples}
    We slightly extend \cref{ex:4-power-3}. We keep the datalog program as-is, but the EDB relation $e$ now has three tuples: $e(1, 2)$, $e(3, 4)$, and $e(5, 6)$.
    How many tuples does $q$ generate then?

    We start with cases where all the three attributes $(x, y, z)$ come from the same tuple in $e$.
    This is the case when, e.g., we generate $q(1, 2, 1)$ from $e(1, 2)$ and $e(1, 2)$ using the rule $R_1$.
    For each such tuple $(a, b)$ in $e$, it can be verified that $q$ will indeed contain all tuples $(x, y, z)$ such that each value $x, y, z \in \{ a, b \}$.
    Since no value overlaps in the EDB relation $e$, we have a total of $3 \times 2^3 = 24$ tuples formed this way.

    Now we continue to the tuples $(x, y, z)$ in $q$ where there are actually two different tuples in $e$ that contribute some value.
    There will be one tuple $(a, b)$ in $e$ that gives two values and one tuple $(c, d)$ that gives one value to this tuple in $q$.
    So we
    (1) choose the tuple $(a, b)$ that gives two values (3 choices),
    (2) choose the tuple $(c, d)$ that gives one value (2 choices),
    (3) choose one of the three positions in the tuple $(x, y, z)$ in $q$ that has a value coming from $(c, d)$ (3 choices, $x$ or $y$ or $z$),
    (4) choose the value to fill in the position we chose in step (3) (2 choices, $c$ or $d$), and finally
    (5) choose the values of the remaining two positions ($2 \times 2 = 4$ choices, as each can take either $a$ or $b$, and duplicates are allowed).
    So we have $3 \times 2 \times 3 \times 2 \times 4 = 144$ tuples this way.

    In total, we get $24 + 144 = 168$ tuples in $q$ in this program.

    Now, if we use the size bound (1):
    \begin{align*}
        |q| & \leq 2^3 \cdot \sum_{k = 1}^{2} S(3, k) \cdot P(1 \times 3, k) = 8 (S(3,1) \cdot P(3, 1) + S(3, 2) \cdot P(3, 2)) = 8 \times (1 \times 3 + 3 \times 6) = 168,
    \end{align*}
    which again is exactly tight.

    If we use the relaxed version (2), we get
    \[ |q| \leq (\#\mathsf{EDBs} \cdot \mathsf{ear} \cdot \mathsf{ar}(q))^{\mathsf{ar}(q)} \cdot N^{\ew(q)} = (1 \times 2 \times 3)^3 \times 3^2 = 1944, \]
    which again is correct but overcounts.
\end{example}

\begin{example}
    \label{ex:tight-general}
    We can further generalize this example. In fact, given any edge-cover width $\ew(q) = \omega$, IDB arity $\mathsf{iar} = \mu$, EDB arity $\mathsf{ear} = \nu$, number of EDB relations $\#\mathsf{EDBs} = m$, and maximum data size $N$ in each EDB relation, we can generate a datalog program that matches exactly the size bound (1), which shows its tightness.
    
    We start by creating $m$ EDB relations $e_1, \dots, e_m$, each with arity $\nu$.
    In each EDB relation $e_i$, we add $N$ data tuples
    \begin{align*}
        (& N\nu(i-1) + 1, && N\nu(i-1) + 2, && \dots, && N\nu(i-1) + \nu), \\
        (& N\nu(i-1) + \nu + 1, && N\nu(i-1) + \nu + 2, && \dots, && N\nu(i-1) + 2\nu), \\
        \vdots \\
        (& N\nu(i-1) + (N-1)\nu + 1, && N\nu(i-1) + (N-1)\nu + 2, && \dots, && N\nu),
    \end{align*}
    where each value occurs in exactly one attribute of exactly one tuple in exactly one EDB relation. There is no duplicate value.
    For example, in $e_1$, we have $(1, 2, \dots, \nu), (\nu+1, \nu+2, \dots, 2\nu), \dots, ((N-1)\nu + 1, (N-1)\nu + 2, \dots, N\nu)$. In $e_2$, we start with $(N\nu + 1, N\nu + 2, \dots, (N+1)\nu)$, etc.
    
    Then, for each $k = 1, \dots, \omega$, for each $(e_{i_1}, \dots, e_{i_k})$ where each $e_{i_\ell} \in \{ e_1, \dots, e_m \}$, for each $(x_{j_1}, \dots, x_{j_\mu})$ where each $x_{j_\ell} \in \{ x_1, \dots, x_{k\nu} \}$, we add a rule
    \[ q(x_{j_1}, \dots, x_{j_\mu}) \gets e_{i_1}(x_1, \dots, x_\nu), e_{i_2}(x_{\nu + 1}, \dots, x_{2\nu}), \dots, e_{i_k}(x_{(k-1)\nu + 1}, \dots, x_{k\nu}). \]
    In essence, this rule chooses $k$ EDB atoms to form an IDB atom in $q$, using $\mu$ values from these $k$ EDB atoms (duplicates allowed).
    It should be clear that this program has exactly the edge cover width $\omega$.
    Since all attributes in all tuples in all EDB relations are distinct, the number of tuples that can be generated is exactly the bound (1), by the same argument as in the proof of \cref{thm:size-bounds}.
\end{example}

To conclude this section, we would like to highlight that, even though the coefficient $f(\Pi)$ appears to have a large upper bound (in Theorems \ref{thm:num-adornments-naive} and \ref{thm:size-bounds}), we expect the exact $f(\Pi)$ to be small for most practical datalog programs that are not artificially made to be adversarial in this respect (e.g., Examples \ref{ex:tight}--\ref{ex:tight-general}, which contain multiple shifting, swapping, and duplication of variables). It is small for, e.g., many classic examples such as transitive closure (\cref{ex:tc-intro}, for which $f(\Pi) = 2$), \cref{ex:bounded} from \cite[Example~12.5.6]{AHV1995}, and same generation \cite[Section~13.4]{AHV1995}. Furthermore, in real-world scenarios, we very often need to assign types to attributes, and each attribute has a semantic meaning. In such cases, variables have very limited possibilities to shift or swap. Therefore, the actual number of adornments tends to be small, and this theoretical worst-case upper bound seems, in general, a gross over-estimation.

\section{Minimizing EDB-Bounded Datalog Programs}
\label{sec:minimize-edb-bounded}

In the previous section, we have seen that, given a datalog program $\Pi$, if we simply follow \cref{alg:adornment} to construct an equivalent EDB-bounded program, the number of adornments of each predicate $q$ (the coefficient in \cref{thm:num-adornments-naive}) can actually be much larger than the coefficient in \cref{thm:size-bounds}. In this section, we propose a way to \emph{minimize} the EDB-bounded datalog program to obtain an equivalent EDB-bounded datalog program with far fewer adornments and rules, so that the number of adornments per IDB predicate matches the coefficient in \cref{thm:size-bounds}.

We first note that the large number of adornments is partially because the same variable may appear in multiple EDB atoms in the adornment. Therefore, to minimize, we enforce that each variable appears exactly once in the body of the adornment.
\begin{definition}[minimal adornment and minimal EDB-bounded program]
    A \emph{minimal} adornment is an adornment in which every variable $v$ in the head of the rule appears in exactly one atom of the body of the adornment. No atom in the adornment consists solely of underscores. 
    
    A \emph{minimal} EDB-bounded datalog program $\Pi''$ is an EDB-bounded datalog program for which every rule has a minimal adornment.
\end{definition}
Recall the adornment relaxation function $\gmcd$ mentioned in \cref{ex:relax-fns}, which retains a minimal set of atoms required to cover the head of a rule and preserves only one occurrence of each variable in the body. Assuming that, given an adornment, there is a consistent, deterministic way to choose which atom the single occurrence of each variable should be in, we show that applying $\gmcd$ on an EDB-bounded program yields an equivalent program with minimal integral edge-cover width.
\begin{theorem}
    \label{thm:minimization}
    Given an EDB-bounded datalog program $\Pi'$. Let $\Pi''$ be the program constructed by applying $\gmcd$ on every adornment of $\Pi'$, then $\Pi''$ is minimal and equivalent to $\Pi'$. Furthermore, $\Pi''$ remains an EDB-bounded datalog program such that the integral edge-cover width $\ew(\Pi'') = \ew(\Pi')$.
\end{theorem}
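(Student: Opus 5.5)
The plan is to treat $\Pi''$ as the image of $\Pi'$ under a renaming of adorned predicates. Every adorned IDB atom $q_\rho(\bar t)$, whether in a head or a body, is replaced by $q_{\gmcd(\rho)}(\bar t)$. This uses the assumed consistent, deterministic choice of which atom keeps the single occurrence of each head variable, so equal adornments always receive equal images. I will then check the four claims in turn: minimality, equivalence, EDB-boundedness and width.

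\emph{Minimality.} By \cref{ex:relax-fns}, $\gmcd$ keeps a minimal set of body atoms covering the head variables. It replaces non-head variables by ``$\_$'' and deletes all but one occurrence of each head variable, so every head variable occurs exactly once. An atom consisting only of underscores, or one whose head variables all had their kept occurrence placed elsewhere, covers nothing. Such an atom contradicts minimality of the kept set and is dropped. Hence every adornment of $\Pi''$ is minimal.

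\emph{Equivalence.} Stripping adornments, $\Pi'$ and $\Pi''$ have literally the same rules, and the map $\rho\mapsto\gmcd(\rho)$ acts only on subscripts. Any proof tree of $\Pi'$ therefore maps to a proof tree of $\Pi''$ by renaming subscripts, which gives $\bigcup_\rho \sem{q_\rho(\Pi',D)}\subseteq\bigcup_{\rho}\sem{q_{\rho}(\Pi'',D)}$. For the converse, the only new phenomenon is that distinct $\rho_1\neq\rho_2$ may collapse to the same $\gmcd(\rho_1)=\gmcd(\rho_2)$. In that case a rule of $\Pi''$ can consume a tuple that came from the ``wrong'' original predicate. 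I will handle this by passing through the underlying program $\Pi$ to which $\Pi'$ is equivalent (as for the programs produced by \cref{alg:adornment}, \cref{thm:equivalent}). Every rule of $\Pi''$ strips to an instance $\sigma(R)$ of a rule $R$ of $\Pi$, so a proof tree of $\Pi''$ strips to a proof tree of $\Pi$. Hence the unions of $\Pi''$ are contained in $\sem{q(\Pi,D)}=\bigcup_\rho\sem{q_\rho(\Pi',D)}$. I expect this merging issue to be the main subtle point. If it fails for arbitrary $\Pi'$, I will restrict to the collapse-free case or to $\Pi'$ produced by \cref{alg:adornment}.

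\emph{Boundedness.} Since $\gmcd$ only deletes atoms and replaces arguments by ``$\_$'', it is an adornment relaxation function, so $\gmcd(\rho)$ subsumes $\rho$ (Proposition in Section~3). I will prove by induction on proof trees of $\Pi''$ that every fact $q_{\gmcd(\rho)}(\bar t)$ is derived by $\gmcd(\rho)$ over $D$. When no collapse occurs the induction is immediate: by boundedness of $\Pi'$ the fact is produced by $\rho$, and subsumption transfers this to $\gmcd(\rho)$. With collapses, the induction step must use the structure of $\rho$. Each kept atom of $\gmcd(\rho)$ mentions only head variables, each exactly once, and originates either from an EDB body atom of the rule, which is matched directly, or from an adornment $\rho_i$ of a child. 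For the latter case I need that the kept atom, or some atom covering the same head variable, survives in $\gmcd(\rho_i)$. This is where the deterministic-choice assumption must be invoked, and it is the step I expect to be hardest; I would first try to choose the kept occurrence canonically, by a fixed order on atoms, so that it commutes with the propagation.

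\emph{Width.} Take $\gmcd$ to keep a minimum-cardinality covering set. Any integral edge cover of $V_{\sf out}$ by atoms of $\rho$ is also a subset of the body of $\rho$ covering the head. Deleting duplicate occurrences does not destroy the cover, because each variable keeps its occurrence in some kept atom. Thus $\ew(\gmcd(\rho))=\ew(\rho)$ for every $\rho$, and taking maxima gives $\ew(\Pi'')=\ew(\Pi')$.
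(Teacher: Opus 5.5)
Your treatment of minimality, equivalence and width follows the paper's proof: renaming subscripts gives $\Pi'\subseteq\Pi''$, and stripping adornments back to $\Pi$ gives the converse. You also rightly make explicit two assumptions that proof uses tacitly. The first is that $\Pi'$ is produced by \cref{alg:adornment}; for an arbitrary EDB-bounded $\Pi'$, a collapse can make $\Pi''$ derive strictly more than $\Pi'$. The second is that $\gmcd$ keeps a minimum-cardinality cover. The genuine gap is the boundedness step under collapses, which you leave open. The paper does not close it either. Its argument is only that $\gmcd(\rho)$ subsumes $\rho$, which is your collapse-free case. It ignores that a body atom $p_{\gmcd(\rho_i)}$ of $\Pi''$ now also reads the tuples of every $p_{\rho'}$ with $\gmcd(\rho')=\gmcd(\rho_i)$.

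In fact the step cannot be completed, and no canonical choice ``commuting with propagation'' can exist. The reason is that a single adornment may be produced from several children whose minimizations kept different atoms. Take binary EDB predicates $e,f$ and let $\Pi$ consist of $p(x,y)\gets e(x,y),f(x,z)$, $p(x,y)\gets e(x,y)$, $p'(x,y)\gets e(x,z),f(x,y)$, $p'(x,y)\gets f(x,y)$, $q(x)\gets p(x,y)$ and $q(x)\gets p'(x,y)$. \cref{alg:adornment} with $\galgo$ and $h_{\sf eq}$ yields the adornments $\rho_1: p(x,y)\gets e(x,y),f(x,\_)$, $\rho_2: p(x,y)\gets e(x,y)$, $\rho_3: p'(x,y)\gets e(x,\_),f(x,y)$ and $\rho_4: p'(x,y)\gets f(x,y)$. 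It also yields a single $\rho: q(x)\gets e(x,\_),f(x,\_)$ that heads both $q_\rho(x)\gets p_{\rho_1}(x,y)$ and $q_\rho(x)\gets p'_{\rho_3}(x,y)$. Necessarily $\gmcd(\rho_1)=\rho_2$ and $\gmcd(\rho_3)=\rho_4$, whereas $\gmcd(\rho)$ is either $q(x)\gets e(x,\_)$ or $q(x)\gets f(x,\_)$. In the first case $\Pi''$ contains $q_{\gmcd(\rho)}(x)\gets p'_{\rho_4}(x,y)$ and $p'_{\rho_4}(x,y)\gets f(x,y)$. Over $D=\{f(a,b)\}$ it therefore derives $q_{\gmcd(\rho)}(a)$ although $e$ is empty. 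The second case is symmetric with $D=\{e(a,b)\}$. So for every deterministic $\gmcd$, $\Pi''$ is equivalent to $\Pi'$ but not EDB-bounded.

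This also shows that your intended sufficient condition is too weak, namely that some atom covering the same head variable survives in $\gmcd(\rho_i)$. The surviving $f(x,y)$ does not certify the $e(x,\_)$ kept by the parent. What remains true is an aggregate statement: every tuple of $q$ is produced over $D$ by $\gmcd(\rho)$ for some adornment $\rho$ of $q$ in $\Pi'$, by equivalence plus subsumption. Per-rule boundedness needs an additional hypothesis, such as $\gmcd$ being injective on the adornments of $\Pi'$.
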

\begin{proof}
    By definition of $\gmcd$, every adornment in $\Pi''$ is minimal. 
    It is established that renaming invented predicates results in a new program that contains the original one \cite{10.1145/3622847}. Thus, $\Pi' \subseteq \Pi''$. Similarly to the proof of \cref{thm:equivalent}, restoring the original program $\Pi$ from $\Pi''$ is possible by renaming all predicates to their non-adorned version. Therefore, $\Pi' \subseteq \Pi''\subseteq \Pi$. Since $\Pi'$ is equivalent to $\Pi$, it follows that $\Pi''$ is equivalent to both $\Pi'$ and $\Pi$.
    Finally, by construction, $\Pi''$ is an EDB-bounded datalog program, since every head adornment is further relaxed to an adornment that still subsumes the rule. The \emph{integral} edge-cover width of the output program cannot change from $\ew(\Pi')$ since $\gmcd$ only keeps the minimal sets of EDB atoms needed for the cover.
\end{proof}

\begin{example} \label{ex:minimal_edb_bounded}
    The following is a minimal EDB-bounded program equivalent to $\Pi$ in Example~\ref{ex:edge-cover}. It has the same minimal integral edge-cover width of 2.
    \begin{align*}
        & p_{p(x, y, z) \gets e(x, y, \_), e(\_, z, \_)}(x, y, z) \gets e(x, y, w), e(x, z, w), e(y, z, w), \\
        & q_{q(x, y) \gets e(x, y, \_)
        }(x, y) \gets p_{p(x, y, z) \gets e(x, y, \_), e(\_, z, \_)}(x, y, \_). 
    \end{align*}
    Note, however, that this process of removing duplicate occurrences may change the \emph{fractional} edge-cover width. In this example, the adornments in \cref{ex:edge-cover} gives a fractional edge-cover width of 1.5, whereas the minimal program above has a fractional edge-cover width of 2.
\end{example}

We then give the following upper bound of the number of adornments in a minimal EDB-bounded program, which now matches the coefficient $f(\Pi)$ in bound (2) of \cref{thm:size-bounds}.
\begin{theorem}
    \label{thm:minimal-bound}
    Given a datalog program $\Pi$, let $\Pi''$ be the minimal equivalent EDB-bounded datalog program obtained by applying \cref{alg:adornment} with $\galgo$ and $h_{\sf eq}$ followed by minimization with $g_{\sf min}$. Then, for each IDB relation $q$ in $\Pi$, the number of IDB relations of the form $q_\rho$ in $\Pi''$ is bounded by
    \[ \sum_{k=1}^{\ew(q)}(S(\ar(q), k) \cdot \NEDBs^k \cdot \ear^{\ar(q)}) \leq ( \#\mathsf{EDBs} \cdot \mathsf{ear} \cdot \mathsf{ar}(q) )^{\mathsf{ar}(q) } = f(\Pi). \]
\end{theorem}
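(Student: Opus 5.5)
We will prove the bound by counting the possible minimal adornments of $q$ directly, rather than tracing the algorithm's runs. Fix an IDB predicate $q$ of $\Pi$ and consider an adornment $\rho$ of some $q_\rho$ in $\Pi''$. Since $\gmcd$ returns the same output on inputs that coincide up to variable renaming, distinct predicates $q_\rho$ correspond to distinct minimal adornments up to renaming. It therefore suffices to bound the number of minimal adornments with head predicate $q$ and at most $\ew(q)$ body atoms.

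\textbf{Step 1 (size of the body).} The body of $\rho$ has at most $\ew(q)$ atoms. By \cref{thm:minimization}, $\gmcd$ keeps only a minimal covering set of atoms, and the integral edge-cover width is preserved. Moreover, every atom of a minimal adornment contains at least one head variable, and no variable occurs in two atoms. Hence the body atoms form a \emph{minimum} cover of the head variables, so their number $k$ equals $\ew(q_\rho)\le \ew(q)$.

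\textbf{Step 2 (encoding a minimal adornment with $k$ atoms).} Each head variable appears in exactly one body atom. The map sending each of the $\ar(q)$ head positions to the body atom containing its variable therefore induces a partition of the head positions into $k$ nonempty blocks. There are at most $S(\ar(q),k)$ such partitions. Here we fix a canonical ordering of the blocks, for instance by their smallest position, so that atoms are identified with blocks. For each block we choose its EDB predicate, giving $\NEDBs^k$ choices. For each head position we then choose the attribute of its atom in which its variable sits, giving at most $\ear^{\ar(q)}$ choices. All remaining attributes are ``$\_$''.

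We must also account for equalities among head positions, i.e., repeated variables or constants in the head. The plan is to argue that the data above already determines $\rho$ up to renaming. Two positions carry the same variable exactly when they map to the same atom and the same attribute, because each variable occurs only once in the body. Constants cannot occur in the body, so constant head positions must be handled within the same encoding. I expect this to be the main obstacle: pinning down precisely what the head of a minimal adornment can look like, so that the count is an injection. I would first try to show that in $\Pi''$ all head arguments of adornments are variables, citing the safety condition in \cref{def:relax}. If constants can appear, I would refine the partition count, for example by treating constant positions as a separate class, and absorb the extra factor in the final estimate.

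\textbf{Step 3 (summing and relaxing).} Summing over $k=1,\dots,\ew(q)$ yields the first inequality of the statement. For the second inequality we bound each term: $\NEDBs^k \le \NEDBs^{\ar(q)}$ because $k\le\ar(q)$, and $\sum_{k} S(\ar(q),k)\le \ar(q)^{\ar(q)}$, since the number of set partitions is at most the number of functions from $\ar(q)$ positions to $\ar(q)$ labels. Multiplying these bounds gives $(\NEDBs\cdot\ear\cdot\ar(q))^{\ar(q)}$, which matches $f(\Pi)$ from \cref{thm:size-bounds}.
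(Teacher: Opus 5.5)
Your Steps 1--3 are the paper's argument: at most $\ew(q)$ body atoms, $S(\ar(q),k)$ partitions of the head, $\NEDBs^k$ choices of predicates, $\ear^{\ar(q)}$ choices of attribute positions, and then the relaxation from the proof of \cref{thm:size-bounds}. Your observation that two head positions carry the same variable exactly when they are sent to the same atom and attribute is correct. It makes the count a genuine injection for heads with repeated variables, a case the paper's one-line ``partition the head variables'' glosses over.

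The gap is the step you flagged yourself, and neither of your two plans can close it. Safety (\cref{def:relax}) only forces head \emph{variables} into the body; it does not forbid constants. Since \cref{alg:adornment} uses $q(\sigma(\bar s))$ as the head of the adornment, constants of $\Pi$ survive. The mgu can even create them: unifying $p(x,c)$ with a head $p(u,u)$ sends $x\mapsto c$. Nor can an extra factor be absorbed, because with constants the stated bound is false. Take $\Pi$ consisting of $q(x)\gets e(x)$ and $q(c_i)\gets e(x)$ for $i=1,\dots,m$. The functions $\galgo$ and $\gmcd$ turn the latter rules into the adornments $q(c_i)\gets$, since the atom $e(\_)$ covers no head variable and is dropped. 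So $\Pi''$ contains $m+1$ adorned versions of $q$, whereas $\ar(q)=\ew(q)=\NEDBs=\ear=1$ makes both sides of the bound equal to $1$. The paper's proof tacitly assumes that adornment heads contain only variables; its own \cref{thm:num-adornments-naive} carries the factor $(\ar(q)+|\Pi|)^{\ar(q)}$ precisely to account for constants. You should add this as an explicit hypothesis rather than try to derive it, e.g.\ that $\Pi$ is constant-free. Then $\sigma$ maps variables to variables, all adornment heads stay variable-only, and your argument is complete.
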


\section{FPT Complexity Bounds of Evaluation}
\label{sec:complexity-bounds}

In this section, we use EDB adornments and the resulting size bounds to provide fixed-parameter tractable complexity bounds for datalog evaluation.
In addition to the size bounds that we have shown above, we also need some width notions from hypertree decompositions, as we will define next, to reason about the complexity of evaluation.

We start by defining the hypergraphs associated with datalog rules, which are analogous to the case for conjunctive queries as in~\cite{gottlob_hypertree_2002}.

\begin{definition}[hypergraphs associated with datalog rules]
    Given a datalog program $\Pi$, for each rule
    $ R : {q}(\bar{s}) \gets {p_1}(\bar{t}_{1}), \dots, {p_\mu}(\bar{t}_\mu), $
    we define the hypergraph $H$ associated with the rule $R$ as one with hyperedges $E(H) = \{ \vars(\bar t_i) \mid 1 \leq i \leq \mu \}$ and vertices $V(H) = \cup E(H)$.
\end{definition}

The complexity bounds that will follow are based on results of incremental view maintenance, which fundamentally relies on \emph{free-connex} tree decompositions and related width measures~\cite{wang_change_2023, hu_output-optimal_2025}:

\begin{definition}[free-connex generalized hypertree decompositions]
A \emph{hypertree} (for a hypergraph $H$) is a triple $\mathcal{T} = (T,\chi,\lambda)$, where $T$ is a tree, $\chi : V(T) \to 2^{V(H)}$ is a function s.t.\ $\bigcup_{ p \in V(T) } \chi(p) = V(H)$, and $\lambda : V(T) \to 2^{E(H)}$. We call $\HT$ a \emph{free-connex generalized hypertree decomposition (FCGHD)} of $H$ if
\begin{enumerate}[label=(H\textsubscript{\arabic*}),ref=(H\textsubscript{\arabic*})]
	\item\label{H1}\gdef\Giref{\ref{H1}}   For every $e \in E(H)$ there exists some $p \in V(T)$ with $e \subseteq \chi(p)$.
	\item\label{H2}\gdef\Giiref{\ref{H2}}  For every $v \in V(H)$, the set  $\set{ p \in V(T) \mid v \in \chi(p) }$ induces a connected subtree of $T$.
	\item\label{H3}\gdef\Giiiref{\ref{H3}} For every $p \in V(T)$, we have  $\chi(p) \subseteq \vertices{ \lambda(p) }$.
    \item\label{H4}\gdef\Givref{\ref{H4}} There is a connected subtree $S$ of $\mathcal{T}$ such that $\bigcup_{v \in V(S)} \chi(v) = V_{\sf out}$, i.e., the union of attributes that appear in the $\chi$-labels in $S$ is exactly the output attributes.
\end{enumerate}
\Giiref{} is generally called the \emph{connectedness condition}, and \Givref{} is called the \emph{connex condition}. The set $\chi(p)$ for each tree vertex $p \in V(T)$ is conventionally called a \emph{bag}.
\end{definition}

\begin{definition}[(integral/fractional) free-connex generalized hypertree width]
Given a hypertree decomposition $\HT$,
for each node $p \in V(\HT)$, let $V_{\sf full}(p) = \bigcup_{e \in \lambda(p)} e$, i.e., the set of all variables in the hyperedges in the $\lambda$-label.
We define the hypergraph associated with the node $p$ as $H_p = (V_{\sf full}(p), \lambda(p))$.
The (integral or fractional) \emph{width} $w(\HT)$ of a hypertree $\HT = ( T, \chi, \lambda )$ is the maximum (integral or fractional) edge cover width to cover all variables in $V_{\sf full}(p)$, among all nodes $p \in V(T)$.
The (integral or fractional) \emph{free-connex generalized hypertree width}, denoted by $\fchw(p)$, is the minimal width among all possible free-connex generalized hypertree decompositions $\HT$.
\end{definition}

Using the notion of free-connex generalized hypertree width, as well as the size bounds given by the edge-cover width that we introduced earlier, we have

\begin{theorem}
    \label{thm:complexity-general}
    A datalog program $\Pi$ over a database instance with at most $N$ tuples per EDB relation can be evaluated in time
    $O((f(\Pi))^{\fchw(\Pi)} \cdot |\Pi| \cdot N^{\ew(\Pi) \cdot \fchw(\Pi)}),$
    where $\fchw(\Pi)$ is the free-connex generalized fractional hypertree width.
\end{theorem}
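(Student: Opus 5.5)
We evaluate the equivalent EDB-bounded program $\Pi'$ produced by \cref{alg:adornment} with $\galgo$ and $h_{\sf eq}$ rather than $\Pi$ itself. By \cref{thm:equivalent}, each IDB relation $q$ of $\Pi$ is recovered as $\bigcup_\rho \sem{q_\rho(\Pi',D)}$. This union costs at most $f(\Pi)\cdot N^{\ew(\Pi)}$ time per predicate. The gain is that every adorned relation $q_\rho$ has a known a~priori size bound: by \cref{thm:single-adorned-size}, $|q_\rho| \le N^{\ew(q_\rho)} \le N^{\ew(\Pi)}$. Summing over the at most $f(\Pi)$ adornments of a predicate gives $|q| \le f(\Pi)\cdot N^{\ew(\Pi)}$ for every IDB predicate of $\Pi$.

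\textbf{Step 1: input sizes for a single rule.} We run semi-naive evaluation on $\Pi'$, or equivalently on $\Pi$ with the IDB relations partitioned by adornment. The key point is that, in any rule body, each atom, whether EDB or IDB, ranges over a relation of size at most $M := f(\Pi)\cdot N^{\ew(\Pi)}$. Taking the coarser view in which IDB relations are unions over adornments is what produces the factor $f(\Pi)$ in the base rather than an enumeration over adornment tuples.

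\textbf{Step 2: the per-rule bound.} Fix a rule $R$ of $\Pi$ and a free-connex generalized hypertree decomposition of its hypergraph of fractional width at most $\fchw(\Pi)$. For each node $p$, we materialize the bag relation by a worst-case optimal join over $\lambda(p)$. By the AGM bound, this takes time $O(M^{\fchw(\Pi)})$. We then run Yannakakis over the free-connex decomposition; the connex condition ensures that projecting onto the head costs no more than the bag sizes plus the output. The output is itself bounded by $M$, so the cost per rule is $O(M^{\fchw(\Pi)}) = O(f(\Pi)^{\fchw(\Pi)} N^{\ew(\Pi)\fchw(\Pi)})$.

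\textbf{Step 3: accounting over the fixpoint.} This is the main obstacle. Naively re-running each rule once per iteration multiplies the cost by the number of iterations, which can be as large as $M$. To avoid this, I would use semi-naive, that is incremental, evaluation. Each new IDB tuple is inserted once, and insertions are propagated through the rule's free-connex decomposition using incremental view maintenance in the style of \cite{wang_change_2023,hu_output-optimal_2025}. The argument I would aim for is that, over the whole fixpoint computation, the total maintenance work for one rule equals the cost of a single static evaluation on the final relations. The reason is that each bag relation only grows monotonically, and its final size is at most $M^{\fchw(\Pi)}$. Each bag tuple is created once, and the free-connex structure makes each created tuple's propagation cost amortized constant. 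This is the step where the details of insert-only, free-connex maintenance must be invoked carefully. Summing over the $|\Pi|$ rules then yields the claimed $O((f(\Pi))^{\fchw(\Pi)}\cdot|\Pi|\cdot N^{\ew(\Pi)\cdot\fchw(\Pi)})$.
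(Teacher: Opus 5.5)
Your proposal follows essentially the same route as the paper: semi-naive evaluation of $\Pi$ itself, a free-connex decomposition per rule whose bags have size at most $M^{\fchw(\Pi)}$, and insert-only incremental view maintenance so that the total work per rule matches one static evaluation, summed over the $|\Pi|$ rules. Here the adorned program $\Pi'$ serves only to bound every body relation by $M = f(\Pi)\cdot N^{\ew(\Pi)}$. The step you flag is where the paper uses two black-box results: insert-only maintenance of each bag's join with total cost $O(M^{\fchw(\Pi)})$~\cite{abo_khamis_insert-only_2024}, which is what actually bounds the join work (creating each bag tuple once does not bound it by itself), and amortized $O(|t|)$ maintenance of the resulting width-$1$ free-connex query over the bags~\cite{wang_change_2023}.
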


\begin{proof}[Proof sketch]
    We assume a semi-na\"ive-style evaluation on the program $\Pi$ and use techniques of incremental view maintenance to evaluate the program.
    We start by evaluating rules whose body consists of EDB atoms only, simply as conjunctive queries.
    Then, in each following iteration, for each rule, we update the relation in the head atom based on the delta from the previous iteration.
    For this, we view the rule again as a conjunctive query, where the relations have sizes at most $f(\Pi) \cdot N^{\ew(\Pi)}$.
    Using results of incremental view maintenance~\cite{abo_khamis_insert-only_2024,wang_change_2023}, we can obtain the bound by performing change propagations on a free-connex hypertree decomposition, which leads to a constant amortized delay for result enumeration.
\end{proof}

If we restrict the scope and consider only \emph{simple chain datalog programs}, i.e., those where each rule contains at most two body atoms, the free-connex hypertree width is at most $2$, and we therefore obtain the following bound as a corollary.

\begin{corollary}
\label{cor:smp_chain_program}
    A \emph{simple chain datalog program} $\Pi$ over a database instance with at most $N$ tuples per EDB relation can be evaluated in time $O((f(|\Pi|))^2 \cdot |\Pi| \cdot N^{2\ew(\Pi)})$.
\end{corollary}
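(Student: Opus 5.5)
The plan is to specialise \cref{thm:complexity-general} rather than build anything new. The only quantity in that bound that depends on the rule structure is $\fchw(\Pi)$, so the key step is to show that every rule of a simple chain program has (fractional) free-connex generalized hypertree width at most $2$. With that in hand, substituting $\fchw(\Pi)\le 2$ into $O((f(\Pi))^{\fchw(\Pi)} \cdot |\Pi| \cdot N^{\ew(\Pi) \cdot \fchw(\Pi)})$ gives $O((f(\Pi))^2 \cdot |\Pi| \cdot N^{2\ew(\Pi)})$. This uses that the bound is monotone in the exponent whenever $f(\Pi)\ge 1$ and $N\ge 1$, which always holds. The tuple sizes used inside the proof of \cref{thm:complexity-general} are still those supplied by the size bounds of \cref{sec:UpperBoundIdbs}, with per-relation size at most $f(\Pi)\cdot N^{\ew(\Pi)}$, so nothing else in that argument changes.

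To bound the width, fix a rule $R: q(\bar s)\gets p_1(\bar t_1), p_2(\bar t_2)$; the case of a single body atom is analogous and easier. Its hypergraph $H$ has at most two hyperedges $e_1=\vars(\bar t_1)$ and $e_2=\vars(\bar t_2)$. I would build a decomposition with a root node $r$ and one child $c$:
\begin{itemize}[leftmargin=*]
    \item the root gets $\chi(r)=V_{\sf out}$ and $\lambda(r)=\{e_1,e_2\}$;
    \item the child gets $\chi(c)=e_1\cup e_2$ and $\lambda(c)=\{e_1,e_2\}$.
\end{itemize}
Checking the conditions in turn:
\begin{itemize}[leftmargin=*]
    \item \Giref{} holds via $c$.
    \item \Giiref{} holds because every variable of $\chi(r)$ also lies in $\chi(c)$, so the nodes containing it form a connected subtree.
    \item \Giiiref{} holds since $V_{\sf out}\subseteq e_1\cup e_2$, which uses safety.
    \item \Givref{} holds with $S=\{r\}$.
\end{itemize}
Each node's $\lambda$-label has two edges, so the edge cover of $V_{\sf full}(p)$ needs at most $2$ (integrally, hence also fractionally). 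Therefore $\fchw(R)\le 2$ for every rule, and so $\fchw(\Pi)\le 2$.

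The main point needing care is that $\fchw(\Pi)$ in \cref{thm:complexity-general} is implicitly the maximum over the rules of the adorned program $\Pi'$ that is actually evaluated. Rules of $\Pi'$ are obtained from rules of $\Pi$ by the substitution $\sigma$ and by replacing predicates with adorned ones, which preserves the number of body atoms. So $\Pi'$ is again a simple chain program, and the argument above applies to it verbatim. I would also note that the $f(|\Pi|)$ in the corollary's statement denotes the same coefficient $f(\Pi)$ from the size bounds.
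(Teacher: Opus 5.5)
Your proposal is correct and matches the paper's argument. The paper likewise notes that rules with at most two body atoms have free-connex generalized hypertree width at most $2$ and substitutes $\fchw(\Pi)\le 2$ into \cref{thm:complexity-general}; your explicit two-node decomposition just spells out that observation, and your remark about $\Pi'$ is harmless since the width bound holds whether $\fchw$ is taken over the rules of $\Pi$ or of $\Pi'$.
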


For \emph{linear} datalog programs, i.e., those where each rule has at most an IDB atom in the body, and the rest are all EDB atoms, each with size at most $N$, we have

\begin{theorem}
    \label{thm:linear}
    A \emph{linear} datalog program $\Pi$ (where each rule has at most one IDB atom in the body) over a database instance with at most $N$ tuples per EDB relation can be evaluated in time
    $O(f(\Pi) \cdot |\Pi| \cdot N^{\ew(\Pi) + \fchw(\Pi) - 1}).$
\end{theorem}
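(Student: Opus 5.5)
We plan to follow the semi-na\"ive / incremental view maintenance strategy of \cref{thm:complexity-general}, but to exploit linearity so that only a single (delta) relation in each rule body is large. First, we pass to the equivalent EDB-bounded program $\Pi'$ produced by \cref{alg:adornment} with $\galgo$ and $h_{\sf eq}$ (\cref{thm:equivalent}); it is again linear, since each adorned rule inherits the body shape of its original rule. By \cref{thm:single-adorned-size} and the size bound of \cref{thm:size-bounds}, every IDB relation $q$ has at most $f(\Pi)\cdot N^{\ew(\Pi)}$ tuples over the whole run. So the total number of insertions into the head relations is bounded by $f(\Pi)\cdot N^{\ew(\Pi)}$ per IDB predicate. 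We then run semi-na\"ive evaluation where each derived tuple is inserted exactly once.

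Next, we fix a linear rule $R: q(\bar s) \gets p(\bar s_1), e_2(\bar s_2),\dots,e_\mu(\bar s_\mu)$. The EDB part $e_2,\dots,e_\mu$ is static, so we precompute it along a free-connex generalized hypertree decomposition of the rule's hypergraph of fractional width $\fchw(\Pi)$. Using Yannakakis-style semijoin reduction and materialization of bags, this costs $O(N^{\fchw(\Pi)})$ per rule, and it is dominated by the final bound. Each delta tuple $\bar t$ of $p$ then binds the variables of $\bar s_1$. The key claim is that, once one atom is fixed to a single tuple, the remaining join restricted to a bag has a fractional edge cover of weight at most $\fchw(\Pi)-1$ relative to the bound variables. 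Intuitively, the atom $p(\bar s_1)$ appears in some bag's $\lambda$-label with weight that we may assume to be $1$, and fixing it is worth a full unit. Hence the work triggered by a single delta tuple, including enumeration of its new head tuples via the free-connex structure with constant delay, is $O(N^{\fchw(\Pi)-1})$.

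Summing over all delta tuples gives $f(\Pi)\cdot N^{\ew(\Pi)}$ tuples times $O(N^{\fchw(\Pi)-1})$ work per tuple, for each of the $|\Pi|$ rules. This yields $O(f(\Pi)\cdot|\Pi|\cdot N^{\ew(\Pi)+\fchw(\Pi)-1})$. Duplicate detection is done with hashing in constant time per derived tuple, and the number of derived tuples is itself dominated by the per-delta work.

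The main obstacle is the ``minus one'' step: arguing rigorously that a single IDB tuple lowers the per-tuple cost from $N^{\fchw}$ to $N^{\fchw-1}$. When the IDB atom receives fractional weight less than $1$ in an optimal cover, a naive argument fails. Our first approach is to argue that we may add the IDB atom to every bag containing its variables with weight $1$: its relation has size $1$ under the binding, so this costs nothing. The remaining atoms then need total weight at most $\fchw-1$ wherever the original cover already used weight at least $1$ on EDB atoms. For bags not touching $\bar s_1$, we would instead rely on precomputation, which is paid once and is data-independent of the delta. Making this bag-by-bag accounting precise, while respecting the connex condition \Givref{} needed for enumeration, is where we expect most of the care to go.
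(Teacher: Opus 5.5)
Your overall plan (semi-na\"ive evaluation over a free-connex decomposition, with linearity supplying the saving) matches the paper's. However, the ``minus one'' step you flag is a genuine gap, and your repair does not close it. Let a bag of width $w \le \fchw(\Pi)$ have an optimal fractional cover that gives weight $x_p \le 1$ to the IDB atom $p(\bar s_1)$. Binding $p$ to one tuple removes only $x_p$ from the cover. The EDB atoms still need weight $w - x_p$, so the cover argument bounds the work per delta tuple by $N^{w-x_p}$, and this equals $N^{\fchw(\Pi)-1}$ only when $x_p = 1$. Raising $x_p$ to $1$ costs nothing, but it does not let you lower any EDB weight. Nor can you ``assume'' $x_p = 1$ without increasing the width.

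A concrete example: take $I(x,y) \gets e(x,y)$ and $q(x,y) \gets I(x,y), e_1(x,z), e_2(y,z)$. Here $\ew(\Pi)=1$ and $\fchw(\Pi)=3/2$, since some bag must contain $\{x,y,z\}$ and its best cover is $\tfrac{1}{2},\tfrac{1}{2},\tfrac{1}{2}$. So the target is $O(N^{3/2})$.
\begin{itemize}
\item For one delta tuple $(a,b)$, the residual problem is whether $e_1(a,\cdot)$ and $e_2(b,\cdot)$ intersect. The cover argument gives only $O(N)$ for this, and a direct intersection does cost $\Theta(N)$ when the two neighbourhoods are disjoint and of size $N/2$.
\item Your precomputation fails on the same bag. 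The static EDB part $e_1 \bowtie e_2$, even projected onto $x,y$, can have $\Theta(N^2)$ tuples. So materialising the EDB part of a bag whose $\lambda$-label contains the IDB atom is not $O(N^{\fchw(\Pi)})$; it is only fine for bags without that atom.
\item Running a worst-case optimal join on each semi-na\"ive delta separately does not rescue this either. For $x_p<1$, the sum $\sum_i |\Delta p_i|^{x_p}$ over iterations can far exceed $(\sum_i |\Delta p_i|)^{x_p}$.
\end{itemize}

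The missing idea is to account over the whole run rather than per tuple, which is what the paper does. It materialises every bag \emph{including} the IDB atom. It maintains the bags with the insert-only algorithm $\mathcal{A}_1$, and the resulting width-$1$ query over bags with $\mathcal{A}_2$. It then bounds each bag's final size using linearity. A bag contains at most one IDB relation, of size at most $f(\Pi)\cdot N^{\ew(\Pi)}$, and otherwise EDB relations of size at most $N$. The AGM bound therefore gives $(f(\Pi)\, N^{\ew(\Pi)})^{x_p} N^{w-x_p} \le f(\Pi)\cdot N^{\ew(\Pi)+\fchw(\Pi)-1}$, using $x_p\le 1$, $f(\Pi)\ge 1$ and $\ew(\Pi)\ge 1$. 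Each rule's maintenance cost is charged to these bag sizes plus the output bound $f(\Pi)\cdot N^{\ew(\Pi)}$, and summing over rules gives the factor $|\Pi|$. In the example, this amortisation is the fact that $\sum_{(a,b)\in I}\min(\deg_{e_1}(a),\deg_{e_2}(b)) = O(N^{3/2})$ by a heavy/light split, even though individual tuples can cost $\Theta(N)$.
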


\begin{proof}[Proof sketch]
    The strategy is similar to that of \cref{thm:complexity-general}.
    The difference is that, for linear datalog programs, if each EDB relation has size at most $N$ and each IDB relation has size at most $f(\Pi) \cdot N^{\ew(\Pi)}$, then each intermediate relation materialized for each bag of the FCGHD has size at most $f(\Pi) \cdot N^{\ew(\Pi) + \fchw(\Pi) - 1}$, since there are at least $(\fchw(\Pi) - 1)$ EDB relations each with size at most $N$, and the remaining one is either IDB or EDB, but the size is at most $f(\Pi) \cdot N^{\ew(\Pi)}$.
\end{proof}

In addition, we present the complexity bound for a new class of programs:

\begin{definition}
    A datalog program is called \emph{adornment groundable}, if, for every rule,
    \begin{enumerate}[leftmargin=*]
        \item every EDB atom must either have a variable that is not shared with any other atom and appears in the rule's head, or every variable in the EDB atom must appear in the rule's head; and
        \item each IDB variables must appear in the head or in some EDB atom.
    \end{enumerate}
\end{definition}

This definition implies that, for every rule of the equivalent EDB-bounded program, selecting a grounding for the head adornment implies a unique grounding for the entire rule. This class includes classic examples such as transitive closure and reachability. 
\begin{theorem}
\label{thm:adornment_groundable_cmp}
    An adornment groundable datalog program $\Pi$ over a database instance with at most $N$ tuples per EDB relation can be evaluated in time $O(f(\Pi) \cdot |\Pi|\cdot N^{\ew(\Pi)})$.
\end{theorem}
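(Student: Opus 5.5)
We would prove \cref{thm:adornment_groundable_cmp} by evaluating the equivalent EDB-bounded program $\Pi'$ obtained from \cref{alg:adornment} with $\galgo$ and $h_{\sf eq}$ rather than $\Pi$ itself. This is sound because, by \cref{thm:equivalent}, $\Pi'$ is equivalent to $\Pi$. The evaluation is semi-na\"ive, and the cost is charged to the number of candidate head tuples of each rule instead of to the joins in its body. The key structural claim is the one stated informally after the definition: in an adornment-groundable program, fixing a grounding of the head of a rule $R'\in\Pi'$ determines at most a bounded number of groundings of the whole rule body, each computable in time depending only on $\Pi$.

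\textbf{Step 1: groundings of a rule are determined by its head.}
Take a rule $R'$ of $\Pi'$, obtained from $R\in\Pi$ via a unifier $\sigma$, and fix a ground head tuple $\bar t$. By condition (2), every variable occurring in an IDB body atom occurs either in the head or in some EDB atom. So it suffices to show that the values of the EDB-atom variables are determined, up to a bounded choice.
By condition (1), each EDB atom $e(\bar u)$ falls into one of two cases.
\begin{itemize}
\item All of its variables occur in the head. Then the atom is fully ground given $\bar t$ and only needs a membership test, for instance via hashing.
\item It has a private variable that occurs in the head. Then the remaining variables of the atom are either head variables, hence fixed, or local to that atom, since they are shared with no other atom.
\end{itemize}
Local existential variables only need to be witnessed, which is a semi-join, and do not branch the grounding. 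Hence the IDB body atoms become ground up to variables that are bound only inside EDB atoms.

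For those variables we would strengthen the argument to show that an IDB variable bound only through an EDB atom is itself fixed by head variables. The likely route is that the definition forces such an EDB atom to be of the second kind with all its non-private variables in the head. We expect this to be the main obstacle, because the definition as stated is loose, and we may need to read condition (2) as requiring that IDB variables either appear in the head or are functionally determined through fully-headed EDB atoms. Precisely what we need is the phrase after the definition, ``selecting a grounding for the head adornment implies a unique grounding''. We would prove that claim first, by induction over the construction in \cref{alg:adornment}, using the fact that $\sigma$ only identifies variables.

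\textbf{Step 2: enumerate candidate heads from the adornment.}
For each rule $R'$ with head $q_\rho$, we enumerate all tuples satisfying $\rho$. By \cref{thm:single-adorned-size} there are at most $N^{\ew(q_\rho)}\le N^{\ew(\Pi)}$ of them, and a worst-case optimal join enumerates them in that time. By \cref{thm:equivalent}, every tuple derived by $R'$ is among these candidates. For each candidate, Step 1 yields the unique body grounding, which we check in time depending only on $\Pi$ by hashing the IDB and EDB relations.

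\textbf{Step 3: account for the fixpoint iteration.}
With semi-na\"ive evaluation, we process each candidate only when one of its supporting IDB body facts is newly derived. Each IDB fact is new exactly once, and it points to a unique candidate per rule occurrence, again by uniqueness of the grounding. So the total work per rule is $O(N^{\ew(\Pi)})$ times a factor depending only on $\Pi$. Summing over rules, with at most $f(\Pi)$ adornments per predicate and $|\Pi|$ original rules, gives $O(f(\Pi)\cdot|\Pi|\cdot N^{\ew(\Pi)})$.
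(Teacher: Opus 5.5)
\textbf{There is a genuine gap.} Your Step~1 is false for the very programs the theorem targets, and Steps~2 and~3 inherit the error.

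Take transitive closure, which the paper explicitly places in the adornment-groundable class. In the rule $\TC(x,z) \gets \TC(x,y), e(y,z)$, the atom $e(y,z)$ satisfies condition~(1) through its private head variable $z$. Its other variable $y$, however, is neither a head variable nor local to the atom: it is shared with the IDB atom. So a ground head tuple $(x,z)$ does not fix $y$. There is one body grounding for every in-neighbour $y$ of $z$ with $\TC(x,y)$ derived, i.e.\ up to $N$ of them, not a number depending only on $\Pi$.

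This has three consequences.
\begin{itemize}
\item The strengthening you plan, that an IDB variable bound only through an EDB atom is determined by the head, is false.
\item Re-reading condition~(2) to force it would exclude transitive closure from the class.
\item In Step~3, a newly derived fact $\TC(x,y)$ does not point to a unique candidate; it fires once for every out-edge of $y$.
\end{itemize}

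\textbf{The missing idea.} What determines the rule is a grounding of the head \emph{adornment}, not of the head tuple. For each atom of a minimum integral edge cover of $\rho$, choose a whole EDB fact, including the values at the ``$\_$'' positions. This gives at most $N^{\ew(\Pi)}$ choices per rule. The rule's EDB atom $e(y,z)$ is matched to the chosen fact that covers its private head variable $z$, which fixes $y$. Condition~(2) then fixes the IDB atoms.

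The paper uses exactly this to generate at most $|\Pi'|\cdot N^{\ew(\Pi)}$ ground Horn clauses, one per rule of $\Pi'$ and adornment grounding. It then solves them with the linear-time Horn-SAT algorithm of Dowling and Gallier, so it needs no fixpoint accounting at all.

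A semi-na\"ive evaluation could possibly be salvaged along the same lines. It would have to charge the work to these ground rule instances, enumerated from adornment groundings, rather than to head tuples or to new IDB facts. For that count, the AGM bound on the head tuples of $\rho$ (\cref{thm:single-adorned-size}) is not the right tool, since many distinct groundings can project to the same head tuple.
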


Due to space constraints, we leave more details about this class to the  \arxivorappendix.

%%%%%%%%%%%%%%%%%%%%%%%%%%%%%%%%%%%%%%%%%%%%%%%%%%%%%%%%%%%%%%%%%%%%

%%
%% The next two lines define the bibliography style to be used, and
%% the bibliography file.
\begin{acks}
Qichen Wang is supported by the Ministry of Education, Singapore, through the Academic Research Fund Tier 1 grant (RS32/25), and by the Nanyang Technological University Startup Grant. Part of this work was completed while Qichen Wang was at EPFL.  Reinhard Pichler is supported by the Vienna Science and Technology Fund (WWTF) [10.47379/ICT2201, 10.47379/VRG18013, 10.47379/NXT22018].  
\end{acks}

\bibliographystyle{ACM-Reference-Format}
\bibliography{sample-base}

\clearpage
\SuspendCounters{TotPages}
\newpage

%%
%% If your work has an appendix, this is the place to put it.
\appendix
\section{Omitted Details in \cref{sec:UpperBoundIdbs}}
\subsection{Proof of \cref{thm:minWidth}}

In order to prove \cref{thm:minWidth},  we first show that the adornments in $\Pi'$ could also be constructed in an alternative way, which is 
based on the following generalization of proof trees: 

\begin{definition}
\label{def:generlizedProofTree}
Let $\Pi$ be a datalog program and let $q(\bar t)$ be an atom \emph{that may contain variables}. 
Then a {\em generalized proof tree} of $q(\bar t)$ w.r.t.\  program $\Pi$ is defined as follows: 

\smallskip
\noindent
{\em Base case.}
If $q$ is an EDB predicate, then the tree consisting of a single node with label $q(\bar t)$ is a 
generalized proof tree of $q(\bar t)$.

\smallskip
\noindent
{\em Inductive case.}
If $q$ is an IDB predicate, then a generalized proof tree $\mathcal{T}$ of $q(\bar t)$ is of the following form: 
there exists a rule $R$ of the form $q(\bar u) \gets p_1(\bar u_1) \dots,  p_\mu(\bar u_\mu)$ and a substitution $\sigma$,
such that the following conditions hold:
\begin{enumerate}[leftmargin=*]
    \item $q(\bar t) = q(\sigma(\bar u))$;
    \item for every $i \in \{1, \dots, \mu\}$, there exists a proof tree $\mathcal{T}_i$ of $p_i(\sigma(\bar u_i))$;
    \item $\mathcal{T}$ has a new root node $r$ labeled with $q(\bar t)$
    and the child nodes
of $r$ are obtained by turning $\mathcal{T}_1, \dots, \mathcal{T}_\mu$ into subtrees
of $\mathcal{T}$, such that the root nodes of $\mathcal{T}_1, \dots, \mathcal{T}_\mu$ 
become the child nodes of~$r$;
\item the edges from the root to its child nodes are labeled with 
    the rule~$R$.
\end{enumerate}
\end{definition}

\noindent
Note that, apart from allowing variables in the atoms labeling the nodes, 
generalized proof trees differ from standard proof trees in that we do not consider them over a concrete database instance. 
Hence, the only condition on the leaf nodes is that they must have an EDB predicate, rather than requesting that these atoms
have to be contained in the EDB.

\begin{example}
Consider the transitive closure example with following rules:
\begin{align*}
    &R_1: \TC(x,y) \gets e(x,y).\\
    &R_2: \TC(x,z) \gets \TC(x,y),e(y,z).
\end{align*}
\cref{fig:gn_proof_tree_tc} shows one generalized proof tree for it. Note, however, that this is not the only possible generalized proof tree. There is, in fact, one generalized proof tree corresponding to each level of inlining of the IDB predicate $\TC$ in $R_2$.
    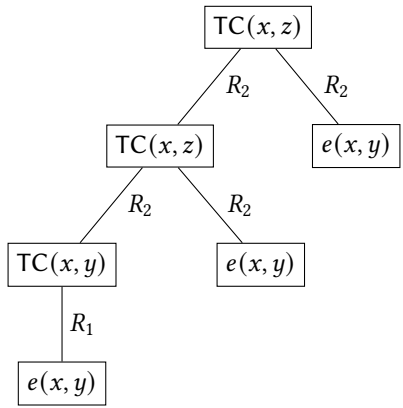
\begin{figure}[h]
    \centering
    \begin{tikzpicture}
\node (e) at (103.75bp,141.75bp) [draw,rectangle] {$\TC(x,z)$};
 \node (d) at (66.75bp,97.50bp) [draw,rectangle] {$\TC(x,z)$};
 \node (f) at (140.75bp,97.50bp) [draw,rectangle] {$e(x,y)$};
 \node (c) at (29.75bp,53.25bp) [draw,rectangle] {$\TC(x,y)$};
 \node (b) at (104.75bp,53.25bp) [draw,rectangle] {$e(x,y)$};
 \node (a) at (29.75bp,9.00bp) [draw,rectangle] {$e(x,y)$};
 \draw [] (e) ..controls (89.996bp,125.17bp) and (80.649bp,114.25bp) .. (d);
 \definecolor{strokecol}{rgb}{0.0,0.0,0.0};
 \pgfsetstrokecolor{strokecol}
 \draw (96.439bp,119.62bp) node {$R_2$};
 \draw [] (e) ..controls (117.5bp,125.17bp) and (126.85bp,114.25bp) .. (f);
 \draw (133.44bp,119.62bp) node {$R_2$};
 \draw [] (d) ..controls (52.996bp,80.92bp) and (43.649bp,70.00bp) .. (c);
 \draw (59.439bp,75.38bp) node {$R_2$};
 \draw [] (d) ..controls (80.876bp,80.92bp) and (90.475bp,70.00bp) .. (b);
 \draw (97.029bp,75.38bp) node {$R_2$};
 \draw [] (c) ..controls (29.75bp,36.67bp) and (29.75bp,25.75bp) .. (a);
 \draw (37.625bp,31.12bp) node {$R_1$};
\end{tikzpicture}
    \caption{Example of one generalized proof tree for transitive closure}
    \label{fig:gn_proof_tree_tc}
\end{figure}
\end{example}

We now present the following lemma on the exact alternative way to construct adornments, namely based on leaf nodes of generalized proof trees.
\begin{lemma}
\label{lem:alternativeAdornedProgram}
Let $\Pi$ be a datalog program and let $\Pi'$ be the EDB-adorned program 
resulting from executing Algorithm~\ref{alg:adornment} with functions $\galgo$ and $h_{\rm eq}$.
Moreover, let $R'$ be a rule in $\Pi'$ with head atom $q_{\rho}(\bar t)$. 
Then $q_{\rho}(\bar t)$ has a generalized proof tree $\mathcal{T}$
with the property that  $\rho$ can be obtained as follows: 
first construct a rule $\rho'$, whose head atom is $q(\bar t)$ and whose body consists of all 
EDB atoms at the leaf nodes of $\mathcal{T}$. Then $\rho$ is obtained from $\rho'$ 
by applying the adornment relaxation function  $\galgo$, i.e., 
replacing all variables outside $\var (\bar t)$ by ``$\_$'',  followed by 
redundancy elimination. 
\end{lemma}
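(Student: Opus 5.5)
We argue by induction on the iteration of the main loop of \cref{alg:adornment} in which $R'$ is added to $\Pi'$, or equivalently on the number of rules of $\Pi'$ generated before $R'$. Along the way we build the generalized proof tree explicitly, mirroring the construction of $R'$. The key auxiliary fact is that $\galgo$ is compatible with substitution and composition: if $\rho_i$ is obtained from a rule $\rho_i'$ by $\galgo$, then applying a substitution $\sigma$ to the body of $\rho_i$ and replacing every variable outside the new head by ``$\_$'' gives the same atoms as applying $\sigma$ directly to the body of $\rho_i'$ and then replacing. The exception is atoms deleted by redundancy elimination, and these are subsumed by retained atoms.

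\emph{Base case.} Here $R'$ comes from a rule $R\colon q(\bar s) \gets \mathcal{E}$ of $\Pi$ with only EDB atoms, so $\lambda=0$ and $\sigma$ is the identity. The generalized proof tree consists of a root $q(\bar s)$ whose children are the atoms of $\mathcal{E}$, all leaves, with the edges labeled $R$. Its leaf rule $\rho'$ is exactly $\rho_0$ before relaxation, so $\rho=\galgo(\rho_0)=\galgo(\rho')$.

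\emph{Inductive step.} Suppose $R'$ is built from $R\colon q(\bar s)\gets \bigwedge_i p_i(\bar s_i)\wedge\mathcal{E}$ and earlier rules $R_1,\dots,R_\lambda$ with heads ${p_i}_{\rho_i}(\bar v_i)$. By the induction hypothesis, each ${p_i}_{\rho_i}(\bar v_i)$ has a generalized proof tree $\mathcal{T}_i$ whose leaf rule $\rho_i^\ast$ satisfies $\galgo(\rho_i^\ast)=\rho_i$. We then proceed in four steps.
\begin{enumerate}
\item Rename each $\mathcal{T}_i$ apart, extending the renaming of $\bar v_i$ to all variables of $\mathcal{T}_i$ and using fresh names for the non-head variables, so that the trees are pairwise disjoint and disjoint from $R$.
\item Apply the mgu $\sigma$ to every label. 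Generalized proof trees are closed under substitution, because each rule application in the tree is witnessed by a substitution and composing that substitution with $\sigma$ still gives a valid witness. Since $\sigma$ moves only variables of $R$ and of the renamed heads, the fresh internal variables stay fresh.
\item Attach the instantiated trees $\sigma(\mathcal{T}_i)$ together with leaves $\sigma(\mathcal{E})$ under a new root $q(\sigma(\bar s))$, with the edges labeled $R$. This is a generalized proof tree of $q_\rho(\sigma(\bar s))$, where $\sigma(\bar s)$ is the head of $R'$.
\item Its leaf rule $\rho'$ has body $\sigma(\mathcal{E})\cup\bigcup_i\sigma(\mathrm{body}(\rho_i^{\ast\prime}))$, where $\rho_i^{\ast\prime}$ denotes the renamed leaf rule of $\mathcal{T}_i$.
\end{enumerate}

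It remains to show $\galgo(\rho')=\galgo(\rho_0)$. For the non-head variables, any variable of $\rho_i^{\ast\prime}$ not among $\var(\bar v_i')$ is fresh and untouched by $\sigma$, so it is not in $\var(\sigma(\bar s))$. It therefore becomes ``$\_$'' in both constructions. For the head variables $x\in\var(\bar v_i')$, $\sigma(x)$ either lies in $\var(\sigma(\bar s))$ and is kept in both, or does not and becomes ``$\_$'' in both. Replacing all such variables therefore yields identical atom multisets, except that $\rho_0$ starts from the already-reduced $\rho_i$ rather than from $\rho_i^\ast$. So $\galgo(\rho')$ may contain extra atoms compared with $\galgo(\rho_0)$, namely images of atoms that were removed as redundant when $\rho_i$ was formed.

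\emph{Main obstacle.} The hardest step is showing these extra atoms are harmless. An atom $e(\bar a)$ is removed as redundant because some $e(\bar a')$ agrees with it on every non-``$\_$'' position. Applying $\sigma$ and then replacing non-head variables by ``$\_$'' preserves this relation, since positions that are ``$\_$'' stay ``$\_$''. Hence the image of a removed atom is again dominated and is removed in the final redundancy elimination. To make this airtight, we need redundancy elimination to be confluent, i.e., to always yield the set of maximal-restriction atoms up to duplicates, so that the order of removal does not matter. With that, $\galgo(\rho')=\galgo(\rho_0)=\rho$, which completes the induction.
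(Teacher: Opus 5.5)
Your proposal is correct and follows essentially the same route as the paper. Both arguments proceed by induction on the order in which rules enter $\Pi'$ (the paper's ``construction depth''), graft the $\sigma$-instantiated subtrees $\mathcal{T}_i$ under a new root, check that replacing non-head variables by ``$\_$'' commutes with $\sigma$, and then show that eliminating redundant atoms first inside each $\rho_i$ and then in $\rho$ gives the same result as a single elimination in $\rho$.

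Your treatment of that last step is tidier than the paper's two-directional case analysis. You use that ``apply $\sigma$, then relax'' is monotone for positionwise dominance, and that dominance is a partial order, so deleting strictly dominated atoms always leaves exactly the maximal ones. That settles the confluence question you flag. Your explicit renaming of the internal tree variables apart also makes precise a freshness assumption that the paper uses only implicitly.
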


\begin{proof}
As a preparatory step, we introduce the notion of {\em construction depth} $cd(R')$ of a rule $R'$ in $\Pi'$. 
If $R'$ is a rule with only EDB atoms in the body, we set $cd(R') := 1$.
Now suppose that $R'$ is constructed according to the
adornment propagation in Algorithm~\ref{alg:adornment}. That is, $R'$ is constructed from 
some rule $R$ in $\Pi$ together with EDB-adorned rules $R_1, \dots, R_\lambda$ in $\Pi'$ (one for
each IDB atom in the body of $R$). Then we define $cd(R') := 1 + \max \{ cd(R_i) \mid 1 \leq i \leq \lambda \}$.
We prove the lemma by induction on the construction depth of the rule $R'$:

\smallskip
\noindent
{\em Base case.}
Consider a rule $R'$ with $cd(R') = 1$ and with head atom 
$q_{\rho}(\bar t)$.
Suppose that $R'$ is of the form 
$q_\rho(\bar t) \gets p_1(\bar t_1) \dots,  p_\mu(\bar t_\mu)$,
where all body atoms $p_i(\bar t_i)$ are
EDB atoms. 
Then a generalized proof tree $\mathcal{T}$ of 
$q_{\rho}(\bar t)$ is obtained by labeling the root node with 
$q_{\rho}(\bar t)$, appending $\mu$ child nodes to the root with labels
$p_1(\bar t_1) \dots,  p_\mu(\bar t_\mu)$,
and labeling all edges from the root to its child nodes with the rule $R'$.
Moreover, the adornment $\rho$ has head atom $q(\bar t)$ and the body can be obtained in two possible ways:
(1) by \cref{alg:adornment}, we take all body atoms 
of $R'$ followed by an application of $\galgo$; 
(2) by the lemma, we take the EDB atoms labeling the 
leaf nodes of  $\mathcal{T}$  followed by an application of $\galgo$. 
Clearly, both ways lead to the same adornment $\rho$. 

\smallskip
\noindent
{\em Induction step.}
Now consider a rule $R'$ with $cd(R') > 1$ and with head atom 
$q_{\rho}(\bar t)$. By $cd(R') > 1$, $R'$ contains at least one IDB atom in the body. 
W.l.o.g., let $1 \leq \lambda \leq \mu$, such that $p_1, \dots, p_\lambda$ are 
IDB predicates and $p_{\lambda+1}, \dots, p_\mu$ are EDB predicates.
Hence, $R'$ is of the form 
$R'\colon q_\rho(\sigma(\bar s)) \leftarrow 
{p_1}_{\rho_1} (\sigma(\bar s_1)), \dots, {p_\lambda}_{\rho_\lambda} (\sigma(\bar s_\lambda)),
p_{\lambda+1}( \sigma(\bar s_{\lambda+1}))$, $\dots, p_\mu(\sigma(\bar s_\mu))$,
and $q_\rho(\sigma(\bar s)) = q_\rho(\bar t)$.

By the adornment propagation in Algorithm~\ref{alg:adornment},
$R'$ is constructed as follows:  
There exists  a rule  $R\colon q(\bar s) \leftarrow p_1(\bar s_1), \dots, p_\mu(\bar s_\mu)$
in $\Pi$ and rules $R_1, \dots, R_\lambda$ in $\Pi'$, and, w.l.o.g., 
we may assume that $R, R_1, \dots, R_\lambda$ are variable-disjoint.
The head atoms of $R_1, \dots, R_\lambda$ are 
of the form ${p_1}_{\rho_1}(\bar u_1)$, \dots,  ${p_\lambda}_{\rho_\lambda}(\bar u_\lambda)$ and
$\sigma$ is a
simultaneous, most general unifier  
of $\{ (\bar s_1, \bar u_1), \dots, (\bar s_\lambda, \bar u_\lambda) \}$.
We elaborate on the construction of $\rho$ further below
and concentrate on the generalized proof tree first.

Note that all IDB atoms in the body of $R'$ have construction depth $< cd(R')$. 
Hence, by the induction hypothesis, for every $i \in \{1, \dots, \lambda\}$, 
the IDB atom ${p_i}_{\rho_i} (\bar u_i)$ has a generalized proof tree $\mathcal{T}_i$,
s.t.\ $\rho_i$ is obtained as follows:   
first construct a rule $\rho'_i$, whose head atom is ${p_i}_{\rho_i}(\bar u_i)$ and whose body consists of all 
EDB atoms at the leaf nodes of $\mathcal{T}_i$. Then $\rho_i$ is obtained from $\rho'_i$ 
by applying the adornment relaxation function  $\galgo$, i.e.,
replacing all variables outside $\var (\bar u_i)$ by ``$\_$'',  followed by 
redundancy elimination. 

For every $i \in \{1, \dots, \lambda\}$, we can then get a proof tree $\mathcal{T}'_i$ of the instance 
${p_i}_{\rho_i}(\sigma(\bar u_i))$  of ${p_i}_{\rho_i}(\bar u_i)$ by replacing in 
every atom in the tree $\mathcal{T}_i$ every variable $x \in \var(\bar u_i)$ by $\sigma(x)$.
Likewise, we obtain the rule $\sigma(\rho_i)$ by applying the substitution $\sigma$ to the 
head atom ${p_i}_{\rho_i}(\bar u_i)$ and to every  body atom of $\rho_i$.
We can then construct a proof tree $\mathcal{T}$ of $q_\rho(\sigma(\bar s))$
by introducing a new root node $r$ with label $q_\rho(\sigma(\bar s))$ with $\mu$ child nodes and 
labeling the edges to these child nodes by the rule $R'$. The first $\lambda$ child nodes
of $r$ are obtained by turning $\mathcal{T}'_1, \dots, \mathcal{T}'_\lambda$ into subtrees
of $\mathcal{T}$, such that the root nodes of $\mathcal{T}'_1, \dots, \mathcal{T}'_\lambda$ 
become child nodes of $r$. The remaining $\mu - \lambda$ child nodes of $r$ 
are leaf nodes with labels $p_{\lambda+1}( \sigma(\bar s_{\lambda+1}))$, $\dots, p_\mu(\sigma(\bar s_\mu))$.

Finally, according to Algorithm~\ref{alg:adornment},
the adornment $\rho$ of the head atom $q_\rho(\sigma(\bar s))$ of  $R'$ is  obtained by
taking $q(\sigma(\bar s))$ as head atom of $\rho$ and constructing the body of $\rho$
as follows: 
\begin{itemize}[leftmargin=*]
    \item in every EDB atom of the body of $R'$, the variables not occurring in $\sigma(\bar s)$ are replaced by 
    ``$\_$'', and the resulting atom is added to the body of $\rho$;
    \item in every body atom of every rule $\rho_i$, we first apply the substitution $\sigma$ and then again 
    replace the variables not occurring in $\sigma(\bar s)$ by 
    ``$\_$'', before we add the resulting atom to the body of $\rho$.  
    \item we then apply redundancy elimination according to the adornment relaxation function $\galgo$.
\end{itemize}
We now consider an alternative way of constructing the adornment $\rho$
via the generalized proof tree $\mathcal{T}$ constructed above, i.e., 
we follow the construction of $\rho$ according to the lemma.
And we  argue that we thus get exactly the same rule $\rho$ as 
described above via \cref{alg:adornment}.

\smallskip
\noindent
We again take 
$q(\sigma(\bar s))$ as head atom of $\rho$. The body of $\rho$ can be obtained 
as follows:
\begin{itemize}[leftmargin=*]
    \item In every EDB atom of the body of $R'$, the variables not occurring in $\sigma(\bar s)$ are replaced by 
    ``$\_$'', and the resulting atom is added to the body of $\rho$. Note that the EDB atoms of the body of $R'$ 
    are precisely those leaf nodes of $\mathcal{T}$, which are child nodes of the root of $\mathcal{T}$.
    Hence, we can get these body atoms of $\rho$ also by starting off with the leaf nodes 
    immediately below the root of $\mathcal{T}$ and 
    replacing the variables outside $\var (\sigma(\bar s))$ by ``$\_$''.
    \item Now consider the remaining EDB-atoms at the leaf nodes of $\mathcal{T}$, i.e., the leaf nodes of the 
    generalized proof trees $\mathcal{T}_i$. By the induction hypothesis, 
    the body of $\rho_i$ is obtained by taking the EDB atoms at the leaf nodes 
    of $\mathcal{T}_i$, replacing the variables outside $\var(\bar u_i)$ by ``$\_$'', followed by redundancy elimination.
    Apart from the final redundancy elimination step, we thus get the remaining body atoms of $\rho$ by applying 
    $\sigma$ to the body atoms of $\rho_i$ and  
    replacing the variables not occurring in $\sigma(\bar s)$ by ``$\_$''.  Note that first replacing
    variables in an atom $A$ at a leaf node of $\mathcal{T}_i$ outside $\var(\bar u_i)$ by ``$\_$'' to produce atom $A'$, 
    and then 
    replacing in $\sigma(A')$ every variable  outside $\var(\sigma(\bar s))$ by ``$\_$'' produces 
    exactly the same atom as taking atom $A$, applying $\sigma$, and then replacing 
    every variable  outside $\var(\sigma(\bar s))$ by ``$\_$''. The reason for this is that $\sigma$ leaves all variables in the body of $\rho_i$ outside
    $\var(\bar u_i)$ unchanged and the variables in $\rho_i$ outside
    $\var(\bar u_i)$ are also outside $\var(\sigma(\bar s))$. 
    \item We then apply redundancy elimination according to the adornment relaxation function~$\galgo$.
\end{itemize}
It only remains to show that redundancy elimination in every $\rho_i$ followed by redundancy elimination in $\rho$
(referred to as ``two-stage process'' below)
has the same effect as redundancy elimination only applied to $\rho$
(referred to as ``one-stage process'' below). 

We first observe that, in the sequel, 
we may ignore the replacement of some variables by ``$\_$'', since 
it is 
irrelevant for redundancy elimination. That is, 
a body atom $A' = p(\bar u)$ in a rule $\rho$ is redundant, if there exists another body atom 
$B' = p(\bar v)$, such that, for every $i \in \ar(p)$, either $v_i = u_i$ or 
$u_i = \_$. Phrased differently, if $H$ is the head atom of $\rho$, 
then $A'$ is redundant in $\rho$ if there exists another body atom 
$B'$ in $\rho$ with $\vars(A') \cap \vars(H) \subseteq \vars(B') \cap \vars(H)$. 
Note that function $\galgo$ only replaces variables outside $\vars(H)$ by 
``$\_$''. So suppose that $A'$ is obtained from some atom $A$, and
$B'$ is obtained from some atom $B$
by replacing some variables outside $\vars(H)$ by ``$\_$''. 
Then we clearly have 
$\vars(A') \cap \vars(H) \subseteq \vars(B') \cap \vars(H)$
if and only if  
$\vars(A) \cap \vars(H) \subseteq \vars(B) \cap \vars(H)$. 

Hence, if a body atom in $\rho$ constructed by the adornment propagation
in \cref{alg:adornment} or, equivalently, by putting the EDB atoms labeling the 
leaf nodes of $\mathcal{T}$ into the body of $\rho$, then 
redundancy of a body atom of $\rho$ is the same with or without the 
replacement of the variables outside the head of $\rho$ by ``$\_$''. 
Note that this is also true for EDB atoms $A$ labeling a leaf node of one of the
generalized proof trees $\rho_i$ and where the replacement of variables by ``$\_$''
occurs in 2 stages: first replacing variables outside the head of $\rho_i$ and 
then, after applying the mgu $\sigma$, replacing variables outside $\vars(\sigma (\bar s))$
by~``$\_$''.
As has already been argued above, this yields the same result as 
first applying $\sigma$ to $A$ and then replacing variables outside 
$\vars(\sigma (\bar s))$ by~``$\_$''.

We are now ready to prove the redundancy elimination in the one-stage process has the 
same effect as redundancy elimination in the two-stage process. 
We first prove that, if an atom $\sigma(A)$ is redundant in $\rho$ in the one-stage process,
then it is not added to $\rho$ in the two-stage process either.
So let $\sigma(A)$ be redundant in $\rho$ in the one-stage process. 
Then there exists a body atom 
$\sigma(B)$ in $\rho$ with
    $\var(\sigma(A)) \cap \var(\sigma(\bar s)) \subseteq 
    \var(\sigma(B)) \cap \var(\sigma(\bar s))$. 
We have to show that then $\sigma(A)$ is redundant in $\rho$ also in the 
two-stage process. 

First suppose that $\sigma(B)$ is a body atom of $R'$. Then 
$\sigma(B)$ or some atom 
$\sigma(C)$ with
    $\var(\sigma(B)) \cap \var(\sigma(\bar s)) \subseteq 
    \var(\sigma(C)) \cap \var(\sigma(\bar s))$
is  in $\rho$ in the two-stage process. 
In either case, $\sigma(A)$ is also redundant in the two-stage process.
Similarly, suppose that $B$ is the atom at some leaf node of a generalized 
proof tree $\mathcal{T}_i$ and that $B$ is not redundant in $\rho_i$. Then
$\sigma(B)$ or some atom 
$\sigma(C)$ with
    $\var(\sigma(B)) \cap \var(\sigma(\bar s)) \subseteq 
    \var(\sigma(C)) \cap \var(\sigma(\bar s))$
is  in $\rho$ in the two-stage process. 
In either case, $\sigma(A)$ is also redundant in the two-stage process.

It remains to consider the case that 
$B$ is the atom at some leaf node of a generalized 
proof tree $\mathcal{T}_i$ and $B$ is redundant in $\rho_i$.
This means that $\rho_i$ contains an atom $C$ with 
$\var(A) \cap \var(\bar u_i) \subseteq
\var(B) \cap \var(\bar u_i) \subseteq \var(C) \cap \var(\bar u_i)$. 
Then 
$\sigma(C)$ or some atom 
$\sigma(D)$ with
    $\var(\sigma(C)) \cap \var(\sigma(\bar s)) \subseteq 
    \var(\sigma(D)) \cap \var(\sigma(\bar s))$
is  in $\rho$ in the two-stage process. 
That is, we either have
$\var(\sigma(A)) \cap \var(\sigma(\bar s)) \subseteq 
    \var(\sigma(C)) \cap \var(\sigma(\bar s))$ with $\sigma(C)$ in $\rho$ 
    in the two-stage process or 
$\var(\sigma(A)) \cap \var(\sigma(\bar s)) \subseteq 
    \var(\sigma(D)) \cap \var(\sigma(\bar s))$ with 
$\sigma(D)$ in $\rho$ 
    in the two-stage process.
In either case, $\sigma(A)$ is also redundant in the two-stage process.
Note that we have here made use of the following fact: 
let $X, Y$ be atoms at leaf nodes of $\mathcal{T}_i$  with 
$\var(X) \cap \var(\bar u_i) \subseteq
\var(Y) \cap \var(\bar u_i)$. Then 
also 
$\var(\sigma(X)) \cap \var(\sigma(\bar s)) \subseteq 
    \var(\sigma(Y)) \cap \var(\sigma(\bar s))$ holds.
This is due to the fact that only 
variables in $\var(\bar u_i)$ can possibly be mapped by $\sigma$ to 
a variable in $\var(\sigma(\bar s))$.
    
\medskip
\noindent
We now  prove that, if 
an atom $\sigma(A)$ is not added to $\rho$ in the two-stage process,
then it is also redundant in $\rho$ in the one-stage process.
So let $\sigma(A)$ be an atom that is not added to $\rho$ in the two-stage process. 
If $\sigma(A)$ is a body atom in $R'$, then $\rho$ constructed in the 
two-stage process contains an atom $\sigma(B)$ with 
$\var(\sigma(A)) \cap \var(\sigma(\bar s)) \subseteq 
    \var(\sigma(B)) \cap \var(\sigma(\bar s))$.
Note that $\sigma(B)$ cannot be redundant in $\rho$ in the one-stage process, 
since we have shown above that then $\sigma(B)$ would also be redundant in $\rho$ 
in the two-stage process. Hence, $\sigma(B)$ in $\rho$ renders
$\sigma(A)$ redundant also in the one-stage process.

Now consider the case that $A$ is the atom at some leaf node of 
a generalized proof tree $\mathcal{T}_i$ and that $A$ is not redundant in $\rho_i$.
But we are assuming that $\sigma(A)$ is not added to $\rho$. Hence, 
$\rho$ (in the two-stage process) contains an atom $\sigma(B)$ with 
$\var(\sigma(A)) \cap \var(\sigma(\bar s)) \subseteq 
    \var(\sigma(B)) \cap \var(\sigma(\bar s))$.
Again, we may conclude that $\sigma(B)$ cannot be redundant in $\rho$ in the one-stage process
and, therefore, it renders $\sigma(A)$ redundant also in the one-stage process.

It remains to consider the case that 
$A$ is the atom at some leaf node of 
a generalized proof tree $\mathcal{T}_i$ and that $A$ is redundant in $\rho_i$.
Then $\rho_i$ contains an atom $B$ with 
$\var(A) \cap \var(\bar u_i) \subseteq \var(B) \cap \var(\bar u_i)$.
Then $\rho$ (in the two-stage process) either contains $\sigma(B)$ or an atom 
 $\sigma(C)$ with 
$\var(\sigma(B)) \cap \var(\sigma(\bar s)) \subseteq 
    \var(\sigma(C)) \cap \var(\sigma(\bar s))$.
As before,  we conclude that then $\rho$ contains 
either $\sigma(B)$ or $\sigma(C)$ also in the one-stage process, and this atom renders 
$\sigma(A)$ redundant in the one-stage process.   
\end{proof}

The following corollary is an immediate consequence of Lemma~\ref{lem:alternativeAdornedProgram}.

\begin{corollary}
\label{cor:ew-rule-vs-prooftree}    
Let $\Pi$ be a datalog program and let $\Pi'$ be the EDB-bounded program 
resulting from a call of  Algorithm~\ref{alg:adornment} 
with functions $\galgo$ and $h_{\rm eq}$.
Moreover, let $R'$ with head atom $q_\rho(\bar t)$  be a rule in $\Pi'$
and let $S$ be a subset of the body atoms of $\rho$, such that $S$ is a 
minimum-cardinality edge cover of $\var(\bar t)$. Then $S$ is also a minimum-cardinality
edge cover of $\var(\bar t)$ among subsets of the EDB-atoms at the leaf nodes of a 
generalized proof tree of $q_\rho(\bar t)$.
\end{corollary}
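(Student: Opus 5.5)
The plan is to apply Lemma~\ref{lem:alternativeAdornedProgram} to $R'$ and compare, through head variables, the candidate edge covers coming from the two sources. By the lemma, $q_\rho(\bar t)$ has a generalized proof tree $\mathcal{T}$ with the following property. Let $L$ be the multiset of EDB atoms at the leaves of $\mathcal{T}$, and let $\rho'$ be the rule with head $q(\bar t)$ and body $L$. Then $\rho = \galgo(\rho')$.

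The first step is to observe that for covering $\var(\bar t)$, an atom $A$ matters only through $\var(A)\cap\var(\bar t)$. The underscore replacement of $\galgo$ changes only variables outside $\var(\bar t)$. Hence each leaf atom $A\in L$ has an image $A'$ with $\var(A')\cap\var(\bar t) = \var(A)\cap\var(\bar t)$. As in the proof of Lemma~\ref{lem:alternativeAdornedProgram}, $A'$ is then either kept in $\rho$, or it is removed as redundant because of some other atom $B'$ with $\var(A')\cap\var(\bar t)\subseteq\var(B')\cap\var(\bar t)$.

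Next I would show that the dominating atom can always be taken to be an atom that survives in $\rho$. Dominance by inclusion of head-variable sets is transitive, and the body is finite. So follow the chain of dominating atoms: if the dominating atom was itself removed, pass to the atom that dominates it, and so on. This ends at a retained atom $B^\ast$ of $\rho$ with $\var(A)\cap\var(\bar t)\subseteq\var(B^\ast)\cap\var(\bar t)$. The only care needed is for ties between equal head-variable sets, where exactly one representative is kept. I expect making this chain argument precise to be the main, though modest, obstacle.

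With this, both directions follow.

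(i) \emph{Leaf covers give covers in $\rho$ of no larger size.} Take any edge cover $C\subseteq L$ of $\var(\bar t)$ and map each $A\in C$ to its surviving dominator $B^\ast$ (to $A'$ itself if $A'$ is kept). The image is a subset of the body of $\rho$, it still covers $\var(\bar t)$, and its cardinality is at most $|C|$. So the minimum cover size over $\rho$ is at most the minimum over $L$.

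(ii) \emph{Covers in $\rho$ lift to leaf covers of the same size.} Every atom of $\rho$ is $A'$ for some leaf atom $A$ with the same head-variable set. So $S$ lifts to a subset of $L$ of cardinality $|S|$ that covers $\var(\bar t)$.

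Combining (i) and (ii), the two minimum cardinalities coincide. Since $S$ is minimum among subsets of the body of $\rho$, it is, identified with its lifted preimage in $L$, a minimum-cardinality edge cover of $\var(\bar t)$ among subsets of the leaf atoms of $\mathcal{T}$.
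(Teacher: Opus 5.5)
Your proposal is correct and takes essentially the same route as the paper. You invoke Lemma~\ref{lem:alternativeAdornedProgram}, note that replacing variables outside $\var(\bar t)$ by underscores does not affect covering $\var(\bar t)$, and replace each redundant leaf atom in a minimum leaf cover by an atom of $\rho$ that dominates it. Your chain argument to a dominator that actually survives in $\rho$ and your lifting of $S$ back to the leaves (direction (ii)) make explicit two steps that the paper leaves implicit.
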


\begin{proof}
It has been shown in Lemma~\ref{lem:alternativeAdornedProgram} that, apart from redundancy elimination, the body atoms of 
$\rho$ are precisely the atoms at the leaf nodes of a generalized proof tree $\mathcal{T}$  
of  $q_\rho(\bar t)$. Clearly the replacement of variables outside $\var(\bar t)$ can be 
ignored since these variables never contribute to an edge cover.
It only remains to 
show that an atom $A$ deleted as part of the redundancy elimination can never lead to a smaller edge cover of $q_\rho(\bar t)$. Indeed, suppose that $A$ is part of a cardinality-minimum edge cover of $\var(\bar t)$ from the 
atoms at the leaf nodes of $\mathcal{T}$. The redundancy of $A$ means that there exists an atom $B$
in $\rho$ with $\var(A) \cap \var(\bar t) \subseteq \var(B) \cap \var(\bar t)$. Hence, we can replace $A$ by $B$ and
still have a minimum-cardinality edge cover of $\var(\bar t)$ from the 
atoms at the leaf nodes of $\mathcal{T}$. By iterating this process, we can replace all atoms in the minimum-cardinality
edge cover by atoms from the body of $\rho$. We thus end up with a minimum-cardinality edge cover of $\var(\bar t)$ 
from the atoms  at the leaf nodes of $\mathcal{T}$, which also realizes the edge-cover width of $q_\rho$ 
in~$\rho$.   
\end{proof}

With our notion of generalized proof trees, we can now argue that the construction given by  \cref{alg:adornment}, 
when called with functions $\galgo$ and $h_{\rm eq}$,
indeed produces an EDB-bounded program equivalent to the original program with minimal width, i.e., \cref{thm:minWidth}.

\begin{proof}[Proof of \cref{thm:minWidth}]
Let $q_\rho$ be an adorned IDB predicate in $\Pi'$ that 
realizes the edge-cover width of $q$ in $\Pi'$, i.e.,
$\ew_{\Pi'}(q) = \ew_{\Pi'}(q_\rho)$.
Moreover, let rule $R'$ be a rule in $\Pi'$
with head atom $q_\rho(\bar t)$. 
By Lemma~\ref{lem:alternativeAdornedProgram}, 
$q(\bar t)$ has a generalized proof tree $\mathcal{T}$ in $\Pi'$.

We define the ``canonical database''  of 
$q(\bar t)$ w.r.t.\ $\mathcal{T}$
as the EDB $D$ consisting of the following atoms: 
we start off with the  EDB atoms at the leaf nodes of $\mathcal{T}$, 
but we replace every variable $x_i$ occurring in any atom at the leaf nodes of $\mathcal{T}$
by a distinctive constant $a_i$.
We can then turn the generalized proof tree of $q(\bar t)$ into a (standard) proof tree of
$q_\rho(\sigma(\bar t))$ over the EDB $D$ by applying the substitution $\sigma$ to all variables occurring in 
$\mathcal{T}$. Of course, 
also the rule $\rho$ fires with substitution $\sigma$ over the EDB $D$ 
and produces the IDB atom  
$q_\rho(\sigma(\bar t))$. 

Now let 
$\{x_1, \dots, x_m\}$ denote the variables in $\var(\bar t)$
and let $k = \ew_{\Pi'}(q)$. 
This means that the minimum number of body atoms in $\rho$ and, by 
Corollary~\ref{cor:ew-rule-vs-prooftree}, also the minimum number of atoms from the 
leaf nodes of $\mathcal{T}$ to cover all of $\var(\bar t)$ is $k$.
Likewise, the minimum number of atoms from $D$ to cover all constants
$\{a_1, \dots, a_m\}$ is $k$. 

Now consider another EDB-bounded program $\Pi''$ of $\Pi$. 
We want to show that $\ew_{\Pi''}(q) \geq k$. 
Suppose to the contrary that  $\ew_{\Pi''}(q) < k$. 
By the definition of EDB-boundedness, $\Pi''$ contains a rule 
$R''$ with head predicate $q_{\rho''}$, such that 
$\sigma(\bar t)$ is in the IDB relation $q_{\rho''}$. Hence, 
$\rho''$ fires on the EDB $D$ and produces the 
IDB atom $q(\sigma(\bar t))$. 
But then also the rule $\rho^*$ obtained from $\rho''$ by deleting all body atoms except for 
a minimum-cardinality edge cover fires on $D$ and  produces the 
IDB atom $q(\sigma(\bar t))$. But this means that matching 
the body of $\rho^*$ (which consists of less than $k$ atoms) 
into the EDB $D$ produces less than $k$ atoms of $D$ that 
cover all constants $\{a_1, \dots, a_m\}$. 
This contradicts the above observation that at least $k$ 
atoms from $D$ are required to cover all constants $\{a_1, \dots, a_m\}$.    
\end{proof}

\subsection{Proof of \cref{thm:sharp-p-hard}}

We give a reduction from the following problem which is known to be \SP-hard. (This is the exact problem that is used in the proof in \cite[Section~5]{dalvi_dichotomy_2012}.)
% \begin{theorem}[\cite{provan_complexity_1983,dalvi_dichotomy_2012}]
    Let $X_1, \dots, X_{n_1}$ and $Y_1, \dots, Y_{n_2}$ be two disjoint sets of Boolean variables, and let $G = (V, E)$ be a bipartite graph with vertices $V = \{ X_1, \dots, X_{n_1}, Y_1, \dots, Y_{n_2} \}$ and edges $E \subseteq [n_1] \times [n_2]$. The \emph{positive, partitioned 2-CNF propositional formula (PP2CNF) defined by} $E$ is given by
    \[ \Phi = \bigwedge_{(i, j) \in E} (X_i \vee Y_j). \]
    Then, the problem \pptcnf{}, namely ``given a PP2CNF formula $\Phi$, compute the number of satisfying assignments to the variables $X_1, \dots, X_{n_1}, Y_1, \dots, Y_{n_2}$'', is known to be \SP-hard~\cite{provan_complexity_1983}.
% \end{theorem}
Our goal then is to construct a datalog program $\Pi$ to model such a formula $\Phi$, such that each adornment of some IDB relation maps bijectively to a satisfying assignment.

Assume without loss of generality that the edeges in $E$ are ordered, such that the $k$-th edge in $E$ is $(i, j)$. We construct a datalog program $\Pi$ as follows. We first create one EDB relation $x_i[t]$ for each variable $X_i$ and one EDB relation $y_j[t]$ for each variable $Y_j$. For each edge $(i, j) \in E$ (which has index $k$ in $E$ under our assumption of ordering), we add the following three rules:
\[ (1)~p_k(t) \gets x_i(t), \quad\quad\quad\quad (2)~p_k(t) \gets y_j(t), \quad\quad\quad\quad (3)~p_k(t) \gets x_i(t), y_j(t). \]
And finally we add the rule
\[ q(t) \gets p_1(t), \dots, p_{|E|}(t). \]

Then, in the equivalent EDB-bounded datalog program $\Pi'$ of $\Pi$,
each $p_k$ has three corresponding adorned IDB relations: (1)~${p_k}_{p_k(t) \gets x_i(t)}$, (2)~${p_k}_{p_k(t) \gets y_j(t)}$, and (3)~${p_k}_{p_k(t) \gets x_i(t), y_j(t)}$.
And an adornment $\rho$ of $q_\rho(t)$ has body atoms $B = \{ x_{i_1}(t), \dots, x_{i_a}(t), y_{j_1}(t), \dots, y_{j_b}(t) \}$ for some $1 \leq i_1, \dots, i_a \leq n_1$ and $1 \leq j_1, \dots, j_b \leq n_2$. Let $S = \{ X_{i_1}, \dots, X_{i_a}, Y_{j_1}, \dots, Y_{j_b} \}$, and $A$ be the variable assignment where each $X_i = \mathsf{True}$ if $X_i \in S$, and $X_i = \mathsf{False}$ otherwise; and each $Y_j = \mathsf{True}$ if $Y_j \in S$, and $Y_j = \mathsf{False}$ otherwise. Then,
$\rho$ is a valid adornment of $q(t)$
$\iff$ for each $p_k(t)$, $x_i(t) \in B$ or $y_j(t) \in B$
$\iff$ for each edge $(i, j) \in E$, $X_i \in S$ or $Y_j \in S$
$\iff$ for each edge $(i, j) \in E$, $X_i = \mathsf{True}$ or $Y_j = \mathsf{True}$
$\iff$ for each edge $(i, j) \in E$, $X_i \vee Y_j = \mathsf{True}$
$\iff$ $A$ is a satisfying assignment of $\Phi$.
This shows that the \SP-hard problem \pptcnf{} can be reduced to \numad, so \numad{} is \SP-hard.

\subsection{Proof of \cref{thm:num-adornments-naive}}

    Each adornment $\rho$ is a rule whose head is a predicate of relation $q$ and whose body contains a set of EDB atoms of the form $e(x_1, \dots, x_{\mathsf{ar}(e)})$.
    
    The head atom has $\ar(q)$ arguments, each of which can be either a variable or a constant.
    There can be at most $\ar(q)$ distinct variables (without loss of generality, one can always rename variables so that we use only the same set of $\textrm{ar}(q)$ variable names), and constants come from the datalog program, so there are at most as many constants as the number of terms in the program, which we denote here by $|\Pi|$, the ``size'' of the program.
    
    For each body atom, we first choose one EDB relation---there are $\#\mathsf{EDBs}$ such choices in total.
    Then, we fill in the $\mathsf{ar}(e)$ attributes.
    Each attribute is either one of the (at most $\mathsf{ar}(q)$) variables that appear in the head atom, or an underscore.
    So we have $\mathsf{ar}(q) + 1$ choices for each attribute, and there are $\mathsf{ar}(e)$ attributes in the EDB atom $e(x_1, \dots, x_{\mathsf{ar}(e)})$.
    Excluding one possibility $e(\textunderscore, \dots, \textunderscore)$ where all attributes are underscores, we have a total of $(\mathsf{ar}(q) + 1)^{\mathsf{ar}(e)} - 1$ possible ways to fill in an EDB atom of an EDB relation $e$.
    Then, considering all possible EDB relations, there are at most $\#\mathsf{EDBs} \cdot ((\mathsf{ar}(q) + 1)^{\mathsf{ear}} - 1)$ possible EDB atoms to choose from, where $\mathsf{ear}$ is the maximum arity across all EDBs.
    And finally, there are at most $2^{\#\mathsf{EDBs} \cdot ((\mathsf{ar}(q) + 1)^{\mathsf{ear}} - 1)}$ sets of such EDB atoms.

\subsection{Proof of \cref{thm:size-bounds}}
    \label{sec:app-proof-size-bounds-stirling}

    Consider how tuples in an IDB relation $q$ can be formed.
    By the edge-cover width $\ew(q)$, we know that the values come from at most $\ew(q)$ EDB tuples.
    Therefore, for each tuple in the IDB $q$, we can partition its attributes into at most $\ew(q)$ sets, such that each partition has values coming from one EDB tuple only.
    The number of possible partitions is given by the Stirling partition number (Stirling number of the second kind), $S(\mathsf{ar}(q), k)$, where $k$ ranges from $1$ to $\ew(q)$. Then, for each partition, we choose one EDB tuple to cover it. Assuming that $N$ is the maximum number of tuples in any single EDB relation, we have $(\#\mathsf{EDBs} \cdot N)$ EDB tuples in total, and therefore
    \[ P(\#\mathsf{EDBs} \cdot N, k) = (\#\mathsf{EDBs} \cdot N) \cdot (\#\mathsf{EDBs} \cdot N - 1) \cdot \dots \cdot (\#\mathsf{EDBs} \cdot N - k + 1) \]
    ways to make such choices.
    \begin{example}
        Consider \cref{ex:width-2-ar-4-to-3} again. Assume that there is a tuple $(1, 2, 3)$ in $q$, and we consider where this tuple comes from. We know that 1 and 2 come from the EDB tuple $e(1, 2)$, and 3 comes from the EDB tuple $e(3, 4)$.
        So the way to cover this is to partition $(x, y, z)$ into two parts: $\{x, y\}$ and $\{ z \}$. Then, $\{ x, y \}$ is covered by the tuple $e(1, 2)$, and $\{ z \}$ is covered the tuple $e(3, 4)$.
        If we consider a different tuple, e.g., $(1, 2, 1)$ in $q$, the values all come from the same EDB tuple $e(1, 2)$.
        So all $\{ x, y, z \}$ stay in a single partition, and it is assigned the EDB atom $e(1, 2)$.
    \end{example}
    Then, we sum up $S(\mathsf{ar}(q), k) \cdot P(\#\mathsf{EDBs} \cdot N, k)$ for $k$ ranging from 1 to $\ew(q)$, because we can use at least 1 and at most $\ew(q)$ EDB tuples to cover each tuple in $q$.
    Once we fix a cover, it remains, for each attribute in $q$, to choose one value from the EDB atom that the attribute is assigned to. There are $\mathsf{ear}$ choices for each attribute in $q$ (because there are $\mathsf{ear}$ values in an EDB tuple to choose from), hence $\mathsf{ear}^{\mathsf{ar}(q)}$. So we derive the bound (1).

    However, the Stirling partition number does not have a closed-form formula, except for the one with summation based on the inclusion-exclusion principle.
    If one needs a simpler bound, it is possible to slightly relax it to (2), as follows:
    \begin{align*}
        |q| & \leq \sum_{k = 1}^{\ew(q)} S(\mathsf{ar}(q), k) \cdot P(\#\mathsf{EDBs} \cdot N, k) \cdot \mathsf{ear}^{\mathsf{ar}(q)} \tag{1} \\
        & \leq (\#\mathsf{EDBs} \cdot N)^{\ew(q)} \cdot \mathsf{ear}^{\mathsf{ar}(q)} \cdot \sum_{k = 1}^{\ew(q)}S(\mathsf{ar}(q), k) & (k \leq \ew(q), P(n, k) \leq n^k) \\
        & \leq ( \#\mathsf{EDBs} \cdot \mathsf{ear} )^{\mathsf{ar}(q)} \cdot \left( \sum_{k = 0}^{\mathsf{ar}(q)}S(\mathsf{ar}(q), k) \right) \cdot N^{\ew(q)} & (\ew(q) \leq \mathsf{ar}(q)) \\
        & = ( \#\mathsf{EDBs} \cdot \mathsf{ear} )^{\mathsf{ar}(q)} \cdot B_{\mathsf{ar}(q)} \cdot N^{\ew(q)} \\
        & \leq ( \#\mathsf{EDBs} \cdot \mathsf{ear} \cdot \mathsf{ar}(q) )^{\mathsf{ar}(q)} \cdot N^{\ew(q)}, & (B_n \leq n^n) \tag{2}
    \end{align*}
    where $B_n$ is the $n$-th Bell number, which is similar to the Stirling partition number but counts the total number of possible partitions of all sizes.

\section{Omitted Details in \cref{sec:minimize-edb-bounded}}
\subsection{Proof of \cref{thm:minimal-bound}}

    Each adornment has at most $\ew(q)$ atoms. The head variables can be partitioned into $k$ sets in $S(\mathsf{ar}(q), k)$ different ways. $k$ EDB predicates (repetitions allowed) can be assigned to the adornment in no more than $\NEDBs^k$ ways. Finally, variables can be assigned positions within their respective EDB atoms in $\textsf{ear}^{\ar(q)}$ ways.
    We can then follow the proof of \cref{thm:size-bounds} in \cref{sec:app-proof-size-bounds-stirling}.

\section{Omitted Details in \cref{sec:complexity-bounds}}
\subsection{Proof of \cref{thm:complexity-general}}

    We assume a semi-na\"ive-style evaluation on the program $\Pi$, with techniques of incremental view maintenance.  We first introduce two important theorems for incremental view maintenance when the update sequence is insertion-only, i.e., we will only insert new records into the database without deleting any records from it:

    \begin{corollary}[Derived from {\cite[Theorem~4.1]{abo_khamis_insert-only_2024}}]
        There exists an algorithm $\mathcal{A}_1$, such that given any conjunctive query $Q$, and an initially empty database $D$, $\mathcal{A}_1$ can maintain any insertion-only updates $t$ to the database $D$ in amortized $O(N^{\fchw(Q)-1}\cdot |t|)$ time while supporting constant delay enumeration of the delta query $\Delta Q \gets Q(D+t)-Q(D)$, where $N$ is the size of total updates.
    \end{corollary}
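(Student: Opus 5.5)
The plan is to obtain the statement by specializing \cite[Theorem~4.1]{abo_khamis_insert-only_2024}, translated into our notion of free-connex generalized hypertree width. That theorem gives an insert-only maintenance algorithm whose total running time over a sequence of insertions of total size $N$ is $O(N^{w})$, where $w$ is the fractional width of a free-connex decomposition of $Q$, and which supports constant-delay enumeration of the newly produced result tuples. Dividing the total cost by the number of inserted tuples then gives the amortized bound $O(N^{w-1}\cdot|t|)$ per update $t$. What remains is to check two things: that the width parameter of the cited theorem is at most our $\fchw(Q)$, and that the enumeration it supports is exactly of $\Delta Q = Q(D+t)-Q(D)$.

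\textbf{Step 1 (algorithm).} I will fix an FCGHD $\HT=(T,\chi,\lambda)$ of the hypergraph of $Q$ attaining $\fchw(Q)$, together with its connex subtree $S$ given by \Givref{}. For every node $p$, $\mathcal{A}_1$ materializes the join of the relations in $\lambda(p)$ over $V_{\sf full}(p)$. It also maintains, bottom-up, semijoin-reduced projections towards $S$, with the tree rooted inside $S$.

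\textbf{Step 2 (cost).} Because updates are insertion-only, every bag tuple is created at most once. The final size of each bag relation is at most $N^{\fchw(Q)}$ by the AGM bound (\cref{thm:single-adorned-size}), applied to the fractional cover of $V_{\sf full}(p)$ by $\lambda(p)$.

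Processing an inserted tuple of relation $e$ means computing the new bag tuples that contain it. I will use a worst-case optimal join with that one tuple fixed; its cost is bounded by the residual AGM bound, which is at most $N^{\fchw(Q)-1}$ when the cover weight on $e$ can be taken to be $1$. The general fractional case, where the weight on $e$ is below $1$, is handled by the amortization argument of the cited theorem. Its charging argument bounds the total work by $\sum_p |{\rm bag}(p)| \le |V(T)|\cdot N^{\fchw(Q)}$, and this sum is then amortized over the $N$ insertions. Each new bag tuple propagates to its parent in constant time via hashing on the shared variables.

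\textbf{Step 3 (enumeration).} A new output tuple of $Q$ must use at least one newly created bag tuple. I will enumerate $\Delta Q$ by iterating over the new tuples at each node of $S$ and extending them through the semijoin-reduced neighbours, as in a Yannakakis-style free-connex enumeration. Constant delay follows from the connex condition, since no non-output variable needs to be projected away during enumeration.

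Deduplication across different originating nodes is needed so that each new output tuple is reported once and old ones are not reported. I would handle it by a fixed ordering of the nodes of $S$: a tuple is reported only from the first node, in that order, that holds a new contributing bag tuple. This is the standard delta-rule trick (using the old relation for earlier nodes and the new relation for later ones) and keeps the delay constant.

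\textbf{Main obstacle.} I expect the hardest part to be aligning widths. The cited result is stated for its own free-connex width notion, whereas our width covers all of $V_{\sf full}(p)$ by $\lambda(p)$. I would show that the cited width is at most ours by observing that any FCGHD in our sense is a valid free-connex decomposition there, with bag covers of no larger weight.
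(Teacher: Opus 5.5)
Your proposal takes the same route as the paper. The paper gives no separate argument: it treats $\mathcal{A}_1$ as the black-box algorithm of \cite[Theorem~4.1]{abo_khamis_insert-only_2024} and obtains the amortized bound by spreading its $O(N^{w})$ total insert-only cost over the $N$ inserted tuples. Your observation supplies the width comparison the paper leaves implicit: every FCGHD in the paper's sense is a free-connex decomposition whose bag covers weigh no more, since $\chi(p) \subseteq V_{\sf full}(p)$ and $\lambda(p) \subseteq E(H)$.

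Your Steps 1--3 are unnecessary and are loose wherever they go beyond the citation. A worst-case optimal join's work is not bounded by its output size, so the work must be charged to the per-bag AGM bounds $N^{\fchw(Q)}$, not to $\sum_p |\mathrm{bag}(p)|$. Also, the plain delta rule with mixed old/new relations can hit dead ends and needs extra bookkeeping for constant delay, so you should rely on the cited theorem for these points as well.
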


    \begin{corollary}[Derived from {\cite[Theorem~6.10]{wang_change_2023}}]
        There exists an algorithm $\mathcal{A}_2$, such that given any conjunctive query $Q$ with $\fchw(Q) = 1$, and an initially empty database $D$, $\mathcal{A}_2$ can maintain any insertion-only updates $t$ to the database $D$ in amortized $O(|t|)$ time while supporting constant delay enumeration of the delta query $\Delta Q \gets Q(D+t) - Q(D)$, where $N$ is the size of total updates.
    \end{corollary}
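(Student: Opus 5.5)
The plan is to obtain the statement as a specialization of the change-propagation framework of \cite{wang_change_2023} to the case where the query admits a free-connex generalized hypertree decomposition of width $1$. First I would make explicit what $\fchw(Q)=1$ gives: there is an FCGHD $\HT=(T,\chi,\lambda)$ in which every bag $\chi(p)$ is contained in a single hyperedge of $Q$. Hence $Q$ is free-connex $\alpha$-acyclic. After attaching each atom to a node whose bag contains it and merging bags, $T$ is a join tree of the atoms (plus possibly projections of atoms). Moreover, it contains a connected subtree $S$, which I root at its top, whose bags cover exactly $V_{\sf out}$. I would root $T$ inside $S$ and orient all edges toward the root, as in Yannakakis' algorithm.

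Second, I would describe the maintained state, which is the one used in \cite[Theorem~6.10]{wang_change_2023}. For each node $p$ I keep its base relation, together with, for every child $c$, an index on $\chi(p)\cap\chi(c)$ recording which keys are \emph{alive}, i.e.\ have at least one fully semijoin-reduced supporting tuple in the subtree of $c$. For nodes outside $S$ only existence matters (the variables are projected away), so a single counter/flag per key suffices. For nodes in $S$ the indices are kept as hash maps from keys to lists of alive tuples, so that results can be enumerated by a top-down traversal of $S$ with constant delay.

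Third, I would carry out the amortization, which I expect to be the main obstacle. Since updates are insertion-only, a key at any node switches from dead to alive at most once and never back. I would charge each propagation step (an inserted tuple becoming alive at $p$ and being pushed to the parent's index) to that status change. The total work over the entire sequence is then linear in the total number of inserted tuples times a factor depending only on $|Q|$, which gives amortized $O(|t|)$ per update batch $t$.

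The delicate point is the delta enumeration. I have to show that the set of \emph{new} output tuples $Q(D+t)-Q(D)$ can be enumerated with constant delay. My approach is to note that a new answer must use at least one tuple in $S$ that became alive during this batch. I would record these newly alive tuples per node of $S$ and enumerate, for each of them, the extensions through the alive indices. To avoid duplicates I would impose the standard order: the first node of $S$ (in a fixed order) at which a newly alive tuple is used, with earlier nodes restricted to old tuples. This is the standard delta-rule decomposition, and it requires keeping "old" and "new" alive lists separate. Each partial assignment then extends, since all indices in $S$ hold only fully reduced tuples, so no dead ends occur and the delay is constant. If this bookkeeping matches the statement of \cite[Theorem~6.10]{wang_change_2023}, I would simply cite it for this part and only verify that its hypotheses (free-connex acyclicity, insertion-only updates) are exactly what $\fchw(Q)=1$ and our setting provide.
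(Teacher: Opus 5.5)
The paper does not prove this statement. It imports it from Theorem~6.10 of Wang et al.\ and treats $\mathcal{A}_2$ as a black box. Your closing fallback is therefore exactly the paper's route: cite that theorem after checking that $\fchw(Q)=1$ gives free-connex acyclicity. Your first step checks this correctly, since a fractional cover of value $1$ forces each bag into a single hyperedge. Your maintenance argument is also sound. Every key and every tuple flips from dead to alive at most once. With a per-tuple counter of alive child keys, the scan of parent tuples triggered by a key flip is paid for by those parent tuples, at most once per child. The self-contained reconstruction fails, however, at the step you flagged as delicate: the claim that each partial assignment of a delta rule extends because the indices in $S$ hold only reduced tuples.

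What you maintain is only the bottom-up half of the semijoin reduction. An alive tuple is guaranteed matches at its children, never at its parent. A delta rule cannot be run top-down from the root, because that scans old tuples that need not have any new extension. So, as you propose, it must be driven from the newly alive tuples at $p_j$ and move upward, and there dead ends are unbounded. Take $Q(x,y,z,w)\gets A(x,y),B(y,z),C(z,w)$ with $S$ the path $A$--$B$--$C$ rooted at $A$. Let $C$ already contain $(1,5)$, let $B$ contain $M$ tuples $(y_i,1)$, and let no $A$-tuple have second component $y_i$. Inserting $C(1,7)$ changes no alive status except that of the new tuple and yields no new answer. Yet your procedure inspects all $M$ tuples of $B$. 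This cost cannot be charged to $|t|=1$ or to any status flip, and repeating with $C(1,8), C(1,9),\dots$ shows it does not amortize.

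Independently, the old-only and new-only restrictions of your delta rules are not reflected in any reduction you maintain. For $A(x,y),B(y,z)$ with order $(A,B)$, an old-alive $A$-tuple typically has no newly alive $B$-partner. With order $(B,A)$, a newly alive $A$-tuple whose key became alive only in this batch has no old-alive $B$-partner.

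A repair needs two further ingredients.
\begin{enumerate}
\item Maintain the top-down half of the reduction as well. Whether a tuple lies in some answer is also monotone under insertions, so it fits the same amortized budget. Then every newly alive, fully reduced tuple extends in all directions without dead ends and lies in a new answer.
\item Use a duplicate-avoidance scheme that does not rely on restricted delta rules. For instance, produce each new answer once per newly alive tuple it contains (at most $|S|$ times) and remove repetitions with a standard buffering argument.
\end{enumerate}
As written, your proposal establishes the enumeration part only through the citation.
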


    In \cite{abo_khamis_insert-only_2024} and \cite{wang_change_2023}, the authors provide a detailed explanation for designing one possible $\mathcal{A}_1$ and $\mathcal{A}_2$.  In this paper, we consider $\mathcal{A}_1$ and $\mathcal{A}_2$ as two black-box algorithms, without inquiring into any details of the algorithms. We denote $\mathcal{A}_1(t)$ and $\mathcal{A}_2(t)$ as the algorithms that maintain the data structures with insertion $t$ given as input, and enumerate the corresponding delta results.

    Given a datalog program $\Pi$, for every IDB relation $p$, we consider it as a view $V(p)$ to be maintained by $\mathcal{A}_1(p)$ and $\mathcal{A}_2(p)$.  In addition, we consider all EDB relations to be empty for the initial database.  For the first iteration of computation, we insert all EDB relations and evaluate rules $R$ that contain only EDB atoms in the body.
    Each such rule can finish in one iteration with complexity
    $O(N^{\fchw(R)} \cdot (f(\Pi) \cdot N^{\ew(R)})) = O(f(\Pi) \cdot N^{\fchw(\Pi) + \ew(\Pi)})$,
    where $N$ is the maximum size of any single EDB relation (input), and $f(\Pi) \cdot N^{\ew(\Pi)}$ is the size bound of the IDB relation in the head atom (output).
    In the following iterations,
    by maintaining updates, some new tuples $\Delta p$ may be generated for $V(p)$.  After updating all IDB views $V(p)$, if there are some new tuples $\Delta p$ generated for any IDB view $V(p)$, we enumerate $\Delta p$ in constant delay and move on to the next iteration, where we use $\Delta p$ to be the insertion-only update for view $V(p)$, and insert $\Delta p$ into any $\mathcal{A}_1(p')$ and $\mathcal{A}_2(p')$, if $p$ is the body atom of $p'$.  By recursively enumerating and inserting the delta queries, the maintenance procedure stops when no more delta queries can be generated.  

    Now consider the cost of the maintenance procedure for each individual $V(p)$.   Given a $V(p)$, we first construct a free-connex hypertree decomposition for $p$.  For each bag of relations $B$, we maintain the bag using $\mathcal{A}_1$.  By treating each bag as a single relation $B$, we then obtain a query with $\fchw = 1$, which can be maintained using $\mathcal{A}_2$.   For an EDB relation, its size is bounded by $O(N)$, and for an IDB relation, its size is bounded by $O(f(\Pi) \cdot N^{\ew(\Pi)})$.  The output of each bag is thus bounded by $O((f(\Pi) \cdot N^{\ew(\Pi)})^{\fchw(B)})=O((f(\Pi))^{\fchw} \cdot N^{\ew(\Pi) \cdot {\fchw}(\Pi)})$.
    Then, when maintaining the resulting width-1 query, the input size (number of tuples per bag) is at most $(f(\Pi) \cdot N^{\ew(\Pi)})^{\fchw(\Pi)}$, and the output size of the rule is at most $f(\Pi) \cdot N^{\ew(\Pi)}$.  

    Summing over all rules across all iterations, the overall complexity is given by
    \begin{align*}
        & O\bigg(
            \sum_{\textrm{iteration $i$}}\ 
            \sum_{\textrm{rule $R$ in $\Pi$}} \ 
            \sum_{\substack{\text{body atom $p(\bar{t})$ in $R$} \\ \text{$p$ is updated in iteration $i-1$} \\ \text{with delta $\Delta p_{i-1}$}}}
            (O(\mathcal{A}_1(\Delta p_{i-1})) + O(\mathcal{A}_{2}(\Delta p_{i-1})) \bigg) \\
        = & O\bigg( \sum_{\textrm{rule $R$}} \bigg(
            \sum_{\substack{\text{body atom $p(\bar t)$} \\ \text{iteration $i$}}} O(\mathcal{A}_1(\Delta p_{i-1}))
            + \sum_{\substack{\text{body atom $p(\bar t)$} \\ \text{iteration $i$}}} O(\mathcal{A}_{2}(\Delta p_{i-1})) \bigg) \bigg) \\
        = & O\bigg( \sum_{\textrm{rule $R$}} \bigg( (f(\Pi) \cdot N^{\ew(\Pi)})^{\fchw(\Pi)} + (f(\Pi) \cdot N^{\ew(\Pi)})^{\fchw(\Pi)} + f(\Pi) \cdot N^{\ew(\Pi)} \bigg) \bigg) \\
        = & O\bigg( \sum_{\textrm{rule $R$}} (f(\Pi) \cdot N^{\ew(\Pi)})^{\fchw(\Pi)} \bigg) \\
        = & O((f(\Pi))^{\fchw(\Pi)} \cdot |\Pi| \cdot N^{\ew(\Pi) \cdot \fchw(\Pi)}). \qedhere
    \end{align*}

\subsection{Proof of \cref{thm:linear}}

    We again assume semi-na\"ive-style evaluation, with techniques of incremental view maintenance using algorithms $\mathcal{A}_1$ and $\mathcal{A}_2$ as in the proof of \cref{thm:complexity-general}.

    Given a datalog program $\Pi$, for every IDB relation $p$, we consider it as a view $V(p)$ to be maintained by $\mathcal{A}_1(p)$ and $\mathcal{A}_2(p)$.  In addition, we consider all EDB relations to be empty in the initial database.  For the first iteration of computation, we insert all EDB relations and evaluate rules $R$ that contain only EDB atoms in the body.
    Each such rule can finish in one iteration with complexity
    $O(N^{\fchw(R)} \cdot (f(\Pi) \cdot N^{\ew(R)})) = O(f(\Pi) \cdot N^{\fchw(\Pi) + \ew(\Pi)})$,
    where $N$ is the maximum size of any single EDB relation (input), and $f(\Pi) \cdot N^{\ew(\Pi)}$ is the size bound of the IDB relation in the head atom (output).
    By maintaining these updates, some new tuples $\Delta p$ may be generated for $V(p)$.  After updating all IDB views $V(p)$, if there are some new tuples $\Delta p$ generated for any IDB view $V(p)$, we enumerate $\Delta p$ in constant delay and move on to the next iteration, where we use $\Delta p$ to be the insertion-only update for view $V(p)$, and insert $\Delta p$ into any $\mathcal{A}_1(p')$ and $\mathcal{A}_2(p')$, if $p$ is the body atom of $p'$.  By recursively enumerating and inserting the delta queries, the maintenance procedure stops when no more delta queries can be generated.  

    Now consider the cost of the maintenance procedure for each individual $V(p)$.   Given a $V(p)$, we first construct a free-connex hypertree decomposition for $p$.  We maintain each bag using $\mathcal{A}_1$.  By treating each bag as a single relation, we then obtain a query with $\fchw = 1$, which can be maintained using $\mathcal{A}_2$.   For an EDB relation, its size is bounded by $O(N)$, and for an IDB relation, its size is bounded by $O(f(\Pi) \cdot N^{\ew(\Pi)})$.  
    Since $\Pi$ is linear, each bag has at most one IDB relation, so the output of each bag is thus bounded by $f(\Pi) \cdot N^{\ew(\Pi)} \cdot N^{\fchw(\Pi) - 1}$.  
    When maintaining the resulting width-$1$ query, the total input size would be $N$ (for an EDB relation), or $f(\Pi) \cdot N^{\ew(\Pi)} \cdot N^{\fchw(\Pi) - 1}$ (for an IDB relation), and the output of the rule is bounded by $O(f(\Pi) \cdot N^{\ew(\Pi)})$.  

    Summing over all rules across all iterations, the overall complexity is given by
    \begin{align*}
        & O\bigg(
            \sum_{\textrm{iteration $i$}}\ 
            \sum_{\textrm{rule $R$ in $\Pi$}} \ 
            \sum_{\substack{\text{body atom $p(\bar{t})$ in $R$} \\ \text{$p$ is updated in iteration $i-1$} \\ \text{with delta $\Delta p_{i-1}$}}}
            (O(\mathcal{A}_1(\Delta p_{i-1})) + O(\mathcal{A}_{2}(\Delta p_{i-1})) \bigg) \\
        = & O\bigg( \sum_{\textrm{rule $R$}} \bigg(
            \sum_{\substack{\text{body atom $p(\bar t)$} \\ \text{iteration $i$}}} O(\mathcal{A}_1(\Delta p_{i-1}))
            + \sum_{\substack{\text{body atom $p(\bar t)$} \\ \text{iteration $i$}}} O(\mathcal{A}_{2}(\Delta p_{i-1})) \bigg) \bigg) \\
        = & O\bigg( \sum_{\textrm{rule $R$}} \bigg( (f(\Pi) \cdot N^{\ew(\Pi) + \fchw(\Pi) - 1} + f(\Pi) \cdot N^{\ew(\Pi) + \fchw(\Pi) - 1} \bigg) \bigg) \\
        = & O\bigg( \sum_{\textrm{rule $R$}} f(\Pi) \cdot N^{\ew(\Pi) + \fchw(\Pi) - 1} \bigg) \\
        = & O((f(\Pi))^{\fchw(\Pi)} \cdot |\Pi| \cdot N^{\ew(\Pi) \cdot \fchw(\Pi)}). \qedhere
    \end{align*}

\subsection{Complexity Bounds by Grounding}\label{app:HornSat}
We now examine an alternative method of evaluating datalog programs by using a propositional Horn-SAT program. This method allows us to derive the complexity bound for the adornment groundable class of datalog Program.
\begin{definition}[equivalent propositional Horn-SAT program]
    A \emph{propositional Horn program} $\Sigma$ is a set of rules of the form: $$p_0 \gets p_1 \land p_2 \land \dots\land p_n$$ where $p_0, p_1,\dots p_n$ are Boolean variables. The program $\Sigma$ is \emph{equivalent} to a datalog program $\Pi$ if and only if there is a one-to-one mapping between all the derived ground atoms of $\Pi$ and the minimal model of $\Sigma$.
\end{definition}
\begin{definition}
    Grounding is the process of replacing variables in an atom or rule with specific ground terms from the program's domain to create an instance that contains no variables.
\end{definition}
\begin{example}
The following is a valid grounding for the transitive closure program on the database with tuples $e(0,1)$ and  $e(1,2)$:
    \begin{align*}
        &\TC(0, 1) \gets e(0,1).\\
        &\TC(1, 2) \gets e(1,2).\\
        &\TC(0, 2) \gets \TC(0, 1), e(1,2).
    \end{align*}
\end{example}

We now show an example of how grounding the adornment of the head, gives a unique grounding for the entire rule:
\begin{example}
    Consider the EDB-Bounded datalog rule for transitive closure: 
    $$\TC_{\TC(x,z)\gets e(x,y), e(y,z)}(x,z) \gets \TC_\rho(x,y), e(y,z).$$ 
    Upon grounding the head's adornment with $\TC(0,2)\gets e(0,1), e(1,2)$, the grounding for the entire body can be inferred. The grounding for the head is the same as the grounding for the head of its adornment.    
    The EDB atom $e(y,z)$ shares a unique variable, $z$, with the head and must appear in the head adornment. Consequently, it is assigned the grounding $e(1,2)$. The IDB atom $\TC_\rho(x,y)$ is constrained by the head and the grounded EDB atom, leading to the unique grounding $\TC_\rho(0,1)$. Thus we get $\TC_{\TC(0,2)\gets e(0,1), e(1,2)}(0,2) \gets \TC_\rho(0,1), e(1,2)$.
\end{example}

We now give a proof for \cref{thm:adornment_groundable_cmp}:
\begin{proof}
    The datalog program $\Pi$ is analyzed via its EDB-bounded version, $\Pi'$, by transforming it into an equivalent propositional Horn clauses. The clauses can be generated by choosing grounding for the head adornment. Since all adornments can be covered by one of its subsets of size at most $\ew(\Pi)$, there are only $N^{\ew(\Pi)}$ possible groundings per rule. There are $|\Pi'|$ rules in the program, leading to at most $|\Pi'|\cdot N^{\ew(\Pi)}$ Horn clauses. Let $k$ be the maximum number of atoms within a rule. The set of Horn clauses contains at most $(k+1)\cdot f(\Pi)\cdot N^{\ew(\Pi)}$ symbols. Solving a set of Horn clauses can be done in time linear to the number of symbols~\cite{dowling_linear-time_1984}. Thus the overall running time is bounded by $O(|\Pi'|\cdot N^{\ew(\Pi)})$.
    Since $|\Pi'| \leq f(\Pi) \cdot |\Pi|$ by \cref{thm:minimal-bound}, the running time is bounded by $O(f(\Pi) \cdot |\Pi| \cdot N^{\ew(\Pi)})$.
\end{proof}

\end{document}